\tikzset{
  basics/.style={minimum width=20mm, minimum height=7.5mm, text centered, draw=black},
  startstop/.style={rectangle, rounded corners, basics, fill=red!30},
  io/.style={trapezium, trapezium left angle=70, trapezium right angle=110, basics, fill=blue!30},
  process/.style={rectangle, basics, fill=blue!30},
  decision/.style={ellipse,basics, fill=green!30},
  arrow/.style={thick,->,>=stealth},
}
\newcommand{\Cat}{\ensuremath{\text{Cat}}}
\newcommand{\Bernoulli}{\ensuremath{\text{Ber}}}
\newcommand{\Binomial}{\ensuremath{\text{Bin}}}
\newcommand{\Poi}{\ensuremath{\text{Poi}}}
\newcommand{\PB}{\ensuremath{\text{PoiBin}}}
\newcommand{\TP}{\ensuremath{\text{TransPoi}}}
\newcommand{\CB}{\ensuremath{\text{CondBer}}}
\newcommand{\SB}{\ensuremath{\text{SumBin}}}
\newcommand{\KL}{\ensuremath{\text{KL}}}
\newcommand{\bigo}{\mathcal{O}}
\newcommand{\E}{\mathbb{E}}
\theoremstyle{definition}
\newtheorem{proposition}{Proposition}[section]
\newtheorem{lemma}{Lemma}[section]
\theoremstyle{remark}
\title{Sequential Monte Carlo algorithms for agent-based models of disease transmission}
\author[1]{Nianqiao Ju \thanks{Corresponding author: nju@g.harvard.edu}}
\author[2]{Jeremy Heng}
\author[1]{Pierre E. Jacob}
\date{}
\affil[1]{Department of Statistics, Harvard University, USA}
\affil[2]{ESSEC Business School, Singapore}
\begin{document}
\maketitle

\begin{abstract}
  Agent-based models of disease transmission involve stochastic rules that
  specify how a number of individuals would infect one another, recover or be
  removed from the population. Common yet stringent assumptions stipulate
  interchangeability of agents and that all pairwise contact are equally
  likely.  Under these assumptions, the population can be summarized by
  counting the number of susceptible and infected individuals, which greatly
  facilitates statistical inference.  We consider the task of inference without
  such simplifying assumptions, in which case, the population cannot be
  summarized by low-dimensional counts.  We design improved particle filters,
  where each particle corresponds to a specific configuration of the population
  of agents, that take either the next or all future observations into account
  when proposing population configurations.  Using simulated data sets, we
  illustrate that orders of magnitude improvements are possible over bootstrap
  particle filters.  We also provide theoretical support for the approximations
  employed to make the algorithms practical. 
\end{abstract}


\section{Introduction \label{sec:intro}}
\subsection{Statistical inference for agent-based models}
Agent-based models also called individual-based models are used in many fields, 
such as social sciences \citep{epstein2006generative}, demographics \citep{hooten2020statistical},
ecology \citep{deangelis2018individual} and macroeconomics \citep{turrell2016agent}. 
These models describe the time evolution of a population according to a Markov process. 
The population comprises of $N\in\mathbb{N}$ agents who
interact with one another through a probabilistic mechanism. 
The popularity of these models could be attributed to the ease of model building and interpretation, 
while allowing the representation of complex phenomena. 
Specialized software is available to facilitate their simulation and the visualization of their output
\citep[e.g.][]{tisue2004netlogo}.  

The primary use of these models seems to involve ``forward simulations'' using
various parameters corresponding to hypothetical scenarios.  In comparison,
relatively fewer works have tackled the question of statistical inference for
such models, i.e. the estimation of model parameters given available
observations, also known as model calibration.  Broad articles on the question
of fitting agent-based models include
\citet{grazzini2017bayesian,hooten2020statistical,hazelbag2020calibration} and
this is a very active research area.  The question of estimation is well-posed:
by viewing agent-based models as a subclass of hidden Markov models, we can
define the associated likelihood function, from which maximum likelihood and
Bayesian procedures can be envisioned. 

Gibbs sampling or ``data augmentation'' Markov chain Monte Carlo (MCMC) methods, that alternate between 
parameter and latent agent states updates are generically applicable, 
see \citet{fintzi2017efficient,bu2020likelihood} in the context of disease transmission. 
However, the mixing properties of these chains can be problematic,
as we illustrate in Appendix \ref{sub:mcmc}. 
Likelihood-based inference that avoid the use of 
data augmentation requires one to marginalize out the latent population process.  
This is computationally challenging as the number of possible configurations of the 
population of agents grows exponentially in $N$. 
Our contribution is to design improved particle filters to estimate the likelihood function of 
agent-based models, which can then be used for parameter inference procedures 
such as simulated maximum likelihood and particle MCMC \citep{andrieu2010particle}. 
Particle MCMC methods, based on more standard particle filters, have been used for
agent-based models before, see e.g. \citet{kattwinkel2017bayesian}. 
For indirect inference and approximate Bayesian computation applied
to agent-based models we refer to 
\citep{platt2020comparison,van2016predicting,chen2017bayesian,siren2018assessing}.

\subsection{Statistical inference for agent-based models of disease transmission}
\label{subsec:sis_intro}
We limit the scope of this article to some agent-based models of disease transmission. 
We next introduce a specific model which will be used to draw clear 
connections between agent-based models and more tractable but restrictive formulations 
of disease outbreak models. 

The state of agent $n$ at time $t$, denoted by $X_t^n$, takes the value $0$ or $1$,  
corresponding to a ``susceptible'' or ``infected'' status in the context of disease transmission. 
We consider a closed population of size $N$, $T\in\mathbb{N}$ time steps and a stepsize of $h>0$. 
Initial agent states follow independent Bernoulli distributions, i.e. 
\begin{equation}\label{eqn:static_X}
	X_0^n \sim \Bernoulli(\alpha_0),
\end{equation}
for $n\in[1:N]=\{1,\ldots,N\}$, where we assume for the time being that each agent has the same probability 
$\alpha_0\in(0,1)$ of being infected. 
The state of the population $X_t=(X_t^n)_{n\in[1:N]}\in\{0,1\}^N$ evolves according to 
a Markov process that depends on the interactions between the agents. 
The Markov transition specifies 
that, given the previous states $x_{t-h}\in\{0,1\}^N$, the next agent states 
$(X_t^n)_{n\in[1:N]}$ are conditionally independent Bernoulli variables 
with probabilities 
$(\alpha^n(x_{t-h}))_{n\in[1 : N]}$ given by 
\begin{align}
	\label{eqn:sis_alpha_homo}
	\alpha^n(x_{t-h}) = 
	\begin{cases}
      h \lambda N^{-1}\sum_{m=1}^N x_{t-h}^m,\quad & \mbox{if }x_{t-h}^n=0,\\
		1- h \gamma,\quad & \mbox{if }x_{t-h}^n=1.
	\end{cases}
\end{align}
Here $\lambda\in(0,1)$ and $\gamma\in(0,1)$ represent 
infection and recovery rate parameters respectively. 
An uninfected agent becomes infected with probability proportional
to $\lambda$ and to the proportion of infected individuals. 
An infected agent recovers with probability proportional to $\gamma$. 
The ratio $R_0 = \lambda / \gamma$ is known as the basic reproductive number, 
which characterizes the transmission potential of a disease~\citep{lipsitch2003transmission,britton2010stochastic}. 
The latent population process $(X_t)$ can be related to aggregated observations $(Y_t)$ 
of the entire population. For example, the number of infections reported at each time 
could be modelled as a Binomial distribution, i.e. 
\begin{equation}\label{eqn:static_Y}
	Y_t \mid X_t = x_t \sim \Binomial(I(x_t), \rho),
\end{equation}
where $I(x_t)=\sum_{n=1}^N x_t^n\in[0:N]$ counts the number of infected individuals  
and $\rho\in(0,1)$ represents a reporting rate. 
Equations \eqref{eqn:static_X}, \eqref{eqn:sis_alpha_homo} and \eqref{eqn:static_Y} define 
a susceptible-infected-susceptible (SIS) hidden Markov model. 

As the latent state space has cardinality $2^N$, evaluating the likelihood function 
with the forward algorithm would be impractical. 
Statistical inference is still feasible as the latent process admits low-dimensional summaries. 
Since \eqref{eqn:sis_alpha_homo} assumes that all agents are interchangeable 
and all pairwise contact are equally likely, the dynamics depends only on the 
number of infected individuals $I_t=I(X_t)$ and susceptible individuals $S_t=N-I_t$. 
More precisely, the Markov transition for each $t>0$ can be summarized as 
\begin{align}\label{eq:sis:binomialmodel}
  S_t = S_{t-h} + N^S_t - N^I_t, \qquad
  I_t  = I_{t-h} - N^S_t + N^I_t,
\end{align}
where $N^S_t \sim \Binomial(I_{t-h},h\gamma)$ and $N^I_t \sim \Binomial(S_{t-h},h \lambda N^{-1} I_{t-h})$ 
denote the number of new recoveries and infections, respectively.  
The initialization in \eqref{eqn:static_X} corresponds to having 
$I_0 \sim \Binomial(N,\alpha_0)$ and $S_0 = N-I_0$. 
The process $(S_t,I_t)$ is a Markov chain on the lower-dimensional state space $\{n,m\in[0:N]:n+m=N\}$, 
and not in $\{0,1\}^N$. 
When combined with \eqref{eqn:static_Y}, the resulting hidden Markov model can be estimated 
using particle filtering strategies; see e.g. \citet{ionides2006inference,endo2019introduction,whiteley2020inference}.

For large population sizes $N$, one can exploit the asymptotic properties of 
the dynamics \eqref{eq:sis:binomialmodel} to further simplify inference procedures. 
As $N\rightarrow\infty$, the finite population proportions $\bar{S}_t^N=S_t/N$ and 
$\bar{I}_t^N=I_t/N$ admit deterministic limits $\bar{S}_t$ and $\bar{I}_t$, defined by 
the recursion 
\begin{align}\label{eq:sis:detmodel}
	\bar{S}_t = \bar{S}_{t-h} + h\gamma\bar{I}_t - h\lambda\bar{S}_t\bar{I}_t,\qquad
	\bar{I}_t = \bar{I}_{t-h} - h\gamma\bar{I}_t + h\lambda\bar{S}_t\bar{I}_t,
\end{align}
with initial condition $(\bar{S}_t,\bar{I}_t)=(1-\alpha_0,\alpha_0)$. 
If fine time resolutions are desired, one can also consider the limit $h\rightarrow 0$, 
in which case $(\bar{S}_t,\bar{I}_t)$ converges to a deterministic continuous-time process 
$(\bar{S}(t),\bar{I}(t))$, defined by the following system of ordinary differential equations 
\begin{align}\label{eq:sis:odemodel}
	\frac{d\bar{S}(t)}{dt} = \gamma\bar{I}(t)-\lambda\bar{S}(t)\bar{I}(t),\qquad
	\frac{d\bar{I}(t)}{dt} = -\gamma\bar{I}(t)+\lambda\bar{S}(t)\bar{I}(t),
\end{align}
with $(\bar{S}(0),\bar{I}(0))=(1-\alpha_0,\alpha_0)$. 
We refer to \citet{allen2000comparison} and \citet[Chapter 3]{allen2008mathematical} 
for formal links between stochastic and deterministic SIS models,
in the form of limit theorems as $N\to \infty$. 
The dynamics in \eqref{eq:sis:detmodel} or \eqref{eq:sis:odemodel} 
combined with an observation model such as \eqref{eqn:static_Y} 
leads again to a fairly tractable statistical model, which can be estimated by 
maximum likelihood or in the Bayesian framework \citep[e.g.][]{osthus2019dynamic}.

The tractability in \eqref{eq:sis:binomialmodel} and its limits \eqref{eq:sis:detmodel}-\eqref{eq:sis:odemodel} 
hinges critically on the assumption that all agents are interchangeable and equally likely to be in contact.
These assumptions are strong as it would be desirable to let the infection and recovery rates 
depend on individual characteristics or features, and the structure of interactions be specified by 
an undirected network. Adopting any one of these generalizations would result in an 
agent-based model that cannot be summarized by low-dimensional counts. 
This article is concerned with the task of inference when these assumptions are relaxed. 
Our approach is to return to the hidden Markov model formulation with latent space $\{0,1\}^N$, 
and seek novel sampling algorithms that have polynomial rather than exponential costs in $N$. 
The guiding principle of our contribution is that we can improve the
performance of Monte Carlo algorithms, sometimes dramatically, by tailoring
them to the models at hand, and by using available observations.

The rest of this article is organized as follows. 
In Section \ref{sec:static}, we consider a ``static'' case with a single time step 
to illustrate some of the challenges and proposed solutions in a simple setting.
We describe how the cost of likelihood evaluation and sampling the posterior of agent states 
given observations can be reduced from exponential to polynomial in $N$.
In Section \ref{sec:sis}, we consider the dynamic setting with multiple time steps 
and design auxiliary particle filters and controlled SMC algorithms 
tailored to heterogeneous SIS agent-based models. 
We also provide theoretical support for the approximations employed to make 
the algorithms practical. 
We extend our methodology to susceptible-infected-recovered (SIR) models in Section \ref{sec:sir} 
and highlight the associated difficulties. 
Section \ref{sec:discussion} summarizes our findings and open questions.  
This manuscript represents preliminary work and further numerical experiments will be
added in a later version.
The code is available at \url{https://github.com/nianqiaoju/agents}. The proofs and more details are in appendices. 

\section{Static agent-based model}
\label{sec:static}
For simplicity we begin with the static model $X^n \sim \Bernoulli(\alpha^n)$ 
independently for $n\in[1:N]$, and allow each agent to be unique by 
modeling $\alpha^n=(1+\exp(-\beta^\top w^n))^{-1}$, 
where $\beta\in\mathbb{R}^d$ are parameters and $w^n\in\mathbb{R}^d$ 
are the covariates of agent $n$. 
This allows individual factors to account for the probability of infection $\alpha=(\alpha^n)_{n\in[1:N]}$.
Given an observation $y \in [0:N]$ modelled as \eqref{eqn:static_Y}, 
our inferential objective might include estimating the unknown parameters 
$\theta=(\beta,\rho)\in \Theta$ 
and/or the latent states $X=(X^n)_{n\in[1:N]}\in\{0,1\}^N$. 
The complete data likelihood is 
$p_{\theta}(x,y) = f_\theta(x) g_\theta(y | x)$, where 
\begin{equation} \label{eqn:static_complete}
f_\theta(x) = \prod_{n=1}^N \Bernoulli(x^n;\alpha^n),\quad 
g_\theta(y | x) = \Binomial(y;I(x), \rho) \mathbbm{1}(I(x) \geq y).
\end{equation}
Here $\Bernoulli(x^n;\alpha^n)$ and $\Binomial(y;I(x), \rho)$ denote the corresponding probability mass functions (PMF). 
Marginalizing over the latent states yields the likelihood $p_{\theta}(y)$. The agent states 
given the observation and a parameter follow the posterior distribution $p_{\theta}(x | y)$. 
In Sections \ref{sub:static_likelihood} and \ref{sub:static_posterior_states}, 
we examine the cost of exactly computing $p_{\theta}(y)$ and sampling from $p_{\theta}(x | y)$, 
respectively, and describe cheaper approximations. 
The gains are assessed numerically 
in Section \ref{sub:static_inference}. 

\subsection{Marginal likelihood computations}
\label{sub:static_likelihood}
A naive approach to compute the marginal likelihood $p_\theta(y)$ is to sum 
over all possible population configurations 
\begin{equation} \label{eqn:naive}
p_{\theta}(y) = \sum_{x \in \{0,1\}^N} p_{\theta} (x, y) 
= \sum_{x \in \{0,1\}^N} f_{\theta}(x)g_{\theta}(y | x).
\end{equation} 
This requires $\bigo(2^N)$ operations. 
A simple Monte Carlo approach involves sampling 
$X^{(p)}=(X^{(p),n})_{n\in[1:N]}\sim f_{\theta}$ independently for $p\in[1:P]$,
which represents $P$ possible configurations of the population, 
and return the Monte Carlo estimator 
$P^{-1}\sum_{p=1}^Pg_{\theta}(y | X^{(p)})$ that 
weights each configuration according to the observation density. 
The estimator is unbiased and only costs $\bigo(NP)$ to compute, 
but its variance might be prohibitively large for practical values of $P$,
depending on the observation model $g_\theta(y|x)$. 
Another issue with this estimator is that it can collapse to zero whenever 
$I(X^{(p)})<y$ for all $p\in[1:P]$. 
Following \citet{moral2015alive}, this can be circumvented by 
repeatedly drawing samples $X^{(p)}\sim f_{\theta}$ 
independently until there are $P$ configurations that satisfy 
$I(X^{(p)})\geq y$, 
and return the estimator ${(R-1)}^{-1}\sum_{p=1}^{R-1} g_{\theta}(y | X^{(p)})$, where $R\geq P$ denotes the number of repetitions needed. 
The resulting estimator can be shown to be unbiased and has 
a random cost of $\bigo(N\,\E[R])$, that depends on the value of $y$.
An obvious shortcoming of the above estimators is that the agents are sampled from $f_\theta$ without
using the available observation $y$.

We can in fact reduce the cost of computing \eqref{eqn:naive} without 
resorting to Monte Carlo approximations.  
The starting observation is that $I = I(X)$ under $X\sim f_\theta$ is the sum of 
$N$ independent Bernoulli random variables with non-identical success 
probabilities, and follows a distribution called ``Poisson Binomial''. 
We will refer to this distribution as \PB($\alpha$)
and write its PMF as
\begin{equation}\label{eqn:poibin}
	\PB(i;\alpha) = \sum_{x \in \{0,1\}^N}  \mathbbm{1}\left(\sum_{n=1}^N x^n = i\right)  \prod_{n=1}^N(\alpha^n)^{x^n} (1-\alpha^n)^{(1-x^n)}, 
	\quad i\in [0:N].
\end{equation}
Exact evaluation of \eqref{eqn:poibin} has been considered in 
\citet{barlow1984computing,chen1994weighted,chen1997statistical,hong2013computing} 
using different approaches. 
Defining $q(i,n)=\PB(i;\alpha^{n:N})$ for $i\in[0:N]$ and $n\in[1:N]$, we will employ the following recursion 
\begin{equation}\label{eqn:q-recursion}
	q(i,n) = \alpha^n q(i-1,n+1) + (1 - \alpha^n) q(i,n+1), \quad i\in[1:N],n\in[1:N-1],
\end{equation}
with initial conditions $q(0,n)=\prod_{m=n}^N(1-\alpha^{m})$ for $n\in[1:N]$, $q(1,N)=\alpha^N$ and $q(i,N)=0$ for $i\in[2:N]$.
The desired PMF $q(i,1)=\PB(i;\alpha)$ for $i\in[0:N]$ can thus be computed in $\bigo( N^2 )$ operations; 
see Appendix \ref{sec:recursive-poibin} for a derivation of \eqref{eqn:q-recursion}.

Using the above observation, we can rewrite the marginal likelihood as 
\begin{equation}\label{eqn:likelihood_simpler}
	p_{\theta}( y ) = \sum_{i=0}^N \PB(i;\alpha) \Binomial(y; i, \rho) \mathbbm{1}(i \geq y),
\end{equation}
which costs $\bigo( N^2 )$ to compute. Using a thinning argument detailed in Appendix \ref{sub:thinning}, 
the above sum is in fact equal to 
$p_{\theta}(y)=\PB(y;\rho \, \alpha)$. 
Although the marginal likelihood will not admit such tractability in the 
general setup considered in Section \ref{sec:sis}, 
the preceding observations will guide our choice of approximations. 

One can also rely on approximations of the Poisson Binomial distribution to 
further reduce the cost of computing \eqref{eqn:likelihood_simpler} to 
$\bigo( N )$. 
Choices include the Poisson approximation \citep{hodges1960poisson,barbour1984rate,wang1993number}, 
the Normal approximation \citep{volkova1996refinement} and 
the translated Poisson approximation \citep{barbour1999poisson,cekanavicius2001centered,barbour2002total}. 
We will focus on the translated Poisson approximation which 
exactly matches the mean and approximately the variance of a Poisson Binomial 
distribution. 
Let $\Poi(\lambda)$ denote a Poisson distribution with rate $\lambda>0$ and 
write the mean and variance of \PB($\alpha$) as $\mu=\sum_{n=1}^N\alpha^n$ and 
$\sigma^2=\sum_{n=1}^N\alpha^n(1-\alpha^n)$, respectively. 
The translated Poisson approximation of \eqref{eqn:poibin} is given by 
\begin{equation}\label{eqn:transpoi}
	\TP(i;\mu,\sigma^2) =
	\begin{cases}
	0, &i\in[0:\lfloor\mu-\sigma^2\rfloor-1],\\
	\Poi(i-\lfloor \mu-\sigma^2\rfloor;\sigma^2 + \{\mu-\sigma^2\}), &i\in[\lfloor\mu-\sigma^2\rfloor:N],
	\end{cases}
\end{equation}
where $\lfloor\mu-\sigma^2\rfloor$ and $\{\mu-\sigma^2\}$ denote the floor and 
fractional part of $\mu-\sigma^2$, respectively. 
Since $\mu$ and $\sigma^2$ can be computed in $\bigo( N )$ operations, 
the translated Poisson approximation~\eqref{eqn:transpoi} and the resulting 
approximation of \eqref{eqn:likelihood_simpler} only require $\bigo( N )$ 
operations. 
Hence this can be appealing in the setting of large population sizes 
$N$ at the expense of an approximation error that is well-studied. 
Indeed results in \citet[Theorem 2.1 \& Corollary 2.1]{cekanavicius2001centered} and 
\citet[Theorem 3.1]{barbour2002total} imply that the error, measured in 
the total variation distance, decay at the rate of $N^{-1/2}$ as $N\rightarrow\infty$.  

\subsection{Posterior sampling of agent states}
\label{sub:static_posterior_states}
Sampling from the posterior distribution $p_{\theta}(x | y)$ by 
naively enumerating over all $2^N$ 
configurations is computationally impractical. 
A key observation is that 
the observation density in \eqref{eqn:static_complete}  
depends on the high dimensional latent state $X\in\{0,1\}^N$ only through the 
one-dimensional summary $I(X)\in[0:N]$. 
This prompts the inclusion of $I = I(X)$ as an auxiliary variable.
Thus the joint posterior distribution can be decomposed as 
\begin{align}\label{eqn:static_augmented_posterior}
	p_{\theta}(x, i | y)= p_{\theta}(i | y) p_{\theta}(x | i).
\end{align}
The dominant cost of sampling the posterior distribution of the summary 
\begin{align}
	p_{\theta}(i | y) 
	= \frac{p_{\theta}(i)p_{\theta}(y|i)}{p_{\theta}(y)}
	=\frac{\PB(i;\alpha)\Binomial(y; i, \rho) \mathbbm{1}(i \geq y)}{p_{\theta}(y)},
	\quad i\in [0:N],
\end{align}
is the evaluation of the Poisson Binomial PMF \eqref{eqn:poibin}.
Recall from Section \ref{sub:static_likelihood} that 
this can be done exactly in $\bigo(N^2)$, and in $\bigo(N)$
using a translated Poisson approximation. 
The conditional distribution of the latent state given the summary 
\begin{align}
	p_{\theta}(x|i) = \frac{p_{\theta}(x)p_{\theta}(i|x)}{p_{\theta}(i)} 
	= \frac{\prod_{n=1}^N (\alpha^n)^{x^n}(1-\alpha^n)^{1-x^n}\mathbbm{1}(I(x)=i)}{\PB(i;\alpha)}
\end{align}
is known as a conditional Bernoulli distribution, which we will write as 
$p_{\theta}(x|i)=\CB(x;\alpha, i)$. 
Various sampling schemes have been proposed \citep{fan1962development,chen1994weighted}; see \citet[Section 4]{chen1997statistical} for an overview. 
We will rely on the sequential decomposition
\begin{align}\label{eqn:condber_seq}
	p_{\theta}(x|i) &= p_{\theta}(x^1|i) \prod_{n=2}^N p_{\theta}(x^n|x^{1:n-1},i)\\
	&=\prod_{n=1}^{N-1}\frac{(\alpha^n)^{x^n}(1-\alpha^n)^{1-x^n}q(i-i_{n-1}-x^n,n+1)}{q(i-i_{n-1},n)}
	\mathbbm{1}(x^N=i-i_{N-1}),\notag
\end{align}
where $i_0 = 0$ and $i_n = \sum_{m=1}^nx^m$ for $n\in[1:N-1]$. 
A derivation of \eqref{eqn:condber_seq} can be found in Appendix A of \citet{heng2020simple}. 
The values of $q(j,n)$ for $j\in[0:i], n\in[1:N]$ needed in
\eqref{eqn:condber_seq} can be 
precomputed using the same recursion as \eqref{eqn:q-recursion} in $\bigo(iN)$ cost. 
This precomputation is not necessary if these values are stored when  
computing Poisson Binomial probabilities.

To reduce the cost we can employ a Markov chain Monte Carlo (MCMC) method 
to sample from the conditional Bernoulli distribution $\CB(\alpha,i)$.
This incurs a cost of $\bigo(1)$ per iteration
and converges in $\bigo(N \log N)$ iterations, under some mild assumptions on $\alpha$,
as shown in \citet{heng2020simple}.
On a related note, we can design MCMC algorithms 
to target $p_{\theta}(x | y)$. These 
might for example update the state of a few agents 
by sampling from their conditional distributions
given every other variables, and alternately propose to swap 
zeros and ones in the vector $X\in\{0,1\}^N$. 
Each of these steps can be done with a cost independent of $N$,
but the number of iterations for the algorithm to converge
is expected to grow at least linearly with $N$.

\subsection{Numerical illustration}
\label{sub:static_inference}

We set up numerical experiments as
follows. 
We generate covariates $(w^n)$ from $\mathcal{N}(4,1)$ for $N=1000$ individuals 
independently, where $\mathcal{N}(\mu,\sigma^2)$ denotes a Normal distribution 
with mean $\mu\in\mathbb{R}$ and variance $\sigma^2>0$.  
Individual specific infection probabilities are computed as 
$\alpha^n = (1+\exp(-\beta  w^n))^{-1}$ using $\beta = 0.3$. 
We then simulate $X^n\sim \Bernoulli(\alpha^n)$ independently for all $n$, and sample
$Y\sim \Binomial(\sum_{n=1}^N X^n, \rho)$, with $\rho = 0.8$. We adopt a
Bayesian approach and assign an independent prior of 
$\mathcal{N}(0,1)$ on $\beta$ and $\text{Uniform}(0,1)$ on $\rho$. 
We focus on the task of sampling from the posterior 
distribution of $\theta = (\beta,\rho)$ given $y$ and the covariates $(w^n)$.

We consider random walk Metropolis--Hastings with exact likelihood calculation
(``MH-exact''), associated with a quadratic cost in $N$.  We also consider the
same algorithm with a likelihood approximated by a translated Poisson
(``MH-tp''), with a cost linear in $N$.  Finally we consider a pseudo-marginal
approach \citep{andrieu2009pseudo,andrieu2010particle} with likelihood
estimated with $P=20$ particles sampled from $f_\theta$ (``PMMH'').  Note that $P=20$
samples resulted in a variance of the log-likelihood estimates of
approximately $0.3$ at the data generating parameters (DGP). 
These samplers employ the same random walk, based on Normal proposals,
independently on $\beta$ and $\log(\rho/(1-\rho))$ with standard deviation of $0.2$. 
As a baseline we also consider a Gibbs sampler that alternates between the
updates of $\theta$ given $X$, employing the same proposal as above, and
updates of $X$ given $\theta$. These employ an equally weighted mixture of
kernels, performing either $N$ random swap updates, or a systematic Gibbs scan
of the $N$ components of $X$; thus the cost per iteration is linear in $N$. 

We first run ``MH-exact'' with $100,000$ MCMC iterations (after a burn-in of
$5,000$ iterations), to obtain estimates of the posterior means of $\beta$ and
$\rho$.  
Using these posterior estimates as ground truth, we compute the mean squared
error (MSE) of the posterior approximations obtained using each method with 
$20,000$ MCMC iterations (excluding a burn-in of $5000$) and 50 independent
repetitions. Table \ref{table:static} displays the MSE, as well as 
the relative wall-clock time to obtain each estimate.
Comparing the results of MH-exact and MH-tp shows that it is possible to 
save considerable efforts at the expense of 
small differences in the parameter estimates using a translated Poisson approximation.
The appeal of the PMMH approach compared to exact likelihood calculations is also clear. 
On the other hand, Gibbs samplers that alternate between the updates
of $\theta$ and $X$ do not seem to be competitive in this example. 

\begin{table}[htbp]
\centering
\begin{tabular}{c|cc|cc|c}
Method & \multicolumn{2}{c|}{$\mathbb{E}[\beta|y]$} & \multicolumn{2}{c|}{$\mathbb{E}[\rho|y]$} & Relative cost \\
& Bias$^2$ & Variance & Bias$^2$ & Variance & \\
\hline \hline
MH-exact & 25  & 93.3 & 0.74 & 6.39 & 128  \\
MH-tp & 22 & 52.3 & 0.32 & 2.83 & 1 \\
PMMH & 18 & 79.2 & 0.50 & 4.67 & 8 \\
Gibbs & 1040 & 113 & 91.1 & 2.85 & 42
\end{tabular}
\caption{Bias and variance when approximating the posterior mean of $\beta$ and $\rho$ 
with each inference method (in units of $10^{-4}$). 
The rightmost column displays the average cost for each method measured in 
terms of run-time, relative to MH-tp.
\label{table:static}}
\end{table}

\section{Susceptible-Infected-Susceptible model}
\label{sec:sis}
We now extend the agent-based SIS model in Section~\ref{subsec:sis_intro}. 
To allow for individual-specific attributes, for agent $n\in[1:N]$, 
we model her initial infection probability $\alpha_0^n$, 
infection rate $\lambda^n$ and recovery rate $\gamma^n$ as 
\begin{align}\label{eqn:latentprocess_parameters}
	\alpha_0^n = (1 + \exp(-\beta_0^\top w^n))^{-1}, \quad 
	\lambda^n = (1 + \exp(-\beta_{\lambda}^\top w^n))^{-1},\quad
	\gamma^n = (1 + \exp(-\beta_{\gamma}^\top w^n))^{-1}.
\end{align}
Here $\beta_0,\beta_{\lambda},\beta_{\gamma}\in\mathbb{R}^d$ are parameters 
and $w^n\in \mathbb{R}^d$ are the covariates of agent $n$.
The interactions between agents is assumed to be known and specified by 
an undirected network; inference of the network structure and extension to 
the time-varying case could be considered in future work. 
We will write $\mathcal{D}(n)$ and $\mathcal{N}(n)$ to denote the degree 
and neighbours of agent $n\in[1:N]$. 

For ease of presentation, we consider time steps of size $h=1$. 
The time evolution of the population is given by 
\begin{align}\label{eqn:dynamic_X}
	X_0\sim \mu_{\theta}, \quad  
	X_t | X_{t-1}=x_{t-1}\sim f_{\theta}(\cdot|x_{t-1}),\quad t\in[1:T]. 
\end{align}
The initial distribution $\mu_{\theta}(x_0) = \prod_{n=1}^N\Bernoulli(x_0^n;\alpha_0^n)$ 
corresponds to the static model in Section~\ref{sec:static} 
with infection probabilities $\alpha_0=(\alpha_0^n)_{n\in[1:N]}$. 
The Markov transition $f_{\theta}(x_t |x_{t-1})= \prod_{n=1}^N\Bernoulli(x_t^n;\alpha^n(x_{t-1}))$ 
has conditional probabilities $\alpha(x_{t-1})=(\alpha^n(x_{t-1}))_{n\in[1 : N]}$ given by 
\begin{align}\label{eqn:sis_alpha_hetero}
	\alpha^n(x_{t-1}) = 
	\begin{cases}
		\lambda^n \mathcal{D}(n)^{-1}\sum_{m\in\mathcal{N}(n)}x_{t-1}^m,\quad & \mbox{if }x_{t-1}^n=0,\\
		1-\gamma^n,\quad & \mbox{if }x_{t-1}^n=1,
	\end{cases}
\end{align}
for $n\in[1:N]$. We will assume that the cost of evaluating $\alpha$ is $\bigo(N)$. 
We refer readers to \citet{mcvinish_pollett_2012} for the asymptotic behaviour 
of this process as $N\to\infty$, in the case of a fully connected network.  
In \eqref{eqn:sis_alpha_hetero} we see the proportion of infected neighbours 
appearing in the infection probability at each time step. 
As an individual can have covariates $w^n$ that include measures 
of contact frequency, which would affect $\alpha^n$ via $\lambda^n$, 
it is possible for an individual with many neighbours (large $\mathcal{D}(n)$)
to have low infection probability if the frequency of contact is low.  

Equations \eqref{eqn:latentprocess_parameters}-\eqref{eqn:sis_alpha_hetero}
and the observation model in \eqref{eqn:static_Y} define a hidden Markov model 
on $\{0,1\}^N$, with unknown parameters $\theta=(\beta_0,\beta_{\lambda},\beta_{\gamma},\rho)\in
\Theta = \mathbb{R}^{3d}\times(0,1)$; see Figure~\ref{fig:graphical_model} for 
a graphical model representation in the case of a fully connected network. 
Given an observation sequence $y_{0:T} \in [0:N]^{T+1}$, the complete data likelihood is 
given by 
\begin{align}
	p_{\theta}(x_{0:T},y_{0:T}) = p_{\theta}(x_{0:T})p_{\theta}(y_{0:T}|x_{0:T}) 
	= \mu_{\theta}(x_0)\prod_{t=1}^Tf_{\theta}(x_t|x_{t-1})\prod_{t=0}^Tg_{\theta}(y_t|x_t).
\end{align}
Parameter inference will require marginalizing over the latent process
to obtain the marginal likelihood $p_{\theta}(y_{0:T})$ 
and estimation of agent states will involve sampling from the smoothing distribution
$p_{\theta}(x_{0:T}|y_{0:T})$. 
Exact computation of the marginal likelihood and marginals of the smoothing distribution 
using the forward algorithm and forward-backward algorithm, respectively, both require 
$\bigo(2^{2N}T)$ operations. 
As this is computationally prohibitive for large $N$, we will rely on sequential Monte Carlo (SMC) 
approximations. 

In Section~\ref{sub:sis_apf}, we describe how SMC methods 
can be used to approximate the marginal likelihood and 
the smoothing distribution. 
Like many Monte Carlo schemes, the efficiency of SMC crucially relies on the choice of proposal distributions. 
The bootstrap particle filter (BPF) \citep{gordon1993novel}, which corresponds to having the joint distribution 
of the latent process \eqref{eqn:dynamic_X} as proposal, can often give poor performance in practice when 
the observations are informative. 
By building on the insights from Section~\ref{sec:static}, 
we will show how the fully adapted auxiliary particle filter (APF) \citep{pitt1999filtering,carpenter1999improved} 
can be implemented. As the APF constructs a proposal transition at each time
step that takes the next observation in account, it often 
performs better than the BPF, although not always 
\citep{johansen2008note}. 
In Section~\ref{sub:sis_controlled_smc}, we adapt the ideas in \citet{guarniero2017iterated,heng2020controlled} to our setting 
and present a novel controlled SMC (cSMC) method that can significantly outperform the APF. 
Central to our approach is to take the entire observation sequence $y_{0:T}$ into account by 
constructing proposal distributions that approximate the smoothing distribution $p_{\theta}(x_{0:T}|y_{0:T})$.
Using a simulated dataset, 
in Section~\ref{sub:sis_illustrate_smc} we show that cSMC provides orders of magnitude improvements 
over BPF and APF in terms of estimating the marginal likelihood. 
In Section~\ref{sub:sis_parameter_inference}, we illustrate the behaviour of marginal likelihood as the number of 
observations increases, and perform parameter inference and predictions. 
We consider MCMC strategies as alternatives to SMC-based methods in Appendix \ref{sub:mcmc}.

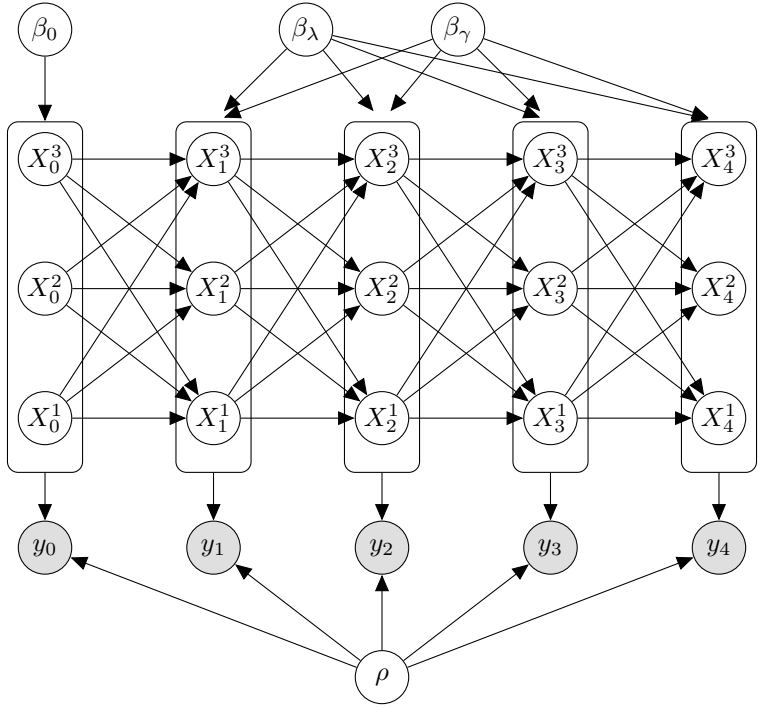
\begin{figure}[htbp]
\centering
\begin{tikzpicture}
\node[obs](y0) {$y_0$};
\node[latent, above=of y0, ] (x01) {$X_{0}^1$};
\node[latent, above=of x01, ] (x02) {$X_{0}^2$};
\node[latent, above=of x02, ] (x03) {$X_{0}^3$};

\node[obs, right = of y0, xshift = 0.5cm] (y1) {$y_1$} ;
\node[latent, above=of y1, ] (x11) {$X_{1}^1$};
\node[latent, above=of x11, ]  (x12) {$X_{1}^2$};
\node[latent, above=of x12, ] (x13) {$X_{1}^3$};

\node[obs, right = of y1, xshift = 0.5cm] (y2) {$y_2$} ;
\node[latent, above=of y2, ] (x21) {$X_{2}^1$};
\node[latent, above=of x21, ]  (x22) {$X_{2}^2$};
\node[latent, above=of x22, ] (x23) {$X_{2}^3$};

\node[obs, right = of y2, xshift = 0.5cm] (y3) {$y_3$} ;
\node[latent, above=of y3, ] (x31) {$X_{3}^1$};
\node[latent, above=of x31, ]  (x32) {$X_{3}^2$};
\node[latent, above=of x32, ] (x33) {$X_{3}^3$};

\node[obs, right = of y3, xshift = 0.5cm] (y4) {$y_4$} ;
\node[latent, above=of y4, ] (x41) {$X_{4}^1$};
\node[latent, above=of x41, ]  (x42) {$X_{4}^2$};
\node[latent, above=of x42, ] (x43) {$X_{4}^3$};

\node[latent, below = 1cm of y2]            (rho) {$\rho$};
\node[latent, above =of x23, xshift = 1cm] (beta){$\beta_{\gamma}$};
\node[latent, above =of x23, xshift = -1cm](gamma) {$\beta_{\lambda}$};

\edge {x01,x02,x03} {x11} ; %
\edge {x01,x02,x03} {x12} ; %
\edge {x01,x02,x03} {x13} ; %

\edge {x11,x12,x13} {x21} ; %
\edge {x11,x12,x13} {x22} ; %
\edge {x11,x12,x13} {x23} ; %

\edge {x21,x22,x23} {x31} ; %
\edge {x21,x22,x23} {x32} ; %
\edge {x21,x22,x23} {x33} ; %

\edge {x31,x32,x33} {x41} ; %
\edge {x31,x32,x33} {x42} ; %
\edge {x31,x32,x33} {x43} ; %

\plate {x0} {(x01)(x02)(x03)} {} ;
\plate {x1} {(x11)(x12)(x13)} {} ;
\plate {x2} {(x21)(x22)(x23)} {} ;
\plate {x3} {(x31)(x32)(x33)} {} ;
\plate {x4} {(x41)(x42)(x43)} {} ;

\edge{x0}{y0};
\edge{x1}{y1};
\edge{x2}{y2};
\edge{x3}{y3};
\edge{x4}{y4};

\edge{rho}{y0};
\edge{rho}{y1};
\edge{rho}{y2};
\edge{rho}{y3};
\edge{rho}{y4};

\node[above = 0.005cm of x33](x3p){};
\node[above = 0.005cm of x43](x4p){};
\node[above = 0.005cm of x23](x2p){};
\node[above = 0.005cm of x13](x1p){};
\edge{gamma}{x1p};
\edge{gamma}{x2p};
\edge{gamma}{x3p};
\edge{gamma}{x4p};

\edge{beta}{x1p};
\edge{beta}{x2p};
\edge{beta}{x3p};
\edge{beta}{x4p};

\node[latent, above = of x03](p0) {$\beta_0$};
\edge{p0}{x0};
\end{tikzpicture}
\caption{Graphical model representation of the agent-based SIS model in Section~\ref{sec:sis} 
for $T = 4$ time steps and a fully connected network with $N = 3$ agents.}
\label{fig:graphical_model}
\end{figure}

\subsection{Sequential Monte Carlo}
\label{sub:sis_apf}
SMC methods \citep{liu1998sequential,doucet2001introduction}, also known as particle filters, 
provide approximations of $p_{\theta}(y_{0:T})$ and $p_{\theta}(x_{0:T}|y_{0:T})$ 
by simulating an interacting particle system of size $P\in\mathbb{N}$. 
In the following, we give a generic description of SMC to include several algorithms in a common framework. 

At the initial time, one samples $P$ configurations of the population from 
a proposal distribution $q_0$ on $\{0,1\}^N$, i.e. 
$X_0^{(p)}=(X_0^{(p),n})_{n\in[1:N]}\sim q_0(\cdot|\theta)$ independently for $p\in[1:P]$. 
Each possible configuration is then assigned a weight 
$W_0^{(p)}\propto w_0(X_0^{(p)})$ that is normalized to sum to one. 
To focus our computation on the more likely configurations, we perform an operation known as resampling 
that discards some configurations and duplicates others according to their weights. 
Each resampling scheme involves sampling ancestor indexes 
$(A_0^{(p)} )_{p\in[1:P]}\in[1:P]^P$ from a distribution $r(\cdot|W_0^{(1:P)})$
on $[1:P]^P$. 
The simplest scheme is multinomial resampling \citep{gordon1993novel}, which samples 
$(A_0^{(p)} )_{p\in[1:P]}$ independently from the categorical distribution on $[1:P]$ 
with probabilities $W_0^{(1:P)}$; other lower variance and adaptive resampling schemes can also be 
employed \citep{fearnhead2003line,gerber2019negative}. Subsequently, for time step $t\in[1:T]$, one 
propagates each resampled configuration according to a proposal transition $q_t$ on $\{0,1\}^N$, i.e. 
$X_t^{(p)}=(X_t^{(p),n})_{n\in[1:N]}\sim q_t(\cdot|X_{t-1}^{(A_{t-1}^{(p)})},\theta)$ independently for $p\in[1:P]$. 
As before, we weight each new configuration according to $W_t^{(p)}\propto w_t(X_{t-1}^{(A_{t-1}^{(p)})}, X_t^{(p)})$, 
and for $t<T$, resample by drawing the ancestor indexes $(A_t^{(p)} )_{p\in[1:P]}\sim r(\cdot|W_t^{(1:P)})$. 
For notational simplicity, we suppress notational dependence of the weight functions $(w_t)_{t\in[0:T]}$ on 
the parameter $\theta$. 
To approximate the desired quantities $p_{\theta}(y_{0:T})$ and $p_{\theta}(x_{0:T}|y_{0:T})$, 
these weight functions have to satisfy 
\begin{align}\label{eqn:weights_requirement}
	w_0(x_0)\prod_{t=1}^Tw_t(x_{t-1}, x_t) = \frac{\mu_{\theta}(x_0)\prod_{t=1}^T
	f_{\theta}(x_t|x_{t-1})\prod_{t=0}^Tg_{\theta}(y_t|x_t)}{q_0(x_0|\theta)\prod_{t=1}^Tq_t(x_t|x_{t-1},\theta)}.
\end{align}

Given the output of the above simulation, an unbiased estimator of the marginal likelihood $p_{\theta}(y_{0:T})$ is 
\begin{align}\label{eqn:smc_likelihood}
	\hat{p}_{\theta}(y_{0:T}) = \left\lbrace\frac{1}{P}\sum_{p=1}^Pw_0(X_0^{(p)})\right\rbrace
	\left\lbrace\prod_{t=1}^T\frac{1}{P}\sum_{p=1}^Pw_t(X_{t-1}^{(A_{t-1}^{(p)})}, X_t^{(p)})\right\rbrace,
\end{align}
and a particle approximation of the smoothing distribution 
is given by 
\begin{align}\label{eqn:smc_smoothing_approx}
	\hat{p}_{\theta}(x_{0:T}|y_{0:T}) = \sum_{p=1}^PW_T^{(p)}\delta_{X_{0:T}^{(p)}}(x_{0:T}).
\end{align}
In the latter approximation, 
each trajectory $X_{0:T}^{(p)}$ is formed by tracing the ancestral lineage of $X_T^{(p)}$, i.e. 
$X_{0:T}^{(p)}=(X_t^{(B_t^{(p)})})_{t\in[0:T]}$ with 
$B_T^{(p)}=p$ 
and $B_t^{(p)} = A_t^{(B_{t+1}^{(p)})}$ for $t\in[0:T-1]$. 
Convergence properties of these approximations as the size of the particle system $P\rightarrow\infty$ 
are well-studied; see for example \citet{del2004feynman}. 
However, the quality of these approximations depends crucially on the choice of proposals $(q_t)_{t\in[0:T]}$ and the 
corresponding weight functions $(w_t)_{t\in[0:T]}$ that satisfy \eqref{eqn:weights_requirement}. 

The BPF of \citet{gordon1993novel} can be recovered by employing the proposals 
$q_0(x_0|\theta)=\mu_{\theta}(x_0), q_t(x_t|x_{t-1},\theta)=f_{\theta}(x_t|x_{t-1})$ for $t\in[1:T]$ and the weight functions 
$w_t(x_t)=g_{\theta}(y_t|x_t)$ for $t\in[0:T]$. 
Although the BPF only costs $\bigo(NTP)$ 
to implement and has convergence guarantees as $P\rightarrow\infty$, 
the variance of its marginal likelihood estimator can be too large to deploy within particle MCMC schemes 
for practical values of $P$ (see Section \ref{sub:sis_illustrate_smc}). Another issue with 
its marginal likelihood estimator, for this particular choice of observation equation,
is that it can collapse to zero if all proposed configurations have less
infections than the observed value, i.e. there exists $t\in[0:T]$ such that $I(X_t^{(p)})<y_t$ for all $p\in[1:P]$. 
With increased cost, this issue can be circumvented using the alive particle filter of \citet{moral2015alive},  
by repeatedly drawing samples at each time step until there are $P$ configurations 
with infections that are larger than or equal to the observed value. 

Alternatively, one can construct better proposals with supports that respect these observational constraints. 
One such option is the fully adapted APF \citep{pitt1999filtering,carpenter1999improved} that corresponds to having the proposals 
$q_0(x_0|\theta)=p_{\theta}(x_0|y_0), q_t(x_t|x_{t-1},\theta)=p_{\theta}(x_t|x_{t-1},y_t)$ for $t\in[1:T]$ and the weight functions 
$w_0(x_0)=p_{\theta}(y_0)$ and $w_t(x_{t-1})=p_{\theta}(y_t|x_{t-1})$ for $t\in[1:T]$. 
At the initial time, computing the marginal likelihood $p_{\theta}(y_0)$ and sampling from the posterior of agent states
$p_{\theta}(x_0|y_0)$ can be done exactly (or approximately) as described in Sections 
\ref{sub:static_likelihood} and \ref{sub:static_posterior_states} respectively for the static model. 
More precisely, we compute 
\begin{align}\label{eqn:sis_initial_likelihood}
	p_{\theta}(y_0) = \sum_{i_0=0}^N \PB(i_0;\alpha_0) \Binomial(y_0; i_0, \rho) \mathbbm{1}(i_0\geq y_0)
\end{align}
and sample from 
\begin{align}\label{eqn:sis_apf_initial}
	p_{\theta}(x_0,i_0|y_0) = p_{\theta}(i_0|y_0)p_{\theta}(x_0|i_0) = \frac{\PB(i_0;\alpha_0) \Binomial(y_0; i_0, \rho) 
	\mathbbm{1}(i_0 \geq y_0)}{p_{\theta}(y_0)}\CB(x_0;\alpha_0, i_0), 
\end{align}
which admits $p_{\theta}(x_0|y_0)$ as its marginal distribution. 
For time step $t\in[1:T]$, by conditioning on the previous configuration $x_{t-1}\in\{0,1\}^N$, 
the same ideas can be used to compute the predictive likelihood $p_{\theta}(y_t|x_{t-1})$ and sample from 
the transition $p_{\theta}(x_t|x_{t-1},y_t)$, i.e. we compute 
\begin{align}
	p_{\theta}(y_t|x_{t-1}) = \sum_{i_{t}=0}^N \PB(i_{t};\alpha(x_{t-1})) \Binomial(y_t; i_{t}, \rho) \mathbbm{1}(i_{t}\geq y_t)
\end{align}
and sample from 
\begin{align}\label{eqn:sis_apf_transition}
	p_{\theta}(x_t,i_t|x_{t-1},y_t) &= p_{\theta}(i_t|x_{t-1},y_t)p_{\theta}(x_t|x_{t-1},i_t) \\
	& = \frac{\PB(i_t;\alpha(x_{t-1})) \Binomial(y_t; i_t, \rho) 
	\mathbbm{1}(i_t \geq y_t)}{p_{\theta}(y_t|x_{t-1})}\CB(x_t;\alpha(x_{t-1}), i_t)\notag
\end{align}
which admits $p_{\theta}(x_t,|x_{t-1},y_t)$ as its marginal transition. 

An algorithmic description of the resulting APF is detailed in Algorithm
\ref{algo:apf_sis}, where the notation $\Cat([0:N],V^{(0:N)})$ refers to the
categorical distribution on $[0:N]$ with probabilities
$V^{(0:N)}=(V^{(i)})_{i\in[0:N]}$.  As the weights in the fully adapted APF at
time $t\in[1:T]$ only depend on the configuration at time $t-1$, note that we
have interchanged the order of sampling and resampling to promote sample
diversity. The cost of running APF exactly is $\bigo(N^2TP)$. 
To reduce the computational cost to $\bigo(N\log(N) T P)$ 
one can approximate the above Poisson binomial PMFs with the translated Poisson
approximation \eqref{eqn:transpoi}, and employ MCMC to sample from the above 
conditioned Bernoulli distributions. 

\begin{algorithm}
	\SetAlgoLined
	\KwIn{Parameters $\theta\in\Theta$ and number of particles $P\in\mathbb{N}$} 
	compute $v_0^{(i)} = \PB(i;\alpha_0) \Binomial(y_0; i, \rho) \mathbbm{1}(i\geq y_0)$ for $i\in[0:N]$\\ 
	set $w_0 = \sum_{i=0}^Nv_0^{(i)}$\\ 
	normalize $V_0^{(i)} = v_0^{(i)} / w_0$ for $i\in[0:N]$\\ 	
	sample $I_0^{(p)}\sim\Cat([0:N],V_0^{(0:N)})$ and $X_0^{(p)} | I_0^{(p)} \sim\CB(\alpha_0, I_0^{(p)})$ for $p\in[1:P]$\\
	\For {$t = 1,\cdots, T$ and $p = 1,\cdots, P$}{
		compute $v_t^{(i,p)} = \PB(i;\alpha(X_{t-1}^{(p)})) \Binomial(y_t; i, \rho) \mathbbm{1}(i\geq y_t)$ for $i\in[0:N]$\\ 		
		set $w_t^{(p)} = \sum_{i=0}^Nv_t^{(i,p)}$\\ 			
		normalize $V_t^{(i,p)} = v_t^{(i,p)} / w_t^{(p)}$ for $i\in[0:N]$\\ 	
		normalize $W_{t}^{(p)} = w_t^{(p)} / \sum_{k=1}^Pw_t^{(k)}$\\
		sample $A_{t-1}^{(p)}\sim r(\cdot|W_{t}^{(1:P)})$\\
		sample $I_t^{(p)}\sim\Cat([0:N],V_t^{(0:N,A_{t-1}^{(p)})})$ and $X_t^{(p)} | I_t^{(p)} \sim\CB(\alpha(X_{t-1}^{(A_{t-1}^{(p)})}), I_t^{(p)})$\\
	}
	
	\KwOut{Marginal likelihood estimator $\hat{p}_{\theta}(y_{0:T})=w_0\prod_{t=1}^TP^{-1}\sum_{p=1}^Pw_t^{(p)}$, states 
	$(X_t^{(p)})_{(t,p)\in[0:T]\times[1:P]}$ and ancestors $(A_t^{(p)})_{(t,p)\in[0:T-1]\times[1:P]}$}
	\caption{Auxiliary particle filter for SIS model}\label{algo:apf_sis}
\end{algorithm}

\subsection{Controlled sequential Monte Carlo}
\label{sub:sis_controlled_smc}

To obtain better performance than APF, we can construct proposals that take not just the next but all future observations 
into account. 
We can sequentially decompose the smoothing distribution as
\begin{align}\label{eqn:smoothing_distribution_bif}
	p_{\theta}(x_{0:T}|y_{0:T}) = p_{\theta}(x_0|y_{0:T})\prod_{t=1}^Tp_{\theta}(x_t|x_{t-1},y_{t:T}),
\end{align}
and it follows that the optimal proposal is 
$q_0^\star(x_0|\theta)=p_{\theta}(x_0|y_{0:T})$ and 
$q_t^\star(x_t|x_{t-1},\theta)=p_{\theta}(x_t|x_{t-1},y_{t:T})$ 
for $t\in[1:T]$, as this gives exact samples from the smoothing 
distribution.
The resulting SMC marginal likelihood estimator in 
\eqref{eqn:smc_likelihood} would have zero variance for any choice of weight functions satisfying \eqref{eqn:weights_requirement}. 
To design approximations of the optimal proposal, it will be instructive to rewrite it as 
\begin{align}\label{eqn:decompose_smoothing}
	p_{\theta}(x_0|y_{0:T}) = \frac{\mu_{\theta}(x_0)\psi_0^\star(x_0)}{\mu_{\theta}(\psi_0^\star)}, 
	\quad p_{\theta}(x_t|x_{t-1},y_{t:T}) = \frac{f_{\theta}(x_t|x_{t-1})\psi_t^\star(x_t)}{f_{\theta}(\psi_t^\star|x_{t-1})}, \quad t\in[1:T],
\end{align}
where $\psi_t^\star(x_t)=p(y_{t:T}|x_t)$ for $t\in[0:T]$, 
$\mu_{\theta}(\psi_0^\star) = \sum_{x_0\in\{0,1\}^N}\mu_{\theta}(x_0)\psi_0^\star(x_0)$ denotes the expectation of $\psi_0^\star$ 
with respect to $\mu_{\theta}$ and 
$f_{\theta}(\psi_t^\star|x_{t-1})=\sum_{x_t\in\{0,1\}^N}f_{\theta}(x_t|x_{t-1})\psi_t^\star(x_t)$ denotes the conditional expectation 
of $\psi_t^\star$ under $f_{\theta}$.
Equation \eqref{eqn:decompose_smoothing} shows how the latent process \eqref{eqn:dynamic_X}, defined by $\mu_{\theta}$ 
and $f_{\theta}$, should be modified to obtain the optimal proposal. The functions $(\psi_t^\star)_{t\in[0:T]}$, known as 
the backward information filter (BIF) \citep{bresler1986two,briers2010smoothing}, can be defined using the backward recursion 
\begin{align}\label{eqn:backward_info_filter}
	\psi_T^\star(x_T)=g_{\theta}(y_T|x_T), \quad \psi_t^\star(x_t) = g_{\theta}(y_t|x_t)f_{\theta}(\psi_{t+1}^\star|x_t),\quad t\in[0:T-1],
\end{align}
which shows how information from future observations are propagated backwards over time. 
As the cost of computing and storing the BIF using the recursion \eqref{eqn:backward_info_filter} are 
$\bigo(2^{2N}T)$ and $\bigo(2^NT)$, respectively, approximations are necessary when $N$ is large. 
In contrast to \citet{guarniero2017iterated,heng2020controlled} that rely on regression techniques to approximate 
the BIF, our approach is based on dimensionality reduction by coarse-graining the agent-based model. 

At the terminal time $T$, the function $\psi_T^\star(x_T)=\Binomial(y_T;I(x_T), \rho)\mathbbm{1}(I(x_T) \geq y_T)$
only depends on the agent states $x_T\in\{0,1\}^N$ through the one-dimensional summary $I(x_T)\in[0:N]$. 
Therefore it suffices to compute and store $\psi_T(i_T)=\Binomial(y_T;i_T, \rho)\mathbbm{1}(i_T \geq y_T)$ for all $i_T\in[0:N]$. 
Note that $\psi_T(I(x_T))$ should be seen as a function of the agent states. 
To iterate the backward recursion \eqref{eqn:backward_info_filter}, 
we have to compute the conditional expectation $f_{\theta}(\psi_{T}|x_{T-1})= \sum_{i_T=0}^N \PB(i_T;\alpha(x_{T-1})) \psi_T(i_T)$
for all $x_{T-1}\in\{0,1\}^N$. By a thinning argument (Appendix \ref{sub:thinning}), this is equal to 
$\PB(y_T;\rho \, \alpha(x_{T-1}))$. Hence it is clear that all $2^N$ possible configurations of the population 
have to be considered to iterate the recursion, and an approximation of $\alpha(x_{T-1})$ is necessary at this point. 

To reduce dimension, we consider 
\begin{equation}
	\label{eqn:sis_alpha_approx}
		\bar{\alpha}^n(x_{T-1}) = \begin{cases}
					\bar{\lambda} N^{-1}I(x_{T-1}),\quad & \mbox{if }x_{T-1}^n=0,\\
					1-\bar{\gamma},\quad & \mbox{if }x_{T-1}^n=1.
					\end{cases}
\end{equation}
This amounts to approximating the proportion of infected neighbours of each agent by the population proportion of infections,
and replacing individual infection and recovery rates in~\eqref{eqn:sis_alpha_hetero} by their population averages, i.e. 
$\lambda^n\approx\bar{\lambda}=N^{-1}\sum_{n=1}^N\lambda^n$ and  
$\gamma^n\approx\bar{\gamma}=N^{-1}\sum_{n=1}^N\gamma^n$. 
Writing $\bar{f}_{\theta}(x_T|x_{T-1})=\prod_{n=1}^N\Bernoulli(x_{T}^n;\bar{\alpha}^n(x_{T -1}))$ as  
the Markov transition associated to \eqref{eqn:sis_alpha_approx}, 
we approximate the conditional expectation $f_{\theta}(\psi_{T}|x_{T-1})$ by
\begin{align}\label{eqn:sis_condexp_approx}
	\bar{f}_{\theta}(\psi_{T}|I(x_{T-1})) = \sum_{i_T=0}^N 
	\SB(i_T; N-I(x_{T-1}), \bar{\lambda} N^{-1}I(x_{T-1}), I(x_{T-1}), 1-\bar{\gamma})\psi_T(i_T),
\end{align}
where $\SB(N_1,p_1,N_2,p_2)$ denotes the distribution of a sum of two independent 
$\Binomial(N_1,p_1)$ and $\Binomial(N_2,p_2)$ random variables. 
This follows as a Poisson binomial distribution with homogeneous probabilities \eqref{eqn:sis_alpha_approx} reduces to 
the $\SB$ distribution in \eqref{eqn:sis_condexp_approx}, which is not analytically tractable 
but can be computed exactly in $\bigo(N^2)$ cost using a naive implementation of the convolution\footnote{$\SB(i;N_1,p_1,N_2,p_2) = 
\sum_{j=0}^i \Binomial(j;N_1,p_1)\Binomial(i-j;N_2,p_2)$ for $i\in[0:N_1+N_2]$.}. 
In the large $N$ regime, we advocate approximating  
the $\SB$ distribution with the translated Poisson \eqref{eqn:transpoi}, which reduces the cost to $\bigo(N)$ 
at the price of an approximation error. Moreover, the latter bias only affects the quality of our proposal distributions, 
and not the consistency properties of SMC approximations. 
The resulting approximation of $\psi_{T-1}^\star(x_{T-1})$ can be computed using 
\begin{align}
	\psi_{T-1}(i_{T-1})=\Binomial(y_{T-1};i_{T-1}, \rho)\mathbbm{1}(i_{T-1} \geq y_{T-1})\bar{f}_{\theta}(\psi_{T}|i_{T-1})
\end{align}
for all $i_{T-1}\in[0:N]$. This costs $\bigo(N^3)$ if convolutions are implemented naively and $\bigo(N^2)$ if 
translated Poisson approximations of \eqref{eqn:sis_condexp_approx} are employed. 
We then continue in the same manner to approximate the recursion 
\eqref{eqn:backward_info_filter} until the initial time. Algorithm \ref{algo:bif_sis} summarizes 
our approximation of the BIF $(\psi_t)_{t\in[0:T]}$, which costs $\bigo(N^3T)$ or $\bigo(N^2T)$ to compute and $\bigo(NT)$ in storage. 

Our corresponding approximation of the optimal proposal \eqref{eqn:decompose_smoothing} is
\begin{align}\label{eqn:csmc_proposals}
	q_0(x_0|\theta) = \frac{\mu_{\theta}(x_0)\psi_0(I(x_0))}{\mu_{\theta}(\psi_0)}, 
	\quad q_t(x_t|x_{t-1},\theta) = \frac{f_{\theta}(x_t|x_{t-1})\psi_t(I(x_t))}{f_{\theta}(\psi_t|x_{t-1})}, \quad t\in[1:T].
\end{align}
To employ these proposals within SMC, the appropriate weight functions \citep{scharth2016particle} satisfying \eqref{eqn:weights_requirement} are 
\begin{align}\label{eqn:csmc_weights}
	&w_0(x_0) = \frac{\mu_{\theta}(\psi_0)g_{\theta}(y_0|x_0)f_{\theta}(\psi_1|x_0)}{\psi_0(I(x_0))},
	&w_t(x_t) = \frac{g_{\theta}(y_t|x_t)f_{\theta}(\psi_{t+1}|x_t)}{\psi_t(I(x_t))},\quad t\in[1:T-1],
\end{align}
and $w_T(x_T) = 1$. To evaluate the weights, note that expectations can be computed as
\begin{align}
	\mu_{\theta}(\psi_0) = \sum_{i_0=0}^N\PB(i_0;\alpha_0)\psi_0(i_0),\quad 
	f_{\theta}(\psi_{t}|x_{t-1}) = \sum_{i_t=0}^N\PB(i_t;\alpha(x_{t-1}))\psi_t(i_t),\quad t\in[1:T].
\end{align}
Sampling from the proposals in \eqref{eqn:csmc_proposals} can be performed in a similar manner as the APF. 
To initialize, we sample from 
\begin{align}
	q_0(x_0,i_0|\theta) = q_0(i_0|\theta)q_0(x_0|i_0,\theta) = \frac{\PB(i_0;\alpha_0)\psi_0(i_0)}{\mu_{\theta}(\psi_0)}
	\CB(x_0;\alpha_0, i_0) 
\end{align}
which admits $q_0(x_0|\theta)$ as its marginal distribution, and for time $t\in[1:T]$
\begin{align}
	q_t(x_t,i_t|x_{t-1},\theta) & = q_t(i_t|x_{t-1},\theta)q_t(x_t|x_{t-1},i_t,\theta)\\
	& = \frac{\PB(i_t;\alpha(x_{t-1}))\psi_t(i_t)}{f_{\theta}(\psi_{t}|x_{t-1})}\CB(x_t;\alpha(x_{t-1}), i_t)\notag
\end{align}
which admits $q_t(x_t|x_{t-1},\theta)$ as its marginal transition. 
Algorithm \ref{algo:csmc_sis} gives an algorithmic summary of the resulting SMC method, 
which we shall refer to as controlled SMC (cSMC), following the terminology of \citet{heng2020controlled}. 
This has the same cost as the APF, and one can also employ translated Poisson approximations \eqref{eqn:transpoi} and 
MCMC to reduce the computational cost.

To study the performance of cSMC, we consider the Kullback--Leibler (KL) divergence 
from our proposal distribution $q_{\theta}(x_{0:T})=q_0(x_0|\theta)\prod_{t=1}^Tq_t(x_t|x_{t-1},\theta)$ to 
the smoothing distribution $p_{\theta}(x_{0:T}|y_{0:T})$, denoted as $\KL\left(p_{\theta}(x_{0:T}|y_{0:T}) \mid q_{\theta}(x_{0:T})\right)$, 
which characterizes the quality of our importance proposal \citep{chatterjee2018sample}. 
The following result provides a decomposition of this KL divergence in terms of 
logarithmic differences between the BIF $(\psi_t^\star)_{t\in[0:T]}$ and our approximation $(\psi_t)_{t\in[0:T]}$ 
under the marginal distributions of the smoothing distribution and our proposal distribution, 
denoted as $\eta_t^\star(x_t|\theta)$ and $\eta_t(x_t|\theta)$ respectively for each time $t\in[0:T]$. 
Given a function $\varphi:\{0,1\}^N\rightarrow\mathbb{R}$, we will write 
$\eta_t^\star(\varphi|\theta)$ and $\eta_t(\varphi|\theta)$ to denote expectations 
under these marginal distributions, and its corresponding $L^2$-norms as 
$\|\varphi\|_{L^2(\eta_t^\star)}=\eta_t^\star(\varphi^2|\theta)^{1/2}$ and 
$\|\varphi\|_{L^2(\eta_t)}=\eta_t(\varphi^2|\theta)^{1/2}$.

\begin{proposition}\label{prop:KL_BIF_bound}
The Kullback--Leibler divergence from $q_{\theta}(x_{0:T})$ to $p_{\theta}(x_{0:T}|y_{0:T})$ satisfies 
\begin{align}\label{eqn:KL_BIF_bound}
	\KL\left(p_{\theta}(x_{0:T}|y_{0:T}) \mid q_{\theta}(x_{0:T})\right) 
	\leq \sum_{t=0}^T \eta_{t}^{\star}(\log(\psi_t^{\star}/\psi_t)|\theta) + M_{t-1} \eta_t(\log(\psi_t/\psi_t^{\star})|\theta),
\end{align}
where $M_{-1}=1$ and $M_t=\max_{x_{t}\in\{0,1\}^N}\eta_{t}^\star(x_{t}|\theta)/\eta_{t}(x_{t}|\theta)$ 
for $t\in[0:T-1]$.
\end{proposition}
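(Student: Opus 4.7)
The plan is to apply the chain rule for the Kullback--Leibler divergence along the time axis, exploiting the factorizations \eqref{eqn:decompose_smoothing} of the smoothing distribution and \eqref{eqn:csmc_proposals} of the proposal. By the Markov property, $p_\theta(x_t\mid x_{0:t-1},y_{0:T}) = p_\theta(x_t\mid x_{t-1},y_{t:T})$, so the chain rule yields
\begin{equation*}
\KL\bigl(p_\theta(x_{0:T}|y_{0:T}) \mid q_\theta(x_{0:T})\bigr) = \KL\bigl(p_\theta(x_0|y_{0:T})\mid q_0(\cdot|\theta)\bigr) + \sum_{t=1}^{T} \eta_{t-1}^\star\Bigl(\KL\bigl(p_\theta(x_t|x_{t-1},y_{t:T})\mid q_t(\cdot|x_{t-1},\theta)\bigr)\Bigr),
\end{equation*}
and it suffices to bound each summand and collect them.

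Fix $t\geq 1$. Using \eqref{eqn:decompose_smoothing} and \eqref{eqn:csmc_proposals}, the log-ratio of $p_\theta(x_t|x_{t-1},y_{t:T})$ to $q_t(x_t|x_{t-1},\theta)$ splits into $\log(\psi_t^\star(x_t)/\psi_t(x_t))$ plus an $x_{t-1}$-dependent normalizer ratio $\log(f_\theta(\psi_t|x_{t-1})/f_\theta(\psi_t^\star|x_{t-1}))$. Averaging the first piece over $x_t$ under $p_\theta(\cdot|x_{t-1},y_{t:T})$ and then over $x_{t-1}$ under $\eta_{t-1}^\star$ collapses, by the tower property, exactly to $\eta_t^\star(\log(\psi_t^\star/\psi_t)|\theta)$ --- the first summand in the claimed bound. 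For the normalizer ratio, the key identity, obtained by unfolding $q_t$, is
\begin{equation*}
\frac{f_\theta(\psi_t^\star|x_{t-1})}{f_\theta(\psi_t|x_{t-1})} = q_t\bigl(\psi_t^\star/\psi_t\mid x_{t-1},\theta\bigr),
\end{equation*}
and Jensen's inequality applied to the concave $\log$ then gives
\begin{equation*}
\log\frac{f_\theta(\psi_t|x_{t-1})}{f_\theta(\psi_t^\star|x_{t-1})} = -\log q_t(\psi_t^\star/\psi_t|x_{t-1},\theta) \leq q_t(\log(\psi_t/\psi_t^\star)|x_{t-1},\theta).
\end{equation*}
The same device at $t=0$, using $\mu_\theta(\psi_0^\star)/\mu_\theta(\psi_0) = \eta_0(\psi_0^\star/\psi_0|\theta)$, yields $\log(\mu_\theta(\psi_0)/\mu_\theta(\psi_0^\star)) \leq \eta_0(\log(\psi_0/\psi_0^\star)|\theta)$, producing the initial summand with $M_{-1}=1$.

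To close the bound, I would integrate the per-step Jensen estimate against $\eta_{t-1}^\star(\cdot|\theta)$, which gives $\tilde\eta_t(\log(\psi_t/\psi_t^\star)|\theta)$ for the mixture $\tilde\eta_t = \eta_{t-1}^\star q_t$, and then invoke the pointwise Radon--Nikodym bound $\eta_{t-1}^\star(x_{t-1})/\eta_{t-1}(x_{t-1})\leq M_{t-1}$ to conclude $\tilde\eta_t(x_t)\leq M_{t-1}\eta_t(x_t)$ for all $x_t$. Substituting and summing over $t\in[0:T]$, with the initial term already handled at $M_{-1}=1$, then yields the stated inequality. The main obstacle is precisely this transfer in the final display: the pointwise domination $\tilde\eta_t\leq M_{t-1}\eta_t$ propagates to an inequality on expectations of $\log(\psi_t/\psi_t^\star)$ only when this integrand is nonnegative, so the careful statement should be read either after restricting to the positive part of $\log(\psi_t/\psi_t^\star)$ or in regimes where $\psi_t$ uniformly overestimates $\psi_t^\star$; once this point is addressed, the remaining work is routine bookkeeping of the chain-rule and Jensen applications.
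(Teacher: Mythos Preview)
Your approach is essentially the paper's: write out the log-ratio via \eqref{eqn:decompose_smoothing} and \eqref{eqn:csmc_proposals}, bound each normalizer term $\log\bigl(f_\theta(\psi_t|x_{t-1})/f_\theta(\psi_t^\star|x_{t-1})\bigr)$ by $q_t(\log(\psi_t/\psi_t^\star)\mid x_{t-1},\theta)$ using Jensen (which the paper phrases as the log-sum inequality), and then pass from the mixture $\eta_{t-1}^\star q_t$ to $M_{t-1}\eta_t$ via the pointwise ratio bound. The sign obstacle you flag at the end---that $\tilde\eta_t\leq M_{t-1}\eta_t$ only yields $\tilde\eta_t(\varphi)\leq M_{t-1}\eta_t(\varphi)$ when $\varphi\geq 0$---is genuine, and the paper's own proof glosses over exactly the same point without comment; your instinct to isolate it is correct.
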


The proof is given in Appendix \ref{sec:KL_proposal}. We show in Appendix \ref{sec:ratio_marginal} 
that the constants $(M_t)_{t\in[0:T-1]}$ are finite by 
upper bounding the weight functions in \eqref{eqn:csmc_weights}. 
The next result characterizes the error of our BIF approximation measured in terms of  
the KL upper bound in \eqref{eqn:KL_BIF_bound}. 

\begin{proposition}\label{prop:BIF_error}
For each time $t\in[0:T]$, the BIF approximation in Algorithm \ref{algo:bif_sis} satisfies:
\begin{align}
\eta_{t}^{\star}(\log(\psi_t^{\star}/\psi_t)|\theta) \leq \sum_{k=t}^{T-1}c_{\theta}^\star(\psi_{k+1})
\|\phi_{\theta}^\star \|_{L^2(\eta_k^\star)}\|\Delta_{\theta}\|_{L^2(\eta_k^\star)}, \label{eqn:BIF_logerror1}\\
\eta_t(\log(\psi_t/\psi_t^{\star})|\theta) \leq \sum_{k=t}^{T-1}c_{\theta}(\psi_{k+1})
\|\phi_{\theta} \|_{L^2(\eta_k)}\|\Delta_{\theta}\|_{L^2(\eta_k)},\label{eqn:BIF_logerror2}
\end{align}
with constants
\begin{align*}
& c_{\theta}^\star(\psi_{k})=\frac{1}{\min_{x_{k-1}\in\{0,1\}^N\setminus (0,\ldots,0)}f_{\theta}(\psi_{k}|x_{k-1})}<\infty, 
& c_{\theta}(\psi_{k})=\frac{1}{\min_{i_{k-1}\in[1:N]}\bar{f}_{\theta}(\psi_{k}|i_{k-1})\}}<\infty
\end{align*}
for $k \in [1 : T]$, functions $\Delta_{\theta}(x)=\sum_{n=1}^N|\bar{\alpha}^n(x)-\alpha^n(x)|$, 
\begin{align}
&\phi_{\theta}^\star(x) = \left\{\sum_{i\in[0:N]} \frac{\PB(i;\alpha(x))^2}{\PB(i;\bar{\alpha}(x))^2}\right\}^{1/2}, 
&\phi_{\theta}(x) = \left\{\sum_{i\in[0:N]} \frac{\PB(i;\bar{\alpha}(x))^2}{\PB(i;\alpha(x))^2}\right\}^{1/2}
\end{align}
for $x \in \{0,1\}^N$, and the convention that $\sum_{k=t}^p \{\cdot\}=0$ for $p<t$.
\end{proposition}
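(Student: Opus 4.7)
The plan is a backward induction on $t$ from $T$ down to $0$, decomposing the pointwise log-ratio at each step into a propagation term and a one-step approximation error. The base case $t=T$ is immediate: $\psi_T^\star(x_T) = \Binomial(y_T;I(x_T),\rho)\mathbbm{1}(I(x_T)\geq y_T) = \psi_T(I(x_T))$, so both log-ratios vanish and the empty sum on the right-hand side matches.

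For the inductive step, I would use the recursions $\psi_t^\star(x_t)=g_\theta(y_t|x_t)f_\theta(\psi_{t+1}^\star|x_t)$ and $\psi_t(I(x_t))=g_\theta(y_t|x_t)\bar{f}_\theta(\psi_{t+1}|I(x_t))$; the $g_\theta$ factors cancel (since $g_\theta$ depends on $x_t$ only through $I(x_t)$), and I would split
\[
\log\frac{\psi_t^\star(x_t)}{\psi_t(I(x_t))} \;=\; \underbrace{\log\frac{f_\theta(\psi_{t+1}^\star|x_t)}{f_\theta(\psi_{t+1}|x_t)}}_{(\mathrm{A})} \;+\; \underbrace{\log\frac{f_\theta(\psi_{t+1}|x_t)}{\bar{f}_\theta(\psi_{t+1}|I(x_t))}}_{(\mathrm{B})},
\]
where (A) carries the error inherited from time $t+1$ and (B) is the single-step error from replacing $\alpha(x_t)$ by $\bar{\alpha}(x_t)$ inside the expectation. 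For (B), apply $\log(a/b)\leq(a-b)/b$ and the uniform lower bound $f_\theta(\psi_{t+1}|\cdot)\geq c_\theta^\star(\psi_{t+1})^{-1}$ to reduce the pointwise contribution to a multiple of $\bigl|\sum_{i=0}^N(\PB(i;\alpha(x_t))-\PB(i;\bar{\alpha}(x_t)))\psi_{t+1}(i)\bigr|$. A Cauchy--Schwarz decomposition of this inner sum (pairing the PMF ratio $\PB(i;\alpha(x_t))/\PB(i;\bar{\alpha}(x_t))$ on one side, producing the $\ell^2$ factor $\phi_\theta^\star(x_t)$), combined with the standard Bernoulli-coupling bound $\sum_i|\PB(i;\alpha(x_t))-\PB(i;\bar{\alpha}(x_t))|\leq 2\sum_n|\alpha^n(x_t)-\bar{\alpha}^n(x_t)|=2\Delta_\theta(x_t)$, leads (after integrating against $\eta_t^\star$ and applying Cauchy--Schwarz in $L^2(\eta_t^\star)$) to the $k=t$ summand $c_\theta^\star(\psi_{t+1})\|\phi_\theta^\star\|_{L^2(\eta_t^\star)}\|\Delta_\theta\|_{L^2(\eta_t^\star)}$.

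For (A), the key identity is that $f_\theta(\psi_{t+1}^\star|x_t)/f_\theta(\psi_{t+1}|x_t)$ equals the expectation of $\psi_{t+1}^\star/\psi_{t+1}$ under the proposal kernel $q_t(\cdot|x_t)$ from \eqref{eqn:csmc_proposals}. Combined with the smoothing identity $\eta_{t+1}^\star(x_{t+1})\propto\sum_{x_t}\eta_t^\star(x_t)f_\theta(x_{t+1}|x_t)\psi_{t+1}^\star(x_{t+1})/f_\theta(\psi_{t+1}^\star|x_t)$, the same linearization $\log(a/b)\leq(a-b)/b$ reduces the $\eta_t^\star$-integral of (A) to $\eta_{t+1}^\star(\log(\psi_{t+1}^\star/\psi_{t+1}))$, to which the induction hypothesis applies and supplies the remaining $k\in[t+1:T-1]$ summands. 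The second bound \eqref{eqn:BIF_logerror2} follows by an analogous argument with $(f_\theta,\eta_k^\star,\phi_\theta^\star,c_\theta^\star)$ interchanged with $(\bar{f}_\theta,\eta_k,\phi_\theta,c_\theta)$; finiteness of $c_\theta^\star(\psi_k)$ and $c_\theta(\psi_k)$ is a direct check after excluding the absorbing all-zero configuration from the minimization, where $\alpha$ (respectively $\bar{\alpha}$) can vanish.

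The main obstacle I anticipate is reconciling the denominators across steps: the linearization of (B) naturally puts $\bar{f}_\theta(\psi_{t+1}|I(x_t))$ below, whereas $c_\theta^\star$ is defined via $f_\theta$; similarly, the linearization of (A) puts $f_\theta(\psi_{t+1}|x_t)$ in the denominator while $\eta_{t+1}^\star$ is built with $f_\theta(\psi_{t+1}^\star|x_t)$. Matching these up cleanly along the recursion---and in particular extracting precisely the $\phi_\theta^\star$ form from the one-step Cauchy--Schwarz rather than a cruder $\|\psi_{t+1}\|_\infty$-type factor---is where the technical work concentrates.
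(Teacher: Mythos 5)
Your skeleton matches the paper's: backward induction from $t=T$, the same intermediate quantity $\tilde{\psi}_t(x_t)=g_{\theta}(y_t|x_t)f_{\theta}(\psi_{t+1}|x_t)$ splitting the log-ratio into your (A) and (B), reduction of (A) to the error at time $t+1$, and a final Cauchy--Schwarz in $L^2(\eta_t^\star)$. But the technical device you propose does not deliver the stated bound, and the obstacle you flag at the end is exactly where it breaks. For (A), the needed step is $\log\bigl(f_{\theta}(\psi_{t+1}^\star|x_t)/f_{\theta}(\psi_{t+1}|x_t)\bigr)\le \sum_{x_{t+1}}p_{\theta}(x_{t+1}|x_t,y_{t+1:T})\log\bigl(\psi_{t+1}^\star(x_{t+1})/\psi_{t+1}(I(x_{t+1}))\bigr)$, i.e.\ a change of measure to the $\psi^\star$-twisted kernel, which is the log-sum inequality (Jensen/Gibbs), not the linearization $\log(a/b)\le(a-b)/b$: the linearized quantity $E_{q_t}[\psi_{t+1}^\star/\psi_{t+1}]-1$ is not dominated by the twisted expectation of the log in general (take $\psi_{t+1}^\star/\psi_{t+1}$ equal to a large constant on part of the space), so your claim that the linearization ``reduces the $\eta_t^\star$-integral of (A) to $e_{t+1}$'' is not a valid step. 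The paper applies the log-sum inequality here, after which the twisted kernel combines with $\eta_t^\star$ to give exactly $\eta_{t+1}^\star(\log(\psi_{t+1}^\star/\psi_{t+1})|\theta)$.

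For (B) the same issue produces the denominator mismatch you anticipate: $\log(a/b)\le(a-b)/b$ puts $\bar{f}_{\theta}(\psi_{t+1}|I(x_t))$ below, whereas \eqref{eqn:BIF_logerror1} requires $c_{\theta}^\star(\psi_{t+1})$ built from $f_{\theta}(\psi_{t+1}|x_t)$. The log-sum inequality resolves this automatically, since it places the numerator total $f_{\theta}(\psi_{t+1}|x_t)=\sum_i\PB(i;\alpha(x_t))\psi_{t+1}(i)$ in the denominator; then, using $\psi_{t+1}\le 1$, the one-step error is bounded by $c_{\theta}^\star(\psi_{t+1})\,\KL\bigl(\PB(\alpha(x_t))\mid\PB(\bar{\alpha}(x_t))\bigr)$. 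The remaining ingredient you are missing is the paper's Lemma \ref{lemma:l2_poibin}: via $\log x\le x-1$, Cauchy--Schwarz over $i$, and a Parseval/characteristic-function argument giving $\sum_{i}\{\PB(i;\alpha)-\PB(i;\bar{\alpha})\}^2\le\Delta_{\theta}(x)^2$, this KL is bounded by $\phi_{\theta}^\star(x_t)\,\Delta_{\theta}(x_t)$ with constant one. Your proposed combination of a ratio-pairing Cauchy--Schwarz with the $\ell^1$ Bernoulli-coupling bound (factor $2$) does not cohere: the pairing that produces $\phi_{\theta}^\star$ leaves an $\ell^2$ difference of the two PMFs, which the coupling bound does not control, and the coupling route alone yields neither the $\phi_{\theta}^\star$ factor nor the constant in the proposition. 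With the log-sum inequality in both places and Lemma \ref{lemma:l2_poibin}, your outline becomes the paper's proof; without them, both the propagation step and the claimed constants are unjustified.
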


The proof of Proposition \ref{prop:BIF_error} in Appendix \ref{sec:bif_approx}
derives recursive bounds of the approximation errors.   
The crux of our arguments is to upper bound the error of taking conditional expectations 
under the homogeneous probabilities \eqref{eqn:sis_alpha_approx}. This relies on an upper bound of the 
Kullback--Leibler divergence between two Poisson binomial distributions 
that is established in Appendix \ref{sec:compare_poibin}, which may be of independent interest. 
If conditional expectations are further approximated using the translated Poisson approximations, 
one can also employ the results by \citet{cekanavicius2001centered,barbour2002total} to incorporate these errors in our analysis.
The error bounds in \eqref{eqn:BIF_logerror1} and \eqref{eqn:BIF_logerror2} show how the accuracy 
of the BIF approximation depend on our approximation of the success probability 
$\alpha(x)$ via the term $\Delta_{\theta}(x)$. If we decompose 
$\Delta_{\theta}(x)\leq \Delta_{\theta}^{\mathcal{G}}(x) + \Delta_{\theta}^{\lambda} + \Delta_{\theta}^{\gamma}$, where
\begin{align}\label{eqn:Delta_decomposition}
	\Delta_{\theta}^{\mathcal{G}}(x) = \sum_{n=1}^{N}\left| N^{-1}I(x)-\mathcal{D}(n)^{-1}\sum_{m\in\mathcal{N}(n)}x^{m}\right|,\quad
	\Delta_{\theta}^{\lambda} = \sum_{n=1}^{N}|\bar{\lambda}-\lambda^{n}|, \quad
	\Delta_{\theta}^{\gamma}=\sum_{n=1}^{N}|\bar{\gamma}-\gamma^{n}|,
\end{align}
we see the effect of coarse-graining the agent-based model in \eqref{eqn:sis_alpha_approx}. 
In Appendix \ref{sec:finer_BIF_approx}, we discuss how to reduce the errors $\Delta_{\theta}^{\lambda}(x)$ and $\Delta_{\theta}^{\gamma}(x)$.  
By adopting a more fine-grained approximation based on clustering of the infection and recovery rates, 
one can obtain more accurate approximations at the expense of increased computational cost. 

\begin{algorithm}
	\SetAlgoLined
	\KwIn{Parameters $\theta\in\Theta$} 
	compute average infection and recovery rates $\bar{\lambda}=N^{-1}\sum_{n=1}^N\lambda^n$, 
	$\bar{\gamma}=N^{-1}\sum_{n=1}^N\gamma^n$ \\
	compute $\psi_T(i_T)=\Binomial(y_T;i_T, \rho)\mathbbm{1}(i_T \geq y_T)$ for $i_T\in[0:N]$\\
	\For {$t = T-1,\cdots, 0$ and $i_t = 0,\cdots, N$}{
	compute $	\bar{f}_{\theta}(\psi_{t+1}|i_{t}) = \sum_{i_{t+1}=0}^N
	\SB(i_{t+1}; N-i_t, \bar{\lambda} N^{-1}i_t, i_t, 1-\bar{\gamma})\psi_{t+1}(i_{t+1})$\\
	compute $	\psi_{t}(i_{t})=\Binomial(y_{t};i_{t}, \rho)\mathbbm{1}(i_{t} \geq y_{t})\bar{f}_{\theta}(\psi_{t+1}|i_{t})$
	}	
	
	\KwOut{Approximate BIF $(\psi_{t})_{t\in[0:T]}$}
	\caption{Backward information filter approximation for SIS model}\label{algo:bif_sis}
\end{algorithm}

\begin{algorithm}
	\SetAlgoLined
	\KwIn{Parameters $\theta\in\Theta$, approximate BIF $(\psi_{t})_{t\in[0:T]}$ 
	and number of particles $P\in\mathbb{N}$} 	
	compute probabilities $v_0^{(i)} = \PB(i;\alpha_0) \psi_0(i)$ for $i\in[0:N]$\\ 
	set $E_0 = \sum_{i=0}^Nv_0^{(i)}$ and normalize probabilities $V_0^{(i)} = v_0^{(i)} / E_0$ for $i\in[0:N]$\\ 	
	sample $I_0^{(p)}\sim\Cat([0:N],V_0^{(0:N)})$ and $X_0^{(p)} | I_0^{(p)} \sim\CB(\alpha_0, I_0^{(p)})$ for $p\in[1:P]$\\
	compute weights $w_0^{(p)} = w_0(X_0^{(p)})$ using \eqref{eqn:csmc_weights} for $p\in[1:P]$\\
	\For {$t = 1,\cdots, T$ and $p = 1,\cdots, P$}{
		normalize weights $W_{t-1}^{(p)} = w_{t-1}^{(p)} / \sum_{k=1}^Pw_{t-1}^{(k)}$\\
		sample ancestors $A_{t-1}^{(p)}\sim r(\cdot|W_{t-1}^{(1:P)})$\\
		compute probabilities $v_t^{(i,p)} = \PB(i;\alpha(X_{t-1}^{(A_{t-1}^{(p)})})) \psi_t(i)$ for $i\in[0:N]$ \\ 	
		set $E_t^{(p)}= \sum_{i=0}^Nv_t^{(i,p)}$ and normalize probabilities $V_t^{(i,p)} = v_t^{(i,p)} / E_t^{(p)}$ for $i\in[0:N]$\\ 	
		sample $I_t^{(p)}\sim\Cat([0:N],V_t^{(0:N,p)})$ and $X_t^{(p)} | I_t^{(p)} \sim\CB(\alpha(X_{t-1}^{(A_{t-1}^{(p)})}), I_t^{(p)})$\\
		compute weights $w_t^{(p)} = w_t(X_t^{(p)})$ using \eqref{eqn:csmc_weights} \\
	}
	
	\KwOut{Marginal likelihood estimator $\hat{p}_{\theta}(y_{0:T})=\prod_{t=0}^TP^{-1}\sum_{p=1}^Pw_t^{(p)}$, states 
	$(X_t^{(p)})_{(t,p)\in[0:T]\times[1:P]}$ and ancestors $(A_t^{(p)})_{(t,p)\in[0:T-1]\times[1:P]}$}
	\caption{Controlled sequential Monte Carlo for SIS model}\label{algo:csmc_sis}
\end{algorithm}

\subsection{Numerical illustration of sequential Monte Carlo methods}
\label{sub:sis_illustrate_smc}
We now illustrate the behaviour of the above SMC methods on simulated data. 
We consider a population of $N=100$ agents that are fully connected for $T=90$ time steps. 
The agent covariates $w^n=(w_1^n,w_2^n)$ are taken as $w_1^n=1$ and sampled from 
$w_2^n\sim\mathcal{N}(0,1)$ independently for $n\in[1:N]$. 
Given these covariates, we simulate data from model 
\eqref{eqn:latentprocess_parameters}-\eqref{eqn:sis_alpha_hetero} and \eqref{eqn:static_Y} 
with the parameters $\beta_0=(-\log(N-1),0)$, $\beta_{\lambda}=(-1,2)$, $\beta_{\gamma}=(-1,-1)$ and $\rho=0.8$. 

The top panel of Figure \ref{fig:sis_illustrate_ess} illustrates the performance of these SMC methods 
at the data-generating parameter (DGP), measured in terms of the effective sample size (ESS) criterion
\citep{kong1994sequential}. The ESS at time $t\in[0:T]$, defined in terms of the normalized 
weights $(W_t^{(p)})_{p\in[1:P]}$ as $1/\sum_{p=1}^P(W_t^{(p)})^2$, measures the adequacy of the 
importance sampling approximation at each step. 
We consider two implementations of the controlled SMC in Algorithm \ref{algo:csmc_sis}: 
cSMC1 employs proposal distributions that are defined by the BIF approximation in Algorithm \ref{algo:bif_sis}, 
while cSMC2 relies on translated Poisson approximations of the SumBin distributions in Algorithm \ref{algo:bif_sis} 
to lower the computational cost. This lowers the run-time of the BIF approximation from $0.35$ to $0.12$ second, 
which are insignificant relative to the cost of cSMC for a population size of $N=100$. 
As expected, the APF performs better than the BPF by taking the next observation into account, and  
cSMC does better than APF by incorporating information from the entire observation sequence. 
Furthermore, the faster BIF approximation does not result in a noticeable loss of cSMC performance 
due to the accuracy of the translated Poisson approximations with $N=100$ agents. 
Although the performance of BPF seems adequate in this simulated data setting, the middle and bottom 
panels of Figure \ref{fig:sis_illustrate_ess} reveal that its particle
approximation can collapse whenever 
there are smaller or larger observation counts. In contrast, the performance of
APF and cSMC appear to be more robust to such informative observations.

\begin{figure}[htbp]
	\centering
    \includegraphics[width=1.0\textwidth]{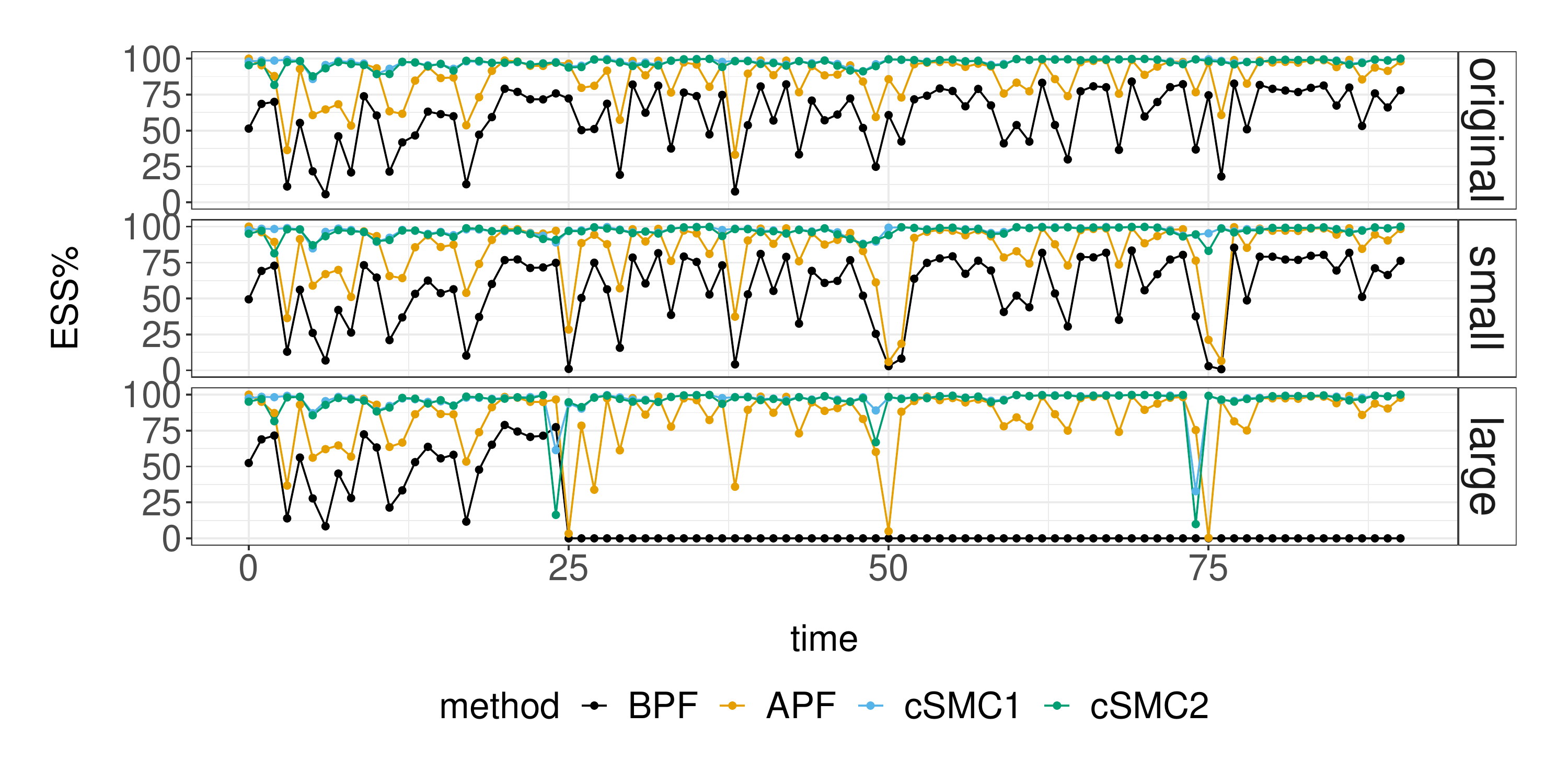}
	\caption{Effective sample size of various SMC methods with $P=512$ particles when considering 
simulated observations (\emph{top}), when observations at $t\in\{25,50,75\}$ are replaced by $\lfloor y_t/2\rfloor$ (\emph{middle}), or $\min(2y_t,N)$ (\emph{bottom}).}
	\label{fig:sis_illustrate_ess}
\end{figure}

Next we examine the performance of these SMC methods in terms of marginal likelihood estimation. 
Table \ref{tab:sis_pf_variance} displays the variance of the log-marginal likelihood estimator at two parameter sets,
and its average cost measured as run-time that were estimated using $100$ independent repetitions of each method. 
At the DGP, it is apparent that the cSMC estimators achieve the asymptotic regime  
of $P\rightarrow\infty$ earlier than BPF and APF, which seem to require at least $P=512$ particles. 
Based on the largest value of $P$ that we considered, the asymptotic variance of APF, cSMC1 and cSMC2 was found to be 
$29$, $155$ and $115$ times smaller relative to BPF, respectively. 
As the cost of APF and cSMC was approximately 
$19$ times more expensive than BPF in our implementation, we see that APF, cSMC1 and cSMC2 are respectively $1.5$, $8$ and $6$ 
times more efficient than BPF at the DGP. We can expect these efficiency gains to be more significant as we move away 
from the DGP. To illustrate this, we consider another parameter set which has $\beta_{\lambda}=(-3,0)$ and keeps 
all other parameters at the DGP. 
Although this is a less likely set of parameters as the log-marginal likelihood is approximately $232$ lower than the DGP, 
adequate marginal likelihood estimation is crucial when employing SMC methods within particle MCMC algorithms for parameter inference. 
In this case, we found that the BPF marginal likelihood estimates could collapse to zero 
for the values of $P$ that are considered in Table \ref{tab:sis_pf_variance}.  
In contrast, APF and cSMC would not suffer from this issue by construction. 
By increasing the number of BPF particles to $P=262,144$ and comparing its performance to 
APF, cSMC1 and cSMC2 with $P=2048$ particles, we find that BPF is respectively $9$, $76$ and $42$ times less efficient 
at this parameter set. Lastly, Figure \ref{fig:sis_viance_rho} illustrates the comparison of SMC methods 
as the parameter $\rho$ varies and all other parameters fixed at the DGP.


\begin{table}
\centering
\small
\begin{tabular}{c|cc|ccc|ccccc}
\multicolumn{1}{c}{} & \multicolumn{2}{c}{BPF} & \multicolumn{3}{c}{APF} & \multicolumn{2}{c}{cSMC1} & \multicolumn{2}{c}{cSMC2}\\
\hline
$P$ & DGP & Cost & DGP & Non-DGP & Cost & DGP & Non-DGP & DGP & Non-DGP & Cost  \\
 & Var &  (sec) & Var & Var & (sec) & Var & Var & Var & Var & (sec)\\
\hline \hline
64 & 4.32 & 0.09 & 0.281 & 71.88 & 1.49 & 0.0696 & 13.52 & 0.0779 & 18.44 & 1.46\\
128 & 2.39 & 0.17 & 0.154 & 39.52 & 2.95 & 0.0285 & 8.62 & 0.0382 & 8.48 & 2.88\\
256 & 1.67 & 0.33 & 0.110 & 26.86 & 5.85 & 0.0164 & 4.11 & 0.0190 & 6.88 & 5.72\\
512 & 0.88 & 0.63 & 0.056 & 18.98 & 11.72 & 0.0087 & 3.57 & 0.0105 & 5.03 & 11.41\\
1024 & 0.55 & 1.25 & 0.026 & 13.38 & 23.46 & 0.0049 & 2.05 & 0.0046 & 3.41 & 22.83\\
2048 & 0.31 & 2.49 & 0.011 & 9.93 & 47.48 & 0.0020 & 1.15 & 0.0027 & 2.07 & 45.97\\
\end{tabular}
\caption{Variance of log-marginal likelihood estimator and its average cost (in seconds) using different number of particles $P$ 
and various SMC methods. The parameter sets considered are the data generating parameters (DGP) and a modification of the 
DGP with $\beta_{\lambda}=(-3,0)$ (labelled as non-DGP). The cost of cSMC1 and cSMC2 are comparable and the run-times 
to compute the BIF approximations (0.35 and 0.12 second for cSMC1 and cSMC2) are not reported in this table.}
\label{tab:sis_pf_variance}
\end{table}

\begin{figure}[htbp]
\centering
        \includegraphics[width=0.5\textwidth]{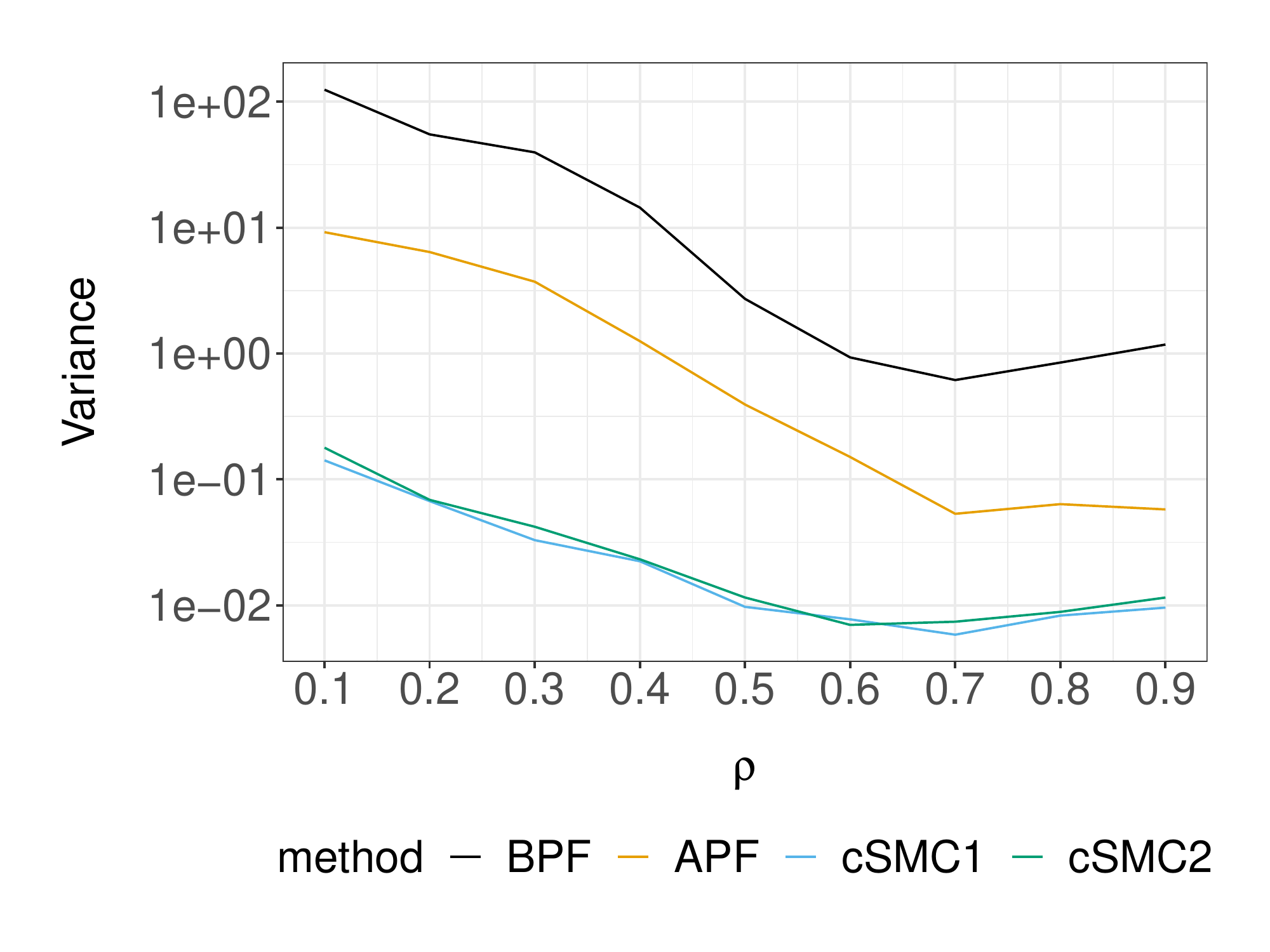}\includegraphics[width=0.5\textwidth]{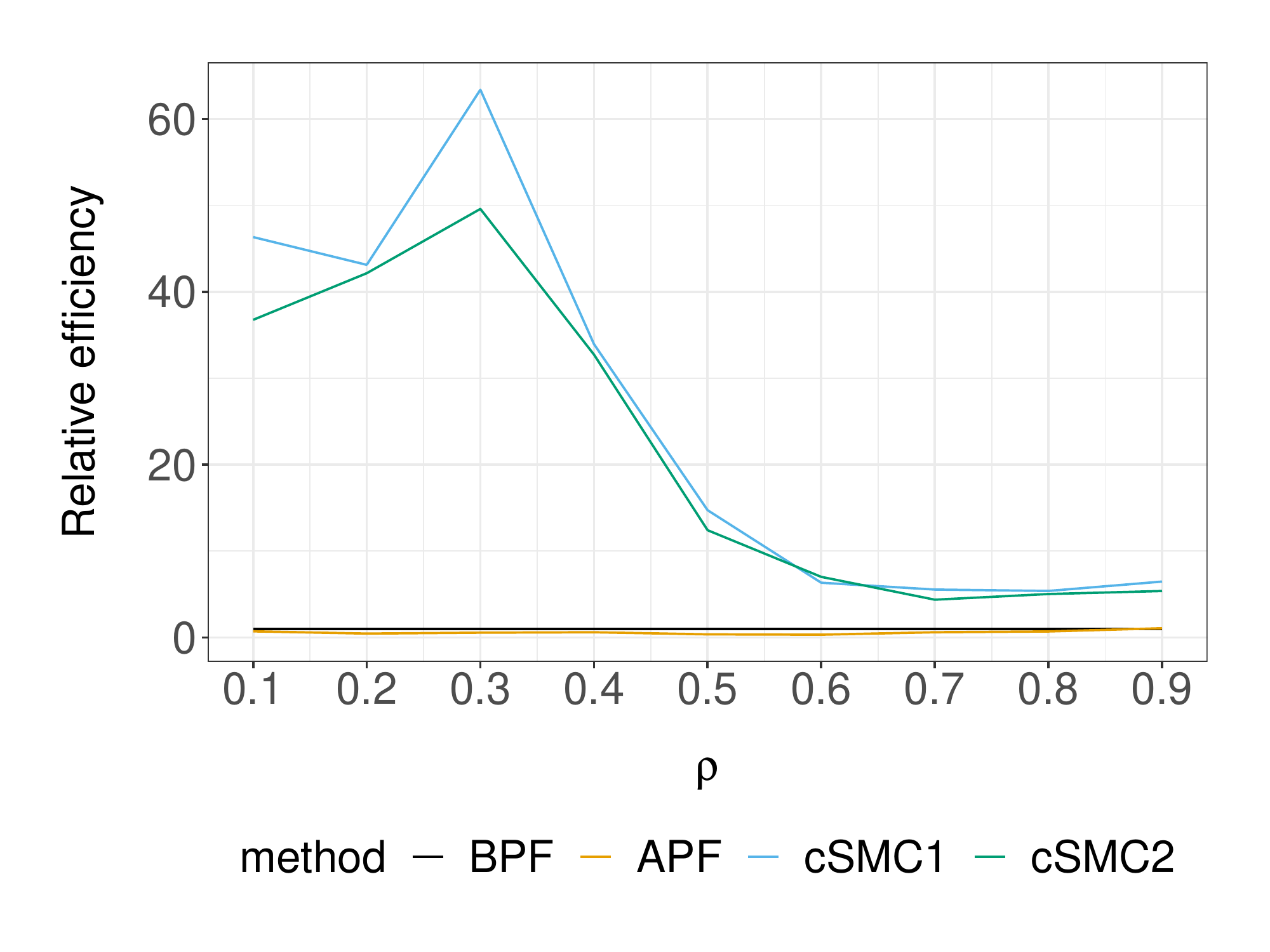}
\caption{Variance of log-marginal likelihood estimator (\emph{left}) and relative efficiency (\emph{right}) of various SMC methods compared 
to the BPF with $P=512$ particles, as $\rho$ varies and the other parameters fixed at the DGP. }
\label{fig:sis_viance_rho}
\end{figure}

\subsection{Parameter and state inference}
\label{sub:sis_parameter_inference}

We first concern ourselves with the behaviour of the marginal likelihood 
and the maximum likelihood estimator (MLE) $\arg\max_{\theta\in\Theta}p_{\theta}(y_{0:T})$ 
as the number of observations $T\rightarrow\infty$. 
This is illustrated with our running simulated dataset from Section~\ref{sub:sis_illustrate_smc}.
Figure~\ref{fig:sis_likelihood_hetero} plots the log-likelihood as a function 
of $\beta_{\lambda}=(\beta_{\lambda}^1,\beta_{\lambda}^2)$ 
or $(\beta_{\lambda}^2,\beta_{\gamma}^2)$ with the other parameters fixed at their data generating values, estimated using cSMC with $P=64$ particles. 
These plots reveal the complex behavior of the likelihood functions induced by agent-based SIS models. 
Furthermore, we see that the likelihood concentrates more around the DGP as $T$ increases, 
and that the MLE can recover the DGP when $T$ is sufficiently large. 

\begin{figure}[htbp]
\centering
        \includegraphics[width=1.0\textwidth]{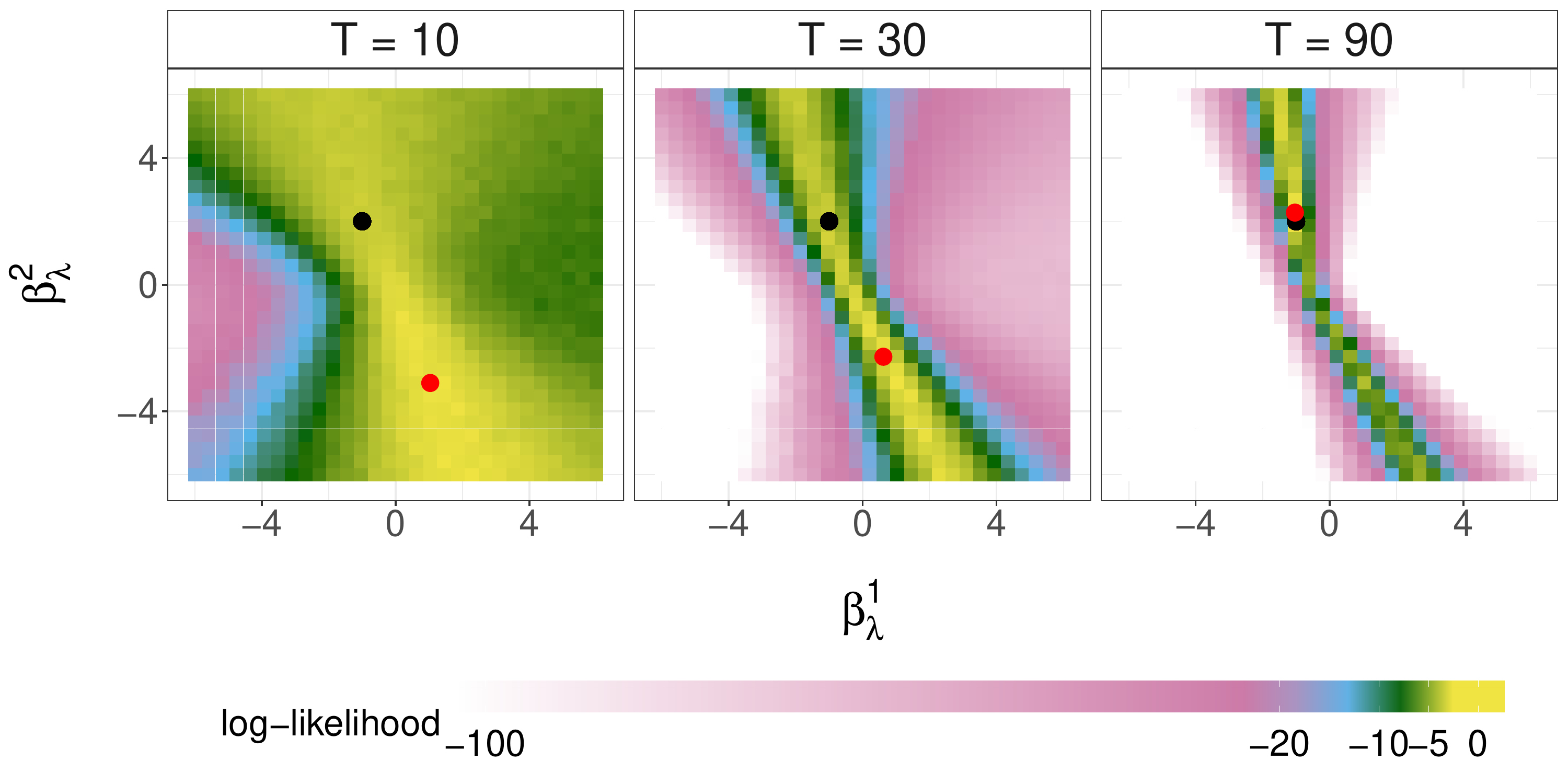}
        \includegraphics[width=1.0\textwidth]{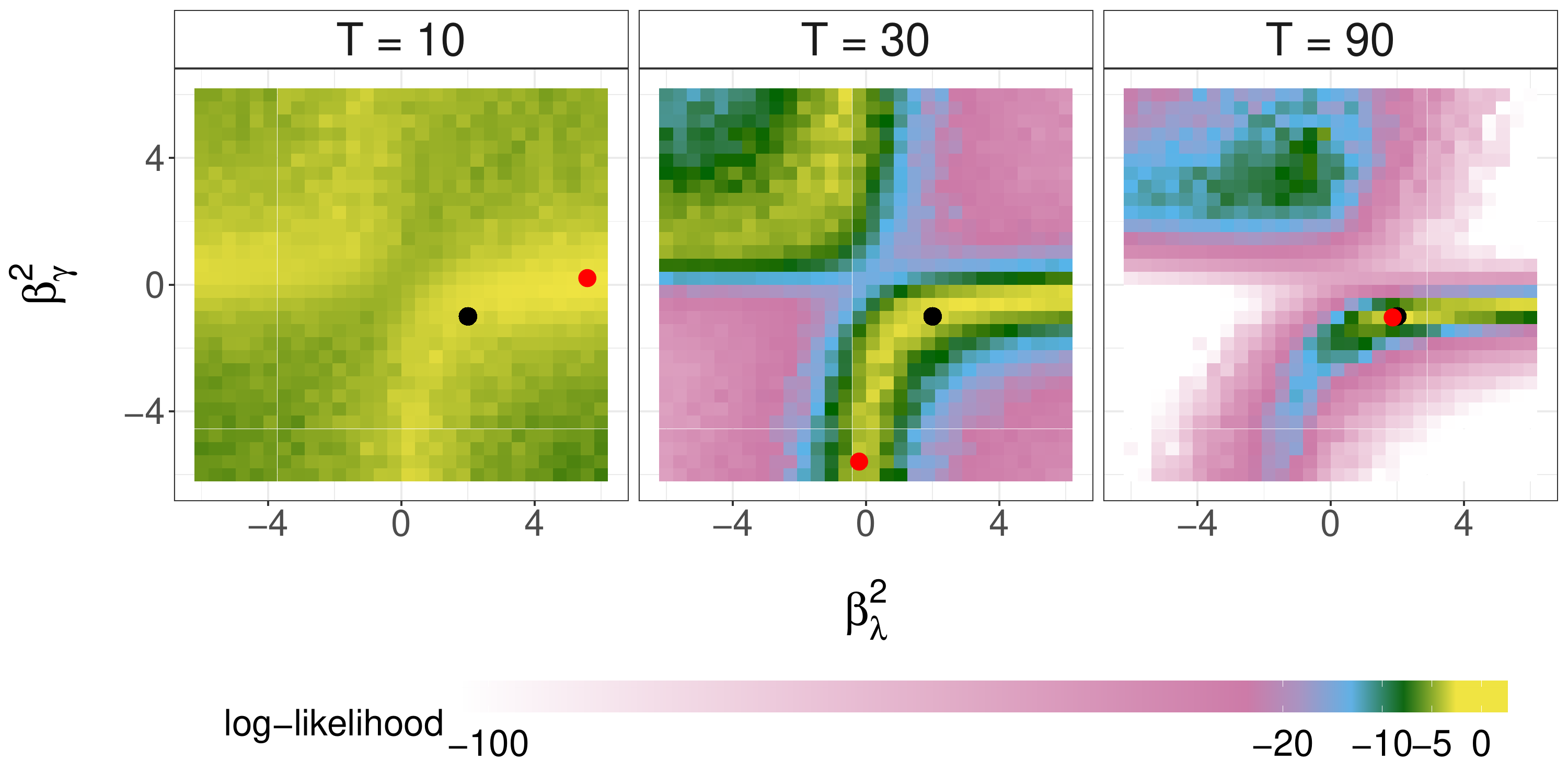}
\caption{Estimated log-likelihood as a function of $\beta_{\lambda}=(\beta_{\lambda}^1,\beta_{\lambda}^2)$ (\emph{first row})
or $(\beta_{\lambda}^2,\beta_{\gamma}^2)$ (\emph{second row}) 
with the other parameters fixed at the DGP given $t\in\{10,30,90\}$ observations. 
For ease of plotting, the log-likelihood values were translated so that the MLE (\emph{red dot}) attains a maximum of zero. 
The data generating parameters of $\beta_{\lambda}=(-1,2)$ or $(\beta_{\lambda}^2,\beta_{\gamma}^2)=(2,-1)$  are shown as a black dot.}
\label{fig:sis_likelihood_hetero}
\end{figure}

By building on the SMC methods in Sections \ref{sub:sis_apf} and \ref{sub:sis_controlled_smc},
one can construct a stochastic gradient ascent scheme 
or an expectation-maximization algorithm to approximate the MLE.
We refer readers to \citet{kantas2015particle} for a comprehensive review of such approaches. 
In the Bayesian framework, our interest is on the posterior distribution 
\begin{align}\label{eqn:sis_joint_posterior}
	p(\theta,x_{0:T}|y_{0:T}) = p(\theta|y_{0:T}) p_{\theta}(x_{0:T}|y_{0:T}) 
	\propto p(\theta)p_{\theta}(x_{0:T},y_{0:T}),
\end{align}
where $p(d\theta)=p(\theta)d\theta$ is a given prior distribution on the parameter space $\Theta$. 
We employ particle marginal Metropolis--Hastings (PMMH) to sample from the posterior distribution. 
Following the discussion in Section \ref{sub:sis_illustrate_smc}, we will choose cSMC to construct a more efficient PMMH chain.

We now illustrate our inference method on the simulated data setup of Section \ref{sub:sis_illustrate_smc}. 
We adopt a prior distribution that assumes the parameters are independent with 
$\beta_0,\beta_{\lambda},\beta_{\gamma}\sim\mathcal{N}(0,9)$ and $\rho\sim\text{Uniform}(0,1)$. 
All MH parameter updates employ a Normal random walk proposal transition on 
the $(\beta_0,\beta_{\lambda},\beta_{\gamma},\log(\rho/(1-\rho)))$-space, 
with a standard deviation of $0.2$ 
to achieve suitable acceptance probabilities. 
We use $P=128$ particles in the cSMC algorithm within PMMH and
we run the PMMH algorithm for $100,000$ iterations after a burn in of $5000$ iterations. 
Using these posterior samples, we infer the ratios $R_0^n = \lambda^n / \gamma^n$, 
which can be understood as the reproductive number of agent $n\in[1:N]$. 
In the left panel of Figure~\ref{fig:sis_pmcmc_posterior_envelope}, 
we display the estimated posterior medians and $95\%$ credible sets,  
as well as the data generating values. 
Although the posterior median estimates are similar across agents, 
there is large posterior uncertainty for agents with small or large 
data generated ratios. 
To visualize how $(R_0^n)_{n\in[1:N]}$ is distributed in the population, 
we estimate histograms that take parameter uncertainty into account, 
illustrate the results in the right panel of Figure~\ref{fig:sis_pmcmc_posterior_envelope}. 
The posterior median estimates yields a histogram that is more concentrated 
than its data generating counterpart. 

\begin{figure}[htbp]
	\centering
	\includegraphics[width=.5\textwidth]{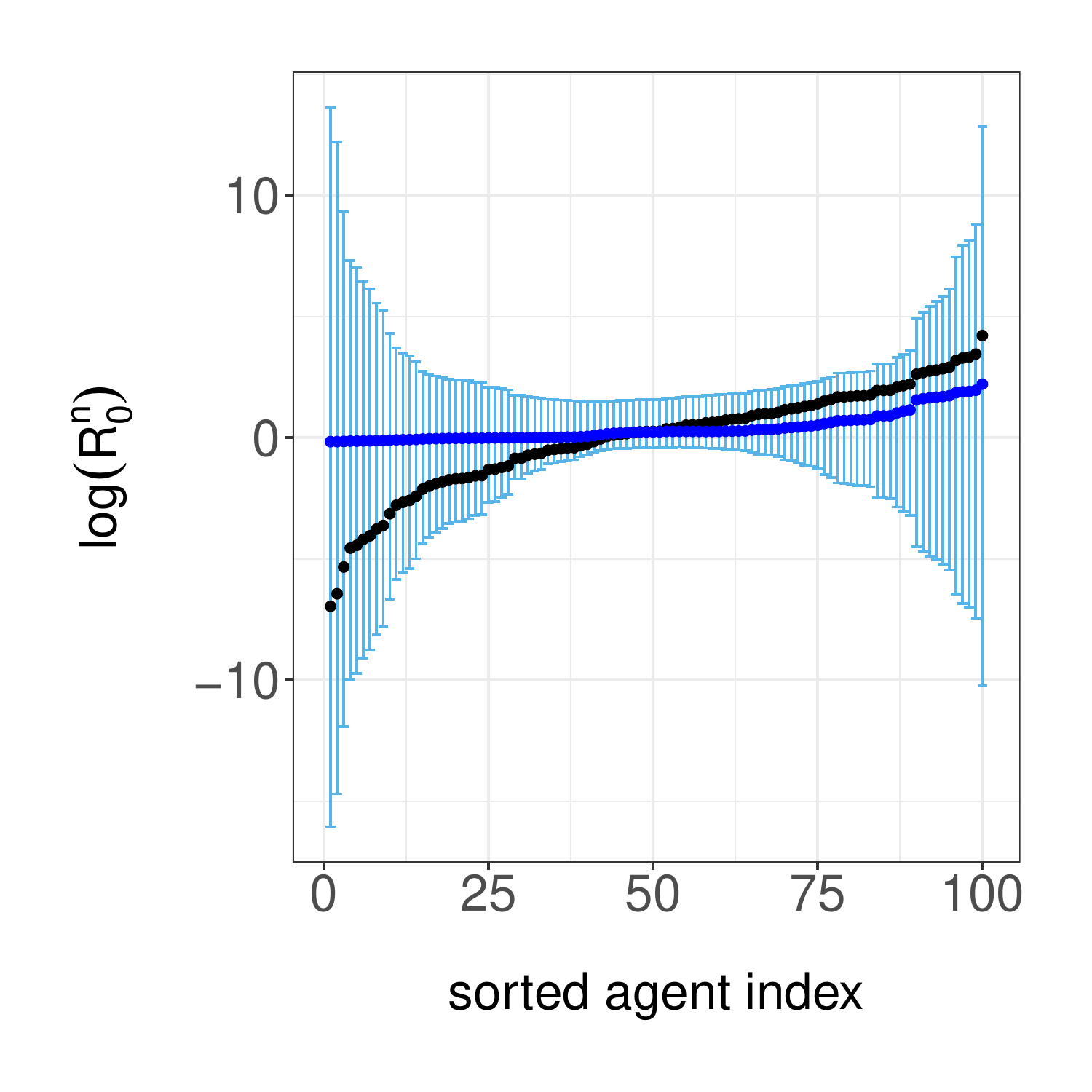}\includegraphics[width=.5\textwidth]{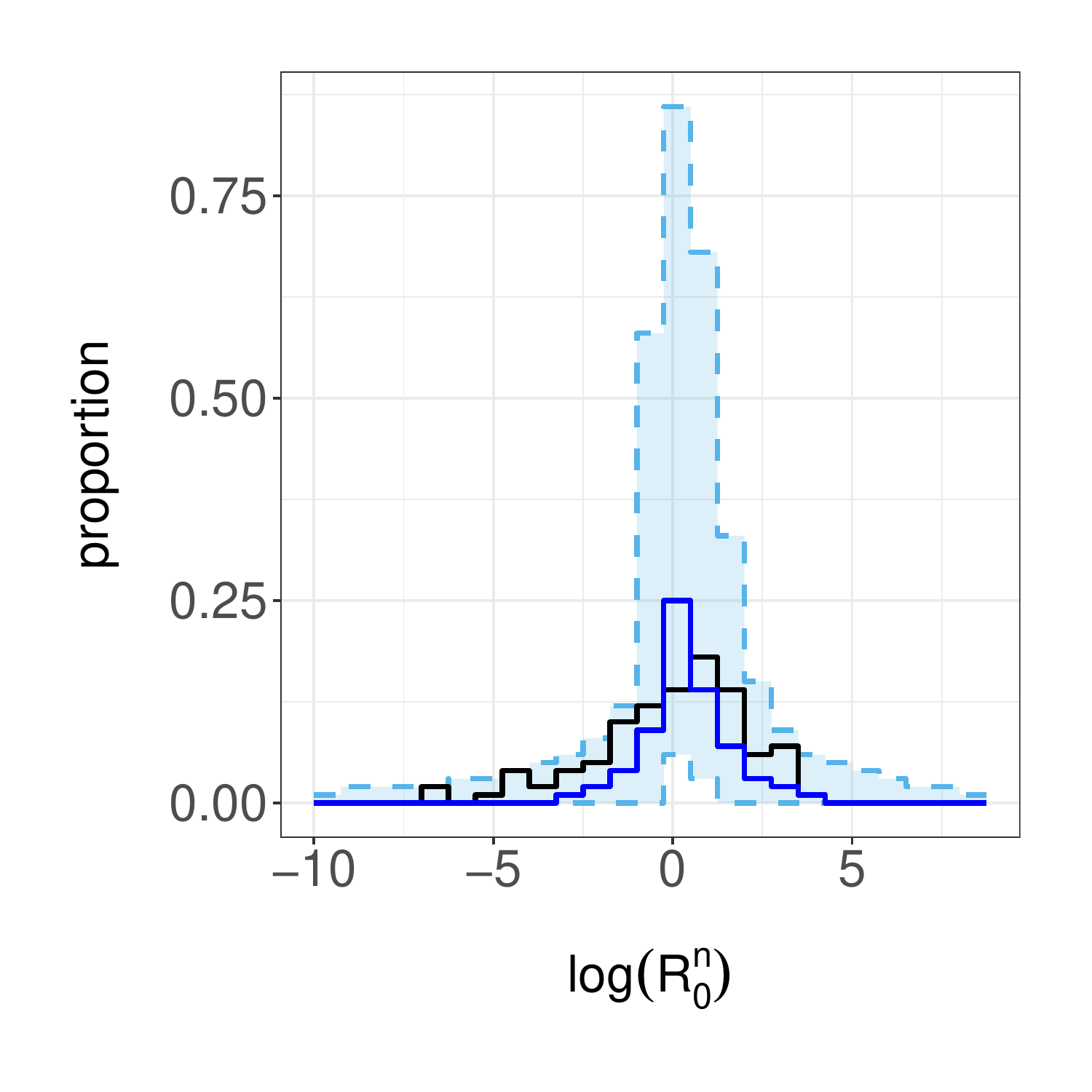}
	\caption{Posterior estimates of the reproductive number $R_0^n = \lambda^n / \gamma^n$ of each agent $n\in[1:N]$ (\emph{left}) 
	and its distribution in the population (\emph{right}). 
	Posterior medians are shown in blue and the $95\%$ posterior credible sets in light blue. 
	Data generating values are illustrated in black.	
	}
	\label{fig:sis_pmcmc_posterior_envelope}
\end{figure}

Lastly, we examine the predictive performance of the model when relatively few observations are available. 
As illustrated in Figure~\ref{fig:sis_prediction}, we assume access to the first $t=30$ observations (black dots) 
and predict the rest of the time series up to time $T=90$ (grey dots) using the posterior predictive distribution 
$p(y_{t+1:T}|y_{0:t}) = \int_{\Theta}\sum_{x_t\in\{0,1\}^N} p_{\theta}(y_{t+1:T}|x_t) p(\theta,x_t|y_{0:t})$. 
By simulating trajectories from the posterior predictive (grey lines), we obtain the model predictions and
uncertainty estimates in Figure~\ref{fig:sis_prediction}. 

\begin{figure}[htbp]
	\centering
	\includegraphics[width = \textwidth]{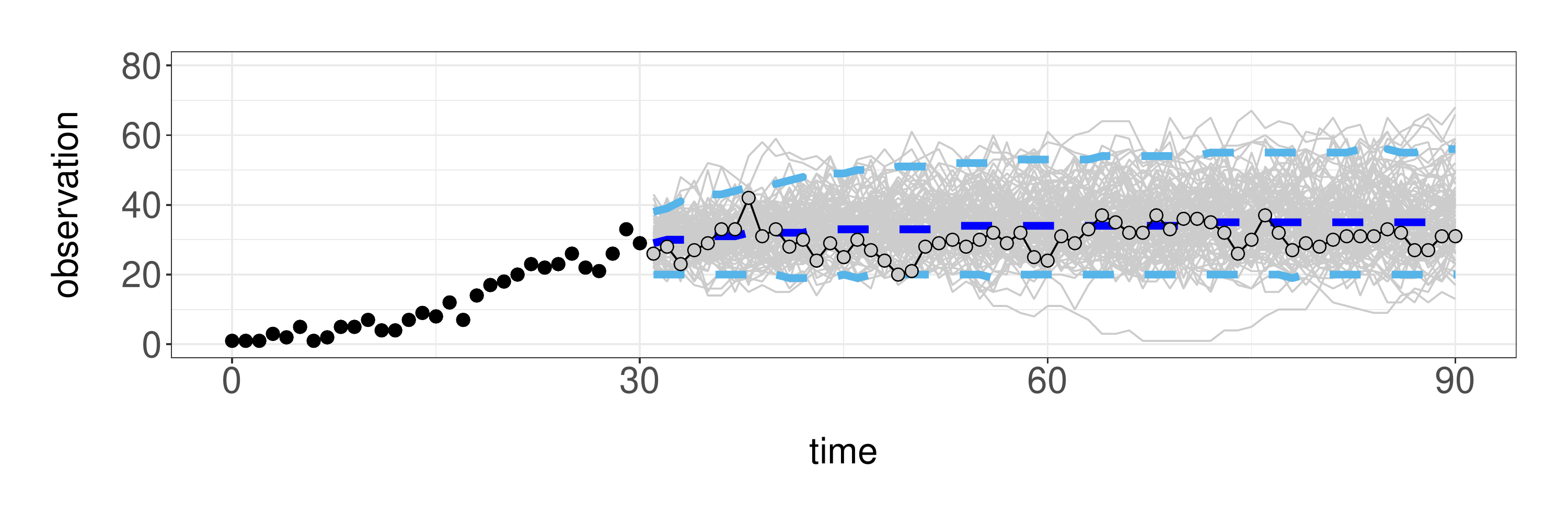}
	\caption{Observation sequence plotted as dots that are coloured initially in black followed by  
	grey from time $t=30$ to $T=90$. 
	Some sampled trajectories from the posterior predictive distribution $p(y_{t+1:T}|y_{0:t})$ 
	are depicted as grey lines. 
	The blue dashed line shows the medians over time; the lower and upper light blue dashed lines 
	correspond to the $2.5\%$ and $97.5\%$ quantiles respectively. }
	\label{fig:sis_prediction}
\end{figure}

\section{Susceptible-Infected-Recovered model}\label{sec:sir}

We consider a susceptible-infected-recovered (SIR) model, 
where agents become immune to a pathogen after they recover from an infection. 
The ``recovered'' status of an agent shall be encoded by a state of $2$. 
Given a population configuration $x=(x^n)_{n\in[1:N]}\in\{0,1,2\}^N$, 
we will write 
$S^n(x)=\mathbbm{1}(x^n=0)$, $I^n(x)=\mathbbm{1}(x^n=1)$, $R^n(x)=\mathbbm{1}(x^n=2)$ 
to indicate the status of agent $n\in[1:N]$, and $S(x)=\sum_{n=1}^NS^n(x)$, $I(x)=\sum_{n=1}^NI^n(x)$, 
$R(x)=\sum_{n=1}^NR^n(x)$ to count the number of agents in each state. 
Under the assumption of a closed population, we have $S(x)+I(x)+R(x)=N$. 

The time evolution of the population $(X_t)_{t\in[0:T]}$ is now modelled by a Markov chain on $\{0,1,2\}^N$, 
i.e. the specification in \eqref{eqn:dynamic_X} with 
\begin{align}\label{eqn:sir_dynamic_X}
  \mu_{\theta}(x_0) = \prod_{n=1}^N\Cat(x_0^n;[0:2],\alpha_0^n), \quad  
  f_{\theta}(x_t |x_{t-1})= \prod_{n=1}^N\Cat(x_t^n;[0:2],\alpha^n(x_{t-1})),\quad t\in[1:T]. 
\end{align}
The above probabilities are given by $\alpha_0^n = (\alpha_{0,S}^n, \alpha_{0,I}^n, \alpha_{0,R}^n) = 
(1-\alpha_{0,I}^n, \alpha_{0,I}^n, 0)$ 
and $\alpha^n(x_{t-1}) = (\alpha_S^n(x_{t-1}), \alpha_I^n(x_{t-1}), \alpha_R^n(x_{t-1}))$ where 
\begin{align}\label{eqn:sir_prob}
  &\alpha_S^n(x_{t-1}) = S^n(x_{t-1})\left(1- \lambda^n \mathcal{D}(n)^{-1}\sum_{m\in\mathcal{N}(n)}I^m(x_{t-1})\right), \notag\\
  &\alpha_I^n(x_{t-1}) = S^n(x_{t-1})\lambda^n \mathcal{D}(n)^{-1}\sum_{m\in\mathcal{N}(n)}I^m(x_{t-1}) + I^n(x_{t-1})(1-\gamma^n), \\
  &\alpha_R^n(x_{t-1}) = I^n(x_{t-1})\gamma^n + R^n(x_{t-1}),\notag
\end{align}
which satisfies $\alpha_S^n(x_{t-1}) + \alpha_I^n(x_{t-1}) + \alpha_R^n(x_{t-1}) = 1$. 
Note that $\alpha_R^n(x_{t-1})=1$ if $R^n(x_{t-1}) = 1$, which reflects the above-mentioned immunity. 
Just like the SIS model, the agents' initial infection probabilities, infection and recovery rates $(\alpha_{0,I}^n,\lambda^n,\gamma^n)_{n\in[1:N]}$ 
are specified with \eqref{eqn:latentprocess_parameters}, and the observation model for the 
number of infections reported over time $(Y_t)_{t\in[0:T]}$ is~\eqref{eqn:static_Y}. 

We consider again SMC approximations of the marginal likelihood 
and smoothing distribution 
of the resulting hidden Markov model on $\{0,1,2\}^N$. 
The BPF can be readily implemented in $\bigo(NTP)$ cost, 
but suffers from the difficulties discussed in Section \ref{sub:sis_apf}. 
To obtain better performance, we discuss how to implement the 
fully adapted APF in Section \ref{sub:sir_apf}
and adapt our cSMC construction in Section \ref{sub:sir_csmc}. 
Alternative MCMC approaches such as those detailed in Appendix~\ref{sub:mcmc} 
could also be considered. 

\subsection{Auxiliary particle filter}
\label{sub:sir_apf}
We recall that implementing the fully adapted APF requires one to sample from proposals 
$q_0(x_0|\theta)=p_{\theta}(x_0|y_0), q_t(x_t|x_{t-1},\theta)=p_{\theta}(x_t|x_{t-1},y_t)$ for $t\in[1:T]$, and 
evaluate the weight functions $w_0(x_0)=p_{\theta}(y_0)$ and $w_t(x_{t-1})=p_{\theta}(y_t|x_{t-1})$ for $t\in[1:T]$. 
 At the initial time, we compute the marginal likelihood $p_{\theta}(y_0)$ as 
 \begin{align}
  p_{\theta}(y_0) = \sum_{i_0=0}^N \PB(i_0;\alpha_{0,I}) \Binomial(y_0; i_0, \rho) \mathbbm{1}(i_0\geq y_0)
 \end{align}
 where $\alpha_{0,I}=(\alpha_{0,I}^n)_{n\in[1:N]}$, and sample from 
 \begin{align}\label{eqn:sir_apf_initial}
  &p_{\theta}(x_0,i_0,i_0^{1:N}|y_0) = p_{\theta}(i_0|y_0)p_{\theta}(i_0^{1:N}|i_0)p_{\theta}(x_0|i_0^{1:N}) \\
  &= \frac{\PB(i_0;\alpha_{0,I}) \Binomial(y_0; i_0, \rho) 
  \mathbbm{1}(i_0 \geq y_0)}{p_{\theta}(y_0)}\CB(i_0^{1:N};\alpha_{0,I}, i_0)\prod_{n=1}^N\Cat(x_0^n;[0:2],\nu_0(i_0^n))\notag 
 \end{align}
 with $\nu_0(i_0^n)=(1-i_0^n,i_0^n,0)$, which admits the posterior of agent states $p_{\theta}(x_0|y_0)$ as its marginal distribution. 
 Similarly, for time step $t\in[1:T]$, we compute the predictive likelihood $p_{\theta}(y_t|x_{t-1})$ as 
 \begin{align}
  p_{\theta}(y_t|x_{t-1}) = \sum_{i_{t}=0}^N \PB(i_{t};\alpha_I(x_{t-1})) \Binomial(y_t; i_{t}, \rho) \mathbbm{1}(i_{t}\geq y_t)
 \end{align}
 where $\alpha_I(x_{t-1})=(\alpha_I^n(x_{t-1}))_{n\in[1:N]}$, and sample from 
 \begin{align}\label{eqn:sir_apf_transition}
  &p_{\theta}(x_t,i_t,i_t^{1:N}|x_{t-1},y_t) = p_{\theta}(i_t|x_{t-1},y_t)p_{\theta}(i_t^{1:N}|x_{t-1},i_t)p_{\theta}(x_t|x_{t-1},i_t^{1:N}) \\
  & = \frac{\PB(i_t;\alpha_I(x_{t-1})) \Binomial(y_t; i_t, \rho) 
  \mathbbm{1}(i_t \geq y_t)}{p_{\theta}(y_t|x_{t-1})}\CB(i_t^{1:N};\alpha_I(x_{t-1}), i_t)
  \prod_{n=1}^N\Cat(x_t^n;[0:2],\nu(x_{t-1}, i_t^n))\notag 
 \end{align}
 with  
 $\nu(x_{t-1}, i_t^n) = ((1-i_t^n)S^n(x_{t-1}),i_t^n,(1-i_t^n)\{I^n(x_{t-1}) + R^n(x_{t-1})\})$,
 which admits $p_{\theta}(x_t|x_{t-1},y_t)$ as its marginal transition. 

By augmenting the infection status of the agents $I^{1:N}(x_t)=(I^n(x_t))_{n\in[1:N]}$, 
the first two steps in \eqref{eqn:sir_apf_initial} and \eqref{eqn:sir_apf_transition} are analogous to 
\eqref{eqn:sis_apf_initial} and \eqref{eqn:sis_apf_transition} for the SIS model, 
and the last step is to identify an agent's state given her augmented current infection status and previous state.
We point out that $\nu_0(i_0^n), \nu(x_{t-1},i_t^n) \in \{(1,0,0),(0,1,0),(0,0,1)\}.$
These expressions for $\nu_0(i_0^n)$ and $\nu(x_{t-1},i_t^n)$ exploit the following facts: 
($i$) susceptible agents either remain susceptible or become infected; 
($ii$) infected agents either remain infected or recover; 
($iii$) agents who have recovered enjoy immunity.  
Algorithm \ref{algo:apf_sir} provides an algorithmic description of the resulting APF, 
which has the same cost as APF for the SIS model. 

\begin{algorithm}
  \SetAlgoLined
  \KwIn{Parameters $\theta\in\Theta$ and number of particles $P\in\mathbb{N}$} 
  compute $v_0^{(i)} = \PB(i;\alpha_{0,I}) \Binomial(y_0; i, \rho) \mathbbm{1}(i\geq y_0)$ for $i\in[0:N]$\\ 
  set $w_0 = \sum_{i=0}^Nv_0^{(i)}$\\ 
  normalize $V_0^{(i)} = v_0^{(i)} / w_0$ for $i\in[0:N]$\\   
  sample $I_0^{(p)}\sim\Cat([0:N],V_0^{(0:N)})$, $I_0^{1:N,(p)} | I_0^{(p)} \sim\CB(\alpha_{0,I}, I_0^{(p)})$ for $p\in[1:P]$\\
  sample $X_0^{n,(p)} | I_0^{n,(p)}\sim\Cat([0:2],\nu_0(I_0^{n,(p)}))$ for $p\in[1:P]$ and $n\in[1:N]$\\
  \For {$t = 1,\cdots, T$ and $p = 1,\cdots, P$}{
    compute $v_t^{(i,p)} = \PB(i;\alpha_I(X_{t-1}^{(p)})) \Binomial(y_t; i, \rho) \mathbbm{1}(i\geq y_t)$ for $i\in[0:N]$\\     
    set $w_t^{(p)} = \sum_{i=0}^Nv_t^{(i,p)}$\\       
    normalize $V_t^{(i,p)} = v_t^{(i,p)} / w_t^{(p)}$ for $i\in[0:N]$\\   
    normalize $W_{t}^{(p)} = w_t^{(p)} / \sum_{k=1}^Pw_t^{(k)}$\\
    sample $A_{t-1}^{(p)}\sim r(\cdot|W_{t}^{(1:P)})$\\
    sample $I_t^{(p)}\sim\Cat([0:N],V_t^{(0:N,A_{t-1}^{(p)})})$ and $I_t^{1:N,(p)} | I_t^{(p)} \sim\CB(\alpha_I(X_{t-1}^{(A_{t-1}^{(p)})}), I_t^{(p)})$\\
    sample $X_t^{n,(p)} \mid X_{t-1}^{(A_{t-1}^{(p)})}, I_t^{n,(p)}\sim \Cat([0:2],\nu(X_{t-1}^{(A_{t-1}^{(p)})}, I_t^{n,(p)}))$ for $n\in[1:N]$
  }
  
  \KwOut{Marginal likelihood estimator $\hat{p}_{\theta}(y_{0:T})=w_0\prod_{t=1}^TP^{-1}\sum_{p=1}^Pw_t^{(p)}$, states 
  $(X_t^{(p)})_{(t,p)\in[0:T]\times[1:P]}$ and ancestors $(A_t^{(p)})_{(t,p)\in[0:T-1]\times[1:P]}$}
  \caption{Auxiliary particle filter for SIR model}\label{algo:apf_sir}
\end{algorithm}

\subsection{Controlled sequential Monte Carlo}
\label{sub:sir_csmc}
We now consider approximation of the BIF~\eqref{eqn:backward_info_filter} to construct a proposal 
distribution approximating~\eqref{eqn:decompose_smoothing}. 
At the terminal time $T$, it suffices to compute
$\psi_T(i_T)=\Binomial(y_T;i_T, \rho)\mathbbm{1}(i_T \geq y_T)$ for all $i_T\in[0:N]$ to represent 
the function $\psi_T^\star(x_T)$. 
As before, the next iterate $\psi_{T-1}^\star(x_{T-1})$ requires an approximation of
the conditional expectation $f_{\theta}(\psi_{T}|x_{T-1})= \sum_{i_T=0}^N \PB(i_T;\alpha_I(x_{T-1})) \psi_T(i_T)$. 
Following the arguments in \eqref{eqn:sis_alpha_approx}, we approximate $\alpha(x_{T-1})$ by 
\begin{align*}
  \bar{\alpha}^n(x_{T-1}) = (\bar{\alpha}_S^n(x_{T-1}), \bar{\alpha}_I^n(x_{T-1}), \bar{\alpha}_R^n(x_{T-1})),
\end{align*} defined as  
\begin{align}\label{eqn:sir_alpha_approx}
  &\bar{\alpha}_S^n(x_{T-1}) = S^n(x_{T-1})\left(1- \bar{\lambda} N^{-1}I(x_{T-1})\right), \notag\\
  &\bar{\alpha}_I^n(x_{T-1}) = S^n(x_{T-1})\bar{\lambda} N^{-1}I(x_{T-1}) + I^n(x_{T-1})(1-\bar{\gamma}), \\
  &\bar{\alpha}_R^n(x_{T-1}) = I^n(x_{T-1})\bar{\gamma} + R^n(x_{T-1}),\notag
\end{align}
which satisfies $\bar{\alpha}_S^n(x_{T-1}) + \bar{\alpha}_I^n(x_{T-1}) + \bar{\alpha}_R^n(x_{T-1}) = 1$. 
Writing the corresponding Markov transition as 
$\bar{f}_{\theta}(x_T|x_{T-1})=\prod_{n=1}^N\Cat(x_T^n;[0:2],\bar{\alpha}^n(x_{T-1}))$, 
we approximate the conditional expectation $f_{\theta}(\psi_{T}|x_{T-1})$ by 
\begin{align}\label{eqn:sir_condexp_approx}
  \bar{f}_{\theta}(\psi_{T}|S(x_{T-1}),I(x_{T-1})) = \sum_{i_T=0}^N 
  \SB(i_T; S(x_{T-1}), \bar{\lambda} N^{-1}I(x_{T-1}), I(x_{T-1}), 1-\bar{\gamma})\psi_T(i_T).
\end{align}
Although \eqref{eqn:sir_condexp_approx} is analogous to \eqref{eqn:sis_condexp_approx} for the SIS model, 
the number of susceptible agents $S(x_{T-1})$ cannot be determined by just knowing the number of infections 
$I(x_{T-1})$ in the SIR model. 
Therefore it is necessary to account for both $S(x_{T-1})$ and $I(x_{T-1})$ in our approximation, i.e. we compute 
\begin{align}
  \psi_{T-1}(s_{T-1},i_{T-1})=\Binomial(y_{T-1};i_{T-1}, \rho)\mathbbm{1}(i_{T-1} \geq y_{T-1})\bar{f}_{\theta}(\psi_{T}|s_{T-1},i_{T-1})
\end{align}
for all $(s_{T-1},i_{T-1})\in[0:N]\times[0:(N-s_{T-1})]$. 
Since the number of agents in the population is fixed, it is also possible to 
work with the variables $I(x_{T-1})$ and $R(x_{T-1})$ instead. 
Subsequently for $t\in[0:T-2]$, we approximate \eqref{eqn:backward_info_filter} by 
\begin{align}
  \psi_{t}(s_{t},i_{t})=\Binomial(y_{t};i_{t}, \rho)\mathbbm{1}(i_{t} \geq y_{t})\bar{f}_{\theta}(\psi_{t+1}|s_{t},i_{t}),
  \quad (s_{t},i_{t})\in[0:N]\times[0:(N-s_{t})],
\end{align}
where $\bar{f}_{\theta}(\psi_{t+1}|s_t,i_t) = \sum_{s_{t+1}=0}^N\sum_{i_{t+1}=0}^{N-s_{t+1}} 
\bar{f}_{\theta}(s_{t+1},i_{t+1}|s_t,i_t)\psi_{t+1}(s_{t+1},i_{t+1})$, 
\begin{align}\label{eqn:sir_summary_transition}
  \bar{f}_{\theta}(s_{t+1},i_{t+1}|s_t,i_t) = 
  \Binomial(s_{t+1};s_t,1-\bar{\lambda}N^{-1}i_t)\Binomial(i_t-i_{t+1}+s_t-s_{t+1};i_t,\bar{\gamma}),
\end{align}
for $(s_{t+1},i_{t+1})\in[0:s_t]\times[(s_t-s_{t+1}):(i_t+s_t-s_{t+1})]$, and zero otherwise. 
The above expression follows from the SIR model structure under homogeneous probabilities \eqref{eqn:sir_alpha_approx}. 
Algorithm \ref{algo:bif_sir} summarizes our approximation of the BIF $(\psi_t)_{t\in[0:T]}$, 
which costs $\bigo(N^4T)$ to compute and $\bigo(N^2T)$ in storage. 

\begin{algorithm}
	\SetAlgoLined
	\KwIn{Parameters $\theta\in\Theta$} 
	  compute $\psi_T(i_T)=\Binomial(y_T;i_T, \rho)\mathbbm{1}(i_T \geq y_T)$ for $i_T\in[0:N]$\\
  \For {$t = T-1,\cdots, 0$, $s_t = 0, \ldots, N$ and $i_t = 0,\cdots, N-s_t$}{
        \eIf {$t=T-1$}{
      compute $ \bar{f}_{\theta}(\psi_{t+1}|s_{t},i_{t}) = \sum_{i_{t+1}=0}^N
      \SB(i_{t+1}; s_{t}, \bar{\lambda} N^{-1}i_{t}, i_{t}, 1-\bar{\gamma})\psi_{t+1}(i_{t+1})$\\
    }{
      compute $ \bar{f}_{\theta}(\psi_{t+1}|s_{t},i_{t}) = \sum_{s_{t+1}=0}^N\sum_{i_{t+1}=0}^{N-s_{t+1}} 
      \bar{f}_{\theta}(s_{t+1},i_{t+1}|i_t,s_t)\psi_{t+1}(s_{t+1},i_{t+1})$\\
    }
  compute $ \psi_{t}(s_{t},i_{t})=\Binomial(y_{t};i_{t}, \rho)\mathbbm{1}(i_{t} \geq y_{t})\bar{f}_{\theta}(\psi_{t+1}|s_{t},i_{t})$
  } 
	\KwOut{Approximate BIF $(\psi_{t})_{t\in[0:T]}$}
	\caption{Backward information filter approximation for SIR model}\label{algo:bif_sir}
\end{algorithm}

We can define our proposal distribution and SMC weight functions 
in the same manner as \eqref{eqn:csmc_proposals} 
and \eqref{eqn:csmc_weights}, respectively. 
Expectations appearing in these SMC weights can be computed as
\begin{align}
  \mu_{\theta}(\psi_0) = \sum_{i_0=0}^N\PB(i_0;\alpha_{0,I})\psi_0(N-i_0,i_0),\quad
  f_{\theta}(\psi_{t}|x_{t-1}) = \sum_{s_t=0}^N\sum_{i_t=0}^{N-s_t}f_{\theta}(s_t,i_t|x_{t-1})\psi_t(s_t,i_t), 
\end{align}
for $t\in[1:T-1]$ and $f_{\theta}(\psi_T|x_{T-1})=\sum_{i_T=0}^N\PB(i_T;\alpha_I(x_{T-1}))\psi_T(i_T)$, where
\begin{align}
  f_{\theta}(s_t,i_t|x_{t-1}) = \PB(s_t;\alpha_S(x_{t-1}))\PB(I(x_{t-1})-i_{t}+S(x_{t-1})-s_t;(I^n(x_{t-1})\gamma^n)_{n\in[1:N]}),
\end{align}
for $(s_{t},i_{t})\in[0:S(x_{t-1})]\times[(S(x_{t-1})-s_{t}):(I(x_{t-1})+S(x_{t-1})-s_{t})]$, and zero otherwise. 
The above expression should be understood as the analogue of \eqref{eqn:sir_summary_transition} 
under the heterogeneous probabilities \eqref{eqn:sir_prob}. 
Sampling from the proposals can be done in a similar manner as the APF in Section \ref{sub:sir_apf}.
At the initial time, we sample from 
\begin{align}
  &q_0(x_0,i_0,i_0^{1:N}|\theta) = q_0(i_0|\theta)q_0(i_0^{1:N}|i_0,\theta)q_0(x_0|i_0^{1:N},\theta) \\
  &= \frac{\PB(i_0;\alpha_{0,I})\psi_0(N-i_0,i_0)}{\mu_{\theta}(\psi_0)}\CB(i_0^{1:N};\alpha_{0,I}, i_0)\prod_{n=1}^N\Cat(x_0^n;[0:2],\nu_0(i_0^n)),\notag 
\end{align}
which admits $q_0(x_0|\theta)$ as its marginal distribution. 
For time $t\in[1:T-1]$, we sample from 
\begin{align}
  &q_t(x_t,i_t,i_t^{1:N}|x_{t-1},\theta) = q_t(i_t|x_{t-1},\theta)q_t(i_t^{1:N}|x_{t-1},i_t,\theta)
  q_t(x_t|x_{t-1},i_t^{1:N},\theta) \\
  & = \frac{\sum_{s_t=0}^{N-i_t}f_{\theta}(s_t,i_t|x_{t-1})\psi_t(s_t,i_t)}{f_{\theta}(\psi_t|x_{t-1})}\CB(i_t^{1:N};\alpha_I(x_{t-1}), i_t)
  \prod_{n=1}^N\Cat(x_t^n;[0:2],\nu(x_{t-1}, i_t^n)),\notag 
\end{align}
and 
\begin{align}
  &q_T(x_T,i_T,i_T^{1:N}|x_{T-1},\theta) = q_T(i_T|x_{T-1},\theta)q_T(i_T^{1:N}|x_{T-1},i_T,\theta)
  q_T(x_T|x_{T-1},i_T^{1:N},\theta) \\
  & = \frac{\PB(i_T;\alpha_I(x_{T-1}))\psi_T(i_T)}{f_{\theta}(\psi_T|x_{T-1})}\CB(i_T^{1:N};\alpha_I(x_{T-1}), i_T)
  \prod_{n=1}^N\Cat(x_T^n;[0:2],\nu(x_{T-1}, i_T^n)),\notag 
\end{align}
which admits $q_t(x_t|x_{t-1},\theta)$ as its marginal transition for all $t\in[1:T]$. 
Algorithm \ref{algo:csmc_sir} details the resulting cSMC method, which costs 
the same as cSMC for the SIS model. 

\begin{algorithm}
  \SetAlgoLined
	\KwIn{Parameters $\theta\in\Theta$, approximate BIF $(\psi_{t})_{t\in[0:T]}$ 
	and number of particles $P\in\mathbb{N}$} 	
	compute probabilities $v_0^{(i)} = \PB(i;\alpha_0) \psi_0(i)$ for $i\in[0:N]$\\ 
  compute $v_0^{(i)} = \PB(i;\alpha_{0,I})\psi_0(N-i,i)$ for $i\in[0:N]$\\
  set $E_0 = \sum_{i=0}^Nv_0^{(i)}$\\ 
  normalize $V_0^{(i)} = v_0^{(i)} / E_0$ for $i\in[0:N]$\\   
  sample $I_0^{(p)}\sim\Cat([0:N],V_0^{(0:N)})$ and $I_0^{1:N,(p)} | I_0^{(p)} \sim\CB(\alpha_{0,I}, I_0^{(p)})$ for $p\in[1:P]$\\
  sample $X_0^{n,(p)} | I_0^{n,(p)}\sim\Cat([0:2],\nu_0(I_0^{n,(p)}))$ for $p\in[1:P]$ and $n\in[1:N]$\\
  compute $w_0^{(p)} = w_0(X_0^{(p)})$ using \eqref{eqn:csmc_weights} for $p\in[1:P]$\\
  \For {$t = 1,\cdots, T$ and $p = 1,\cdots, P$}{
    normalize $W_{t-1}^{(p)} = w_{t-1}^{(p)} / \sum_{k=1}^Pw_{t-1}^{(k)}$\\
    sample $A_{t-1}^{(p)}\sim r(\cdot|W_{t-1}^{(1:P)})$\\
    \eIf {$t<T$}{
      compute $v_t^{(i,p)} = \sum_{s=0}^{N-i}f_{\theta}(s,i|X_{t-1}^{(A_{t-1}^{(p)})})\psi_t(s,i)$ for $i\in[0:N]$\\
      }{
      compute $v_t^{(i,p)} = \PB(i;\alpha_I(X_{t-1}^{(A_{t-1}^{(p)})}))\psi_t(i)$ for $i\in[0:N]$\\
      }     
    set $E_t^{(p)}= \sum_{i=0}^Nv_t^{(i,p)}$\\      
    normalize $V_t^{(i,p)} = v_t^{(i,p)} / E_t^{(p)}$ for $i\in[0:N]$\\   
    sample $I_t^{(p)}\sim\Cat([0:N],V_t^{(0:N,p)})$ and $I_t^{1:N,(p)} | I_t^{(p)} \sim\CB(\alpha_I(X_{t-1}^{(A_{t-1}^{(p)})}), I_t^{(p)})$\\
    sample $X_t^{n,(p)} | X_{t-1}^{(A_{t-1}^{(p)})}, I_t^{n,(p)}\sim \Cat([0:2],\nu(X_{t-1}^{(A_{t-1}^{(p)})}, I_t^{n,(p)}))$ for $n\in[1:N]$\\
    compute $w_t^{(p)} = w_t(X_t^{(p)})$ using \eqref{eqn:csmc_weights} \\
  }
  
  \KwOut{Marginal likelihood estimator $\hat{p}_{\theta}(y_{0:T})=\prod_{t=0}^TP^{-1}\sum_{p=1}^Pw_t^{(p)}$, states 
  $(X_t^{(p)})_{(t,p)\in[0:T]\times[1:P]}$ and ancestors $(A_t^{(p)})_{(t,p)\in[0:T-1]\times[1:P]}$}
  \caption{Controlled sequential Monte Carlo for SIR model}\label{algo:csmc_sir}
\end{algorithm}

Our proposal distribution $q_{\theta}(x_{0:T})$ also satisfies 
the Kullback--Leibler upper bound in Proposition \ref{prop:KL_BIF_bound}, 
with appropriate notational extensions to the state space $\{0,1,2\}^N$. 
The following result is analogous to Proposition \ref{prop:BIF_error} for the SIS model.

\begin{proposition}\label{prop:sir_BIF_error}
For each time $t\in[0:T]$, the BIF approximation in Algorithm \ref{algo:bif_sir} satisfies:
\begin{align}
\eta_t^\star(\log(\psi_t^\star/\psi_t)|\theta)\leq \sum_{k=t}^{T-1}c^{\star}_{\theta}(\psi_{k+1})
\{\|\phi^{\star}_{\theta} \|_{L^2(\eta_k^\star)}\|\Delta_{\theta,S}\|_{L^2(\eta_k^\star)} 
+ \|\xi^{\star}_{\theta} \|_{L^2(\eta_k^\star)}\|\Delta_{\theta,R}\|_{L^2(\eta_k^\star)}\},\label{eqn:sir_BIF_logerror1}\\
\eta_t(\log(\psi_t/\psi_t^\star)|\theta)\leq \sum_{k=t}^{T-1}c_{\theta}(\psi_{k+1})
\{\|\phi_{\theta} \|_{L^2(\eta_k)}\|\Delta_{\theta,S}\|_{L^2(\eta_k)} 
+ \|\xi_{\theta} \|_{L^2(\eta_k)}\|\Delta_{\theta,R}\|_{L^2(\eta_k)}\},\label{eqn:sir_BIF_logerror2}
\end{align}
for $t\in[0:T]$. 
The constants are $c^{\star}_{\theta}(\psi_{k})=\{\min_{x_{k-1}\in\{0,1,2\}^N\setminus(0,\ldots,0)}
{f}_{\theta}(\psi_{k}|x_{k-1})\}^{-1}<\infty$, 
$c_{\theta}(\psi_{k})=\{\min_{(s_{k-1},i_{k-1}) \in [0:N-i_{k-1}]\times[1:N]}
\bar{f}_{\theta}(\psi_{k}|s_{k-1},i_{k-1})\}^{-1}<\infty$ for $k\in[1:T]$, and the functions are 
$\Delta_{\theta,S}(x)=\sum_{n=1}^N|\bar{\alpha}_S^n(x)-\alpha_S^n(x)|$,  
$\Delta_{\theta,R}(x)=\sum_{n=1}^N|\bar{\alpha}_R^n(x)-\alpha_R^n(x)|$,  
\begin{align}
  \phi^{\star}_{\theta}(x) = \left\lbrace\sum_{s=0}^{S(x)}\frac{\PB(s;{\alpha}_S(x))^2}{\PB(s;\bar{\alpha}_S(x))^2}\right\rbrace^{1/2}, 
  \quad \xi^{\star}_{\theta}(x) = \left\lbrace \sum_{r=0}^{I(x)}\frac{\PB(r;(I^n(x){\gamma^n})_{n\in[1:N]})^2}
  {\PB(r;(I^n(x)\bar{\gamma})_{n\in[1:N]})^2}\right\rbrace^{1/2}, \\
  \phi_{\theta}(x) = \left\lbrace\sum_{s=0}^{S(x)}\frac{\PB(s;\bar{\alpha}_S(x))^2}{\PB(s;\alpha_S(x))^2}\right\rbrace^{1/2}, 
  \quad \xi_{\theta}(x) = \left\lbrace \sum_{r=0}^{I(x)}\frac{\PB(r;(I^n(x)\bar{\gamma})_{n\in[1:N]})^2}
  {\PB(r;(I^n(x)\gamma^n)_{n\in[1:N]})^2}\right\rbrace^{1/2}, 
\end{align}
for $x\in\{0,1,2\}^N$.
\end{proposition}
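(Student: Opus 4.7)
The plan is to mirror the recursive strategy used for Proposition \ref{prop:BIF_error} in the SIS setting, adapted to the richer state space $\{0,1,2\}^N$ of the SIR model. First I would express the ratio $\psi_t^\star/\psi_t$ via the backward recursion: since both $\psi_t^\star$ and $\psi_t$ are obtained by multiplying the common factor $g_\theta(y_t|\cdot)$ into a conditional expectation of the next iterate (under $f_\theta$ versus $\bar f_\theta$), we have $\log(\psi_t^\star/\psi_t) = \log\bigl(f_\theta(\psi_{t+1}^\star|x_t)\bigr) - \log\bigl(\bar f_\theta(\psi_{t+1}|s_t,i_t)\bigr)$ on the appropriate supports. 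Splitting this into a propagation term $\log(f_\theta(\psi_{t+1}^\star|x_t)/f_\theta(\psi_{t+1}|x_t))$ and a one-step approximation term $\log(f_\theta(\psi_{t+1}|x_t)/\bar f_\theta(\psi_{t+1}|s_t,i_t))$ would let me unroll a telescoping bound down to time $T$, identifying the summation over $k\in[t:T-1]$ that appears in \eqref{eqn:sir_BIF_logerror1}--\eqref{eqn:sir_BIF_logerror2}.

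The propagation term is handled exactly as in the SIS proof: using $\log u \le u-1$ for the $\eta_t^\star$ bound (and $\log u \le u-1$ with the roles reversed, absorbed through $M_{t-1}$-type arguments for the $\eta_t$ bound), I reduce it to an $L^1$-style quantity that inherits the constants $c_\theta^\star(\psi_{k+1})$ and $c_\theta(\psi_{k+1})$ from the minima of the conditional expectations. The key new ingredient is the one-step approximation term. Under the coarse-graining \eqref{eqn:sir_alpha_approx}, the SIR transition factorises as a product of two independent pieces: a Poisson Binomial on the survival-of-susceptible pool (governed by $\alpha_S$ vs.\ $\bar\alpha_S$) and an independent Poisson Binomial on the recovery pool (governed by $(I^n\gamma^n)_n$ vs.\ $(I^n\bar\gamma)_n$), with the I-population on day $t+1$ being the deterministic complement. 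So I would write the KL-type discrepancy between $f_\theta(\cdot|x)$ and $\bar f_\theta(\cdot|x)$ as the sum of the KL discrepancies of these two Poisson Binomials, using additivity of KL under product measures.

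Next I would invoke the KL upper bound between Poisson Binomial distributions established in Appendix \ref{sec:compare_poibin}. Applied to the susceptible pool this produces a term controlled by $\sum_n|\bar\alpha_S^n(x)-\alpha_S^n(x)|=\Delta_{\theta,S}(x)$, and applied to the recovery pool it produces one controlled by $\sum_n|I^n(x)\bar\gamma - I^n(x)\gamma^n|$, which is exactly $\Delta_{\theta,R}(x)$ in view of $\alpha_R^n(x)=I^n(x)\gamma^n+R^n(x)$ and $\bar\alpha_R^n(x)=I^n(x)\bar\gamma+R^n(x)$. The Radon--Nikodym-type ratios of the two Poisson Binomials produce, after a Cauchy--Schwarz split, the $L^2$-norms of $\phi_\theta^\star,\xi_\theta^\star$ (respectively $\phi_\theta,\xi_\theta$) against $\Delta_{\theta,S},\Delta_{\theta,R}$, matching the two terms appearing on the right-hand sides of \eqref{eqn:sir_BIF_logerror1}--\eqref{eqn:sir_BIF_logerror2}.

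The main obstacle I anticipate is the bookkeeping for the joint $(s,i)$ representation: $\psi_{t+1}$ lives on the $(s,i)$-grid while $\psi_{t+1}^\star$ lives on $\{0,1,2\}^N$, so to compare $f_\theta(\psi_{t+1}^\star|x_t)$ with $\bar f_\theta(\psi_{t+1}|s_t,i_t)$ I must first push $\psi_{t+1}^\star$ through the sufficient statistic $(S,I)$ under the true transition, and show that this projection is harmless because every step of the backward recursion only ever multiplies by a function of $(s,i)$. Once that symmetry is in place, the factorisation into independent S$\to$I and I$\to$R Poisson Binomial pieces is clean, and the rest is analogous to the SIS argument; taking the convention $\sum_{k=t}^{p}=0$ for $p<t$ handles the $t=T$ base case automatically.
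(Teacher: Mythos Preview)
Your plan matches the paper's proof: the same two-term decomposition via the intermediate $\tilde\psi_t(x_t)=g_\theta(y_t|x_t)f_\theta(\psi_{t+1}|x_t)$, the same recursion $e_t\le e_{t+1}+\text{(one-step error)}$, and the same key SIR-specific step of writing the one-step error as a KL between the joint laws of $(s_{t+1},i_{t+1})$ under $f_\theta$ and $\bar f_\theta$, factorising into two independent Poisson binomial pieces (the $S$-pool and the $I\!\to\!R$ pool), then applying Lemma~\ref{lemma:l2_poibin} and Cauchy--Schwarz.

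Two small corrections to your description, which are bookkeeping rather than strategy: the \emph{propagation} term $\eta_t^\star(\log(\psi_t^\star/\tilde\psi_t)|\theta)$ is bounded via the log-sum inequality (not $\log u\le u-1$) and yields exactly $e_{t+1}$ with no extra constant; the constants $c_\theta^\star(\psi_{k+1})$ appear only in the \emph{approximation} term, from upper-bounding $f_\theta(\psi_{t+1}|x_t)^{-1}$ before passing to the KL. No $M_{t-1}$-type change-of-measure is needed here; the second inequality \eqref{eqn:sir_BIF_logerror2} is obtained by the symmetric argument with the roles of $f_\theta,\bar f_\theta$ and $\psi^\star,\psi$ swapped.
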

The arguments in the proof of Proposition \ref{prop:sir_BIF_error} in Appendix \ref{sec:bif_approx} are similar to 
Proposition \ref{prop:BIF_error}, with some modifications tailored to the SIR model. 
This result shows the dependence of the BIF approximation error on our approximation of 
the transition probability $\alpha(x)$ via the terms $\Delta_{\theta,S}(x)$ and $\Delta_{\theta,R}(x)$. 
As before, we can decompose 
$\Delta_{\theta,S}(x)\leq \Delta_{\theta}^{\mathcal{G}}(x) + \Delta_{\theta}^{\lambda}$ and 
$\Delta_{\theta,R}(x)\leq \Delta_{\theta}^{\gamma}$ into the elements defined 
in \eqref{eqn:Delta_decomposition}. 
One can also obtain more fine-grained approximations using clustering in the spirit of Appendix \ref{sec:finer_BIF_approx}. 

\section{Discussion}
\label{sec:discussion}

Although agent-based models have been widely used as a simulation paradigm, 
statistical inference for these models has not received as much attention, due 
in part to the computational challenges involved. 
We have focused on agent-based models of disease transmission, 
and presented new SMC methods that can estimate their likelihood more efficiently 
than the standard BPF. 

Our proposed methodology can be extended in various directions. 
Instead of the binomial model in \eqref{eqn:static_Y}, other observation models 
such as a negative binomial distribution could be 
considered. We could also adapt the controlled SMC methodology to 
handle the case where observations are only available at a collection of time instances, 
and we can expect further relative gains in such settings. 
Future work could consider settings where the difference in infection counts between 
successive time steps is observed. 

We view our contribution as a step towards inference for larger classes of
agent-based models, and that some of our contributions might be useful beyond
the models considered here.  We hope to motivate further work on alleviating
the computational burden of inference in agent-based models, and that the
removal of some of the computational bottlenecks might encourage further
investigation on the statistical properties of these models. 



\subsubsection*{Acknowledgments}
This work was funded by CY Initiative of Excellence (grant ``Investissements
d'Avenir'' ANR-16-IDEX-0008). 
Pierre E. Jacob gratefully acknowledges support
by the National Science Foundation through grants DMS-1712872 and DMS-1844695.

\small 
\bibliographystyle{plainnat}
\bibliography{ref}

\appendix
\section{Poisson binomial distributions\label{sec:poibin}}

\subsection{Recursive definition of Poisson binomial probabilities\label{sec:recursive-poibin}}
The following provides a derivation of the recursion \eqref{eqn:q-recursion} for the function 
$q(i,n) = \PB(i;\alpha^{n:N}) = \mathbb{P}(\sum_{m=n}^N X^m = i)$ with $X^m\sim\Bernoulli(\alpha^m)$ independently. 
The recursion 
allows the computation of Poisson binomial probabilities in $\mathcal{O}(N^2)$ operations.
Under $X^n \sim \Bernoulli(\alpha^n)$ 
independently for $n\in[1:N]$, the initial conditions are given by  
\begin{equation}
	q(0,n) = \mathbb{P}\left(\sum_{m=n}^NX^m=0\right) = \prod_{m=n}^N\mathbb{P}(X^m=0) = \prod_{m=n}^N (1 - \alpha^m), \quad n\in[1:N],
\end{equation}
in the case of no success, $q(1,N)=\mathbb{P}(X^N=1)=\alpha^N$
where the sum reduces to a single Bernoulli variable, and 
$q(i,n)=0$ for $i>N-n+1$ because a sum of $N-n+1$ Bernoulli variables
cannot be larger than $N-n+1$, in particular $q(i,N)=0$ for all $i\geq 2$. 
For $i\in[1:N]$ and $n\in[1:N-1]$, by conditioning on the value of $X^{n}\in\{0,1\}$, the law of total probability gives 
\eqref{eqn:q-recursion}:
\begin{align}
	q(i,n) &= \mathbb{P}(X^{n}=1)~\mathbb{P}\left(\sum_{m=n}^NX^m=i \mid X^{n}=1\right) ~+~ 
	\mathbb{P}(X^{n}=0)~\mathbb{P}\left(\sum_{m=n}^NX^m=i \mid X^{n}=0\right) \notag\\
	&= \alpha^n~\mathbb{P}\left(\sum_{m=n+1}^NX^m=i-1\right) ~+~ 
	(1-\alpha^n)~\mathbb{P}\left(\sum_{m=n+1}^NX^m=i \right) \\
	&= \alpha^n q(i-1,n+1) + (1-\alpha^n) q(i,n+1). \notag
\end{align}

\subsection{A thinning result}
\label{sub:thinning}
We show that under the static model $X^n \sim \Bernoulli(\alpha^n)$ 
independently for $n\in[1:N]$ and the observation model 
$Y \mid X = x \sim \Binomial(I(x), \rho)$, 
we have $Y\sim\PB(\rho\,\alpha)$ marginally. 
We first note that the
characteristic function of $I(X)=\sum_{n=1}^N X^n \sim\PB(\alpha)$ is given by 
\begin{equation}\label{eqn:poibin_cf}
	\E\left[\exp(i\omega I(X))\right] = \prod_{n=1}^N\E\left[\exp(i\omega X^n)\right] = \prod_{n=1}^N \{\alpha^n\exp(i\omega) + (1-\alpha^n)\}
\end{equation}
for $\omega\in\mathbb{R}$. 
Consider the representation $Y=\sum_{n=1}^IZ^n$ where 
$Z^n\sim\Bernoulli(\rho)$ independently.
Note that the characteristic function of each Bernoulli random variable with success probability $\rho$ is 
$\varphi_Z(\omega) = \E\left[\exp(i\omega Z^n)\right]=\rho\exp(i\omega)+(1-\rho)$.
By the law of iterated expectations, the characteristic function of $Y$ is 
\begin{align}
		\E\left[\exp(i\omega Y)\right] 
		& = \E \left[ \E\left[\exp\left(i \omega \sum_{n=1}^IZ^n\right) \mid I \right] \right] \notag\\
		& = \E\left[\varphi_Z(\omega)^I\right] \notag\\
		& = \prod_{n=1}^N\E\left[\varphi_Z(\omega)^{X^n}\right] \\
		& = \prod_{n=1}^N\{\alpha^n \varphi_Z(\omega) + (1-\alpha^n)\} \notag\\
		& = \prod_{n=1}^N\{\rho\alpha^n\exp(i\omega) + (1-\rho\alpha^n)\}\notag
\end{align}
for $\omega\in\mathbb{R}$. 
By comparing this characteristic function with \eqref{eqn:poibin_cf}, we can 
conclude that $Y\sim\PB(\rho\,\alpha)$. 

\subsection{Comparing two Poisson binomial distributions}\label{sec:compare_poibin}
The following result will be of use in Appendix \ref{sec:csmc_appendix}. 

\begin{lemma}\label{lemma:l2_poibin}
Let $\PB(\alpha)$ and $\PB(\bar{\alpha})$ denote two Poisson binomial distributions 
with probabilities $\alpha=(\alpha^n)_{n\in[1:N]}$ and $\bar{\alpha}=(\bar{\alpha}^n)_{n\in[1:N]}$, respectively. 
The $\ell^2$-norm between these PMFs satisfies 
\begin{align}\label{eqn:l2_poibin}
	\sum_{i\in[0:N]}\left\{ \PB(i;\bar{\alpha}) - \PB(i;\alpha) \right\}^2 \leq \left(\sum_{n=1}^N|\bar{\alpha}^n-\alpha^n|\right)^2.
\end{align}
The Kullback--Leibler divergence from $\PB(\alpha)$ to $\PB(\bar{\alpha})$, defined as 
\begin{align*}
	\KL\left(\PB(\bar{\alpha}) \mid \PB(\alpha)\right) = \sum_{i\in[0:N]}\PB(i;\bar{\alpha}) \log\left(\frac{\PB(i;\bar{\alpha})} {\PB(i;\alpha)}\right),
\end{align*} 
is upper bounded by 
\begin{align}\label{eqn:KL_poibin}
	\KL\left(\PB(\bar{\alpha}) \mid \PB(\alpha)\right)\leq \left(\sum_{i\in[0:N]}\frac{\PB(i;\bar{\alpha})^2} {\PB(i;\alpha)^2}\right)^{1/2}
	\left(\sum_{n=1}^N|\bar{\alpha}^n-\alpha^n|\right).
\end{align}
\end{lemma}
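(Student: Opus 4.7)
The plan is to handle the two inequalities separately: the $\ell^2$ bound (\ref{eqn:l2_poibin}) through a coupling argument on the underlying Bernoulli vectors, and the KL bound (\ref{eqn:KL_poibin}) by combining the elementary inequality $\log x \leq x-1$ with a Cauchy--Schwarz step that invokes (\ref{eqn:l2_poibin}) itself.

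For (\ref{eqn:l2_poibin}), I would construct a common-randomness coupling of the two Bernoulli vectors. Let $U^1,\ldots,U^N$ be i.i.d.\ uniform on $(0,1)$, and set $X^n = \mathbbm{1}(U^n \leq \alpha^n)$, $\bar X^n = \mathbbm{1}(U^n \leq \bar\alpha^n)$. Then $X^n\sim\Bernoulli(\alpha^n)$ and $\bar X^n\sim\Bernoulli(\bar\alpha^n)$, and $\mathbb{P}(X^n\neq \bar X^n)=|\alpha^n-\bar\alpha^n|$. Writing $I=\sum_n X^n \sim \PB(\alpha)$ and $\bar I=\sum_n \bar X^n\sim \PB(\bar\alpha)$, a union bound gives $\mathbb{P}(I\neq \bar I)\leq \sum_n |\alpha^n-\bar\alpha^n|$. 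For each $i$,
\begin{equation*}
|\PB(i;\bar\alpha)-\PB(i;\alpha)|=|\mathbb{E}[\mathbbm{1}(\bar I=i)-\mathbbm{1}(I=i)]|\leq \mathbb{P}(I\neq \bar I),
\end{equation*}
so the same sum bound applies to both $\ell^\infty$ and (via $\sum |a_i|=2\,\mathrm{TV}$ and $\sum a_i^2 \leq \|a\|_\infty \|a\|_1$) to the $\ell^2$ distance between the two PMFs, up to an unavoidable numerical constant. (As a sanity check for $N=1$ this approach produces $2(\alpha^1-\bar\alpha^1)^2$, so the bound in (\ref{eqn:l2_poibin}) should be read as holding up to such a constant that does not affect the later KL estimate.)

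For (\ref{eqn:KL_poibin}), the starting point is $\log x \leq x-1$ applied to $x=\PB(i;\bar\alpha)/\PB(i;\alpha)$, which yields
\begin{equation*}
\KL\bigl(\PB(\bar\alpha) \mid \PB(\alpha)\bigr)=\sum_i \PB(i;\bar\alpha)\log\frac{\PB(i;\bar\alpha)}{\PB(i;\alpha)}\leq \sum_i \frac{\PB(i;\bar\alpha)}{\PB(i;\alpha)}\bigl(\PB(i;\bar\alpha)-\PB(i;\alpha)\bigr).
\end{equation*}
Applying the Cauchy--Schwarz inequality to the right-hand side, viewed as $\sum_i \{\PB(i;\bar\alpha)/\PB(i;\alpha)\}\cdot\{\PB(i;\bar\alpha)-\PB(i;\alpha)\}$, produces the product
\begin{equation*}
\Bigl(\sum_i \tfrac{\PB(i;\bar\alpha)^2}{\PB(i;\alpha)^2}\Bigr)^{1/2}\Bigl(\sum_i\bigl(\PB(i;\bar\alpha)-\PB(i;\alpha)\bigr)^2\Bigr)^{1/2},
\end{equation*}
and the second factor is controlled by (\ref{eqn:l2_poibin}), giving the claim.

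The main subtlety is part~1. A straight coupling plus Cauchy--Schwarz only produces an $\ell^1$ bound; passing to $\ell^2$ is what costs a multiplicative constant, and one has to be honest that this bound is not dimension-free in a sharper sense. An alternative route I would consider if a tight constant were needed is a telescoping argument: swap coordinates one at a time, obtaining $\PB(i;\alpha^{(k)})-\PB(i;\alpha^{(k-1)})=(\bar\alpha^k-\alpha^k)\{p_k(i-1)-p_k(i)\}$ where $p_k$ is the leave-one-out PMF, then combine the $N$ telescoping terms by Minkowski in $\ell^2$. The rest of the argument is routine: the KL step only uses the standard $\log x\leq x-1$ trick and one application of Cauchy--Schwarz, and it inherits whatever constant arises in (\ref{eqn:l2_poibin}).
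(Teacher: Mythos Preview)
Your KL argument is exactly the paper's: apply $\log x\leq x-1$, then Cauchy--Schwarz, then invoke the $\ell^2$ bound.

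For the $\ell^2$ bound you take a genuinely different route. The paper works in Fourier: it applies Parseval's identity to pass to the $L^2([-\pi,\pi])$ norm of the difference of characteristic functions, telescopes the product $\prod_n(\alpha^n e^{i\omega}+1-\alpha^n)$ one factor at a time, and applies Minkowski (triangle inequality) in $L^2$ to obtain $\sum_n|\bar\alpha^n-\alpha^n|\bigl\{(2\pi)^{-1}\int_{-\pi}^\pi|e^{i\omega}-1|^2\,d\omega\bigr\}^{1/2}$. Your primary argument instead couples the Bernoulli vectors to get the total-variation (hence $\ell^1$ and $\ell^\infty$) bound $\sum_n|\bar\alpha^n-\alpha^n|$, then combines $\|a\|_2^2\leq\|a\|_\infty\|a\|_1$. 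The telescoping-on-PMFs alternative you sketch is essentially the paper's argument without the Fourier transform: the identity $\PB(i;\alpha^{(k)})-\PB(i;\alpha^{(k-1)})=(\bar\alpha^k-\alpha^k)\{p_k(i-1)-p_k(i)\}$ is precisely what the characteristic-function telescope becomes after inverting Parseval.

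Your $N=1$ sanity check is well taken and in fact applies to the paper too: the left-hand side equals $2(\bar\alpha^1-\alpha^1)^2$, so the inequality as stated needs a factor of $2$. The paper's own computation incurs this same factor, since $(2\pi)^{-1}\int_{-\pi}^\pi|e^{i\omega}-1|^2\,d\omega=2$; the $\sqrt{2}$ is silently dropped at the final equality. All three approaches --- Parseval, your coupling, and your PMF telescope --- deliver the bound with constant $2$, and none can give $1$. As you note, this is harmless downstream: the constant enters the KL bound and the later BIF error bounds only multiplicatively.
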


\begin{proof}
To bound the $\ell^2$-norm between two Poisson binomial PMFs, we rely on Parseval's identity
\begin{align}\label{eqn:parseval}
\sum_{i\in[0:N]}\left\{ \PB(i;\bar{\alpha}) - \PB(i;\alpha) \right\}^2
=\frac{1}{2\pi}\int_{-\pi}^{\pi}\left\{ \bar{\varphi}(\omega;\bar{\alpha})-\varphi(\omega;\alpha\right\}^{2}d\omega,	
\end{align}
where 
\begin{equation}\label{eqn:charfuncpoibins}
	\bar{\varphi}(\omega;\bar{\alpha}) = \prod_{n=1}^N \{\bar{\alpha}^n\exp(i\omega) + (1-\bar{\alpha}^n)\},\quad
	\varphi(\omega;\alpha) = \prod_{n=1}^N \{\alpha^n\exp(i\omega) + (1-\alpha^n)\},
\end{equation}
denote the characteristic functions of $\PB(\bar{\alpha})$ and $\PB(\alpha)$, respectively
(as in \eqref{eqn:poibin_cf}).
Noting that each term of the products in \eqref{eqn:charfuncpoibins} is at most one, by repeated applications of 
triangle inequality on the decomposition 
\begin{align}
	&\bar{\varphi}(\omega;\bar{\alpha}) - \varphi(\omega;\alpha) \\
	&= \sum_{n=1}^N (\bar{\alpha}^n-\alpha^n) (\exp(i\omega)-1)
	\prod_{m=1}^{n-1}\{\alpha^m\exp(i\omega) + (1-\alpha^m)\}
	\prod_{p=n+1}^N\{\bar{\alpha}^p\exp(i\omega) + (1-\bar{\alpha}^p)\}\notag
\end{align}
(with the convention $\prod_{n=m}^p=1$ for $p<m$), we have
\begin{align}
	\left\{\frac{1}{2\pi}\int_{-\pi}^{\pi}\left\{ \bar{\varphi}(\omega;\bar{\alpha})-\varphi(\omega;\alpha)\right\}^{2}d\omega\right\}^{1/2}
	\leq \sum_{n=1}^N \left\{ \frac{1}{2\pi}\int_{-\pi}^{\pi} (\bar{\alpha}^n-\alpha^n)^2 (\exp(i\omega)-1)^2d\omega \right\}^{1/2}
	=\sum_{n=1}^N|\bar{\alpha}^n-\alpha^n|\notag.
\end{align}
Hence \eqref{eqn:l2_poibin} follows by squaring both sides and applying the identity in \eqref{eqn:parseval}. \\

To bound the Kullback--Leibler divergence, we apply the inequality $\log(x)\leq x-1$ for $x>0$ and the Cauchy--Schwarz inequality 
\begin{align}
	&\KL\left(\PB(\bar{\alpha}) \mid \PB(\alpha)\right)
	\leq \sum_{i\in[0:N]}\left\{\frac{\PB(i;\bar{\alpha})} {\PB(i;\alpha)}\right\} \left\{ \PB(i;\bar{\alpha}) - \PB(i;\alpha) \right\}\notag\\
	&\leq \left(\sum_{i\in[0:N]}\frac{\PB(i;\bar{\alpha})^2} {\PB(i;\alpha)^2}\right)^{1/2}
	\left(\sum_{i\in[0:N]}\left\{ \PB(i;\bar{\alpha}) - \PB(i;\alpha) \right\}^2\right)^{1/2}.
\end{align}
Applying the inequality in \eqref{eqn:l2_poibin} leads to \eqref{eqn:KL_poibin}.

\end{proof}

\section{Controlled sequential Monte Carlo}\label{sec:csmc_appendix}
\subsection{Performance of controlled sequential Monte Carlo}\label{sec:KL_proposal}
\begin{proof}[Proof of Proposition \ref{prop:KL_BIF_bound}]
Using the form of the smoothing distribution $p_{\theta}(x_{0:T}|y_{0:T})$ 
in \eqref{eqn:smoothing_distribution_bif} and \eqref{eqn:decompose_smoothing}, 
and the definition of the proposal distribution $q_{\theta}(x_{0:T})$ 
in \eqref{eqn:csmc_proposals}, we have 
\begin{align}\label{eqn:logratio_decomp}
	\log\left(\frac{p_{\theta}(x_{0:T}|y_{0:T})}{q_{\theta}(x_{0:T})}\right) = 
  \log\left(\frac{\mu_{\theta}(\psi_0)}{\mu_{\theta}(\psi_0^\star)}\right) + 
	\sum_{t=0}^T\log\left(\frac{\psi_t^\star(x_t)}{\psi_t(I(x_t))}\right) + 
	\sum_{t=1}^T\log\left(\frac{f_{\theta}(\psi_t|x_{t-1})}{f_{\theta}(\psi_t^\star|x_{t-1})}\right).
\end{align}
By the log-sum inequality 
\begin{align}\label{eqn:LSI_KL1}
	\log\left(\frac{\mu_{\theta}(\psi_0)}{\mu_{\theta}(\psi_0^\star)}\right) 
	&\leq \mu_{\theta}(\psi_0)^{-1}\sum_{x_0\in\{0,1\}^N}\mu_{\theta}(x_0)\psi_0(x_0)
	\log\left(\frac{\psi_0(I(x_0))}{\psi_0^\star(x_0)}\right)
	\leq \eta_0(\log(\psi_0/\psi_0^\star)|\theta)
\end{align}
and 
\begin{align}\label{eqn:LSI_KL2}
	\log\left(\frac{f_{\theta}(\psi_t|x_{t-1})}{f_{\theta}(\psi_t^\star|x_{t-1})}\right)
	&\leq f_{\theta}(\psi_t|x_{t-1})^{-1}\sum_{x_t\in\{0,1\}^N}f_{\theta}(x_t|x_{t-1})\psi_t(I(x_t))
	\log\left(\frac{\psi_t(I(x_t))}{\psi_t^\star(x_t)}\right)\notag\\
	&\leq q_t(\log(\psi_t/\psi_t^\star)|x_{t-1},\theta).
\end{align}
Using the expression in \eqref{eqn:logratio_decomp} and the inequalities \eqref{eqn:LSI_KL1}-\eqref{eqn:LSI_KL2}, 
the Kullback--Leibler divergence from $q_{\theta}(x_{0:T})$ to $p_{\theta}(x_{0:T}|y_{0:T})$ satisfies 
\begin{align}
	&\KL\left(p_{\theta}(x_{0:T}|y_{0:T}) \mid q_{\theta}(x_{0:T})\right) 
	= \sum_{x_{0:T}\in\{0,1\}^{N\times(T+1)}} p_{\theta}(x_{0:T}|y_{0:T}) \log\left(\frac{p_{\theta}(x_{0:T}|y_{0:T})}{q_{\theta}(x_{0:T})}\right)\\
	&\leq  
  \eta_0(\log(\psi_0/\psi_0^\star)|\theta)
  +	\sum_{t=0}^T \eta_{t}^{\star}(\log(\psi_t^{\star}/\psi_t)|\theta) + \sum_{t=1}^T\sum_{x_{t-1}\in\{0,1\}^{N}}\eta_{t-1}^\star(x_{t-1}|\theta)q_t(\log(\psi_t/\psi_t^\star)|x_{t-1},\theta).\notag
\end{align}
Equation \eqref{eqn:KL_BIF_bound} follows by employing the upper bound 
$M_t=\max_{x_{t}\in\{0,1\}^N}\eta_{t}^\star(x_{t}|\theta)/\eta_{t}(x_{t}|\theta)$ 
and noting that 
\begin{align}
	\sum_{x_{t-1}\in\{0,1\}^{N}}\eta_{t-1}(x_{t-1}|\theta)q_t(\log(\psi_t/\psi_t^\star)|x_{t-1},\theta) 
	= \eta_t(\log(\psi_t/\psi_t^\star)|\theta).
\end{align}
\end{proof}

\subsection{Marginal distributions of smoothing distribution and proposal distribution}\label{sec:ratio_marginal}
To show that the constants 
$M_t=\max_{x_{t}\in\{0,1\}^N}\eta_{t}^\star(x_{t}|\theta)/\eta_{t}(x_{t}|\theta)$ 
in Proposition \ref{prop:KL_BIF_bound} are finite for all $t\in[0:T-1]$, 
it suffices to upper bound the ratio 
$p_{\theta}(x_{0:T}|y_{0:T}) / q_{\theta}(x_{0:T})$ by a constant for all $x_{0:T}\in\{0,1\}^{N(T+1)}$. 
Under the requirement \eqref{eqn:weights_requirement}, we have 
\begin{align}
	\frac{p_{\theta}(x_{0:T}|y_{0:T})}{q_{\theta}(x_{0:T})} = p_{\theta}(y_{0:T})^{-1}\prod_{t=0}^{T-1}w_t(x_t), 
\end{align}
hence we will argue that each of the above weight functions can be upper bounded by some constant. 
From \eqref{eqn:csmc_weights}, the weight functions can be rewritten as 
$w_0(x_0) = \mu_{\theta}(\psi_0)f_{\theta}(\psi_1|x_0)/\bar{f}_{\theta}(\psi_0| I(x_0))$ and 
$w_t(x_t) = f_{\theta}(\psi_{t+1}|x_t)/\bar{f}_{\theta}(\psi_{t+1}| I(x_t))$ for $t\in[1:T-1]$. 
Noting that the ratio $f_{\theta}(\psi_{t+1}|x_t)/\bar{f}_{\theta}(\psi_{t+1}| I(x_t))=1$ 
when $x_t^n=0$ for all $n\in[1:N]$ and $t\in[0:T-1]$, we do not have to consider this case. 

By induction, the BIF approximation $(\psi_t)_{t\in[0:T]}$ are upper bounded by one, 
hence $\mu_{\theta}(\psi_0)\leq1$ and $f_{\theta}(\psi_{t}|x_{t-1})\leq 1$ for all $x_{t-1}\in\{0,1\}^N$ and $t\in[1:T]$.
It remains to show that $\bar{f}_{\theta}(\psi_{t+1}| I(x_t))$ is lower bounded by some strictly positive constant. 
We notice that the conditional expectation in \eqref{eqn:sis_condexp_approx} can be lower bounded by 
\begin{align}
	\bar{f}_{\theta}(\psi_{t+1} | I(x_{t})) \geq 
	 \left(\bar{\lambda}N^{-1}I(x_t)\right)^{N-I(x_t)} \left(1-\bar{\gamma}\right)^{I(x_t)} \psi_{t+1}(N).
\end{align}
Define the constant $c = \min_{i \in [1:N]} \left(\bar{\lambda}N^{-1}i\right)^{N-i} \left(1-\bar{\gamma}\right)^{i}>0$. 
At the terminal time, we have $\psi_T(N) = \Binomial(y_T;N,\rho)>0$. 
By iterating the backward recursion, we have
\begin{align}
	\psi_{t}(N) = \Binomial(y_t;N,\rho) \bar{f}_{\theta}(\psi_{t+1}| N) \geq \Binomial(y_t;N,\rho) \,c\, \psi_{t+1}(N),
\end{align}
for each $t\in[0:T-1]$. By induction, we have 
$\bar{f}_{\theta}(\psi_{t+1} | I(x_{t})) \geq c^{T-t} \prod_{k=t+1}^T \Binomial(y_{k}; N,\rho)$ for all $t \in [0:T-1]$ 
and $x_t\in\{0,1\}^N$. 
Combining the above observations allows us to conclude.

\subsection{Backward information filter approximation}
\label{sec:bif_approx}
\begin{proof}[Proof of Proposition \ref{prop:BIF_error}]
We will establish \eqref{eqn:BIF_logerror1};
the proof of \eqref{eqn:BIF_logerror2} follows using similar arguments. 
Define $e_t=\eta_{t}^{\star}(\log(\psi_t^{\star}/\psi_t)|\theta)$ for each time $t\in[0:T]$. 
At the terminal time $T$, we have $e_T=0$. 
For time $t\in[0:T-1]$, we consider the decomposition 
\begin{align}\label{eqn:BIF_logdecomp}
	e_t = \eta_{t}^{\star}(\log(\psi_t^{\star}/\tilde{\psi}_t)|\theta) + 
	\eta_{t}^{\star}(\log(\tilde{\psi}_t/\psi_t)|\theta)
\end{align}
where
\begin{align}\label{eqn:BIF_intermediate}
	\tilde{\psi}_{t}(x_{t})=\Binomial(y_{t};I(x_{t}),\rho)\mathbbm{1}(I(x_{t})\geq y_{t})f_{\theta}(\psi_{t+1}|x_{t}).
\end{align}
Using the log-sum inequality and the decomposition of the smoothing distribution given 
in \eqref{eqn:smoothing_distribution_bif} and \eqref{eqn:decompose_smoothing}, 
the first term of \eqref{eqn:BIF_logdecomp} is bounded by 
\begin{align}\label{eqn:BIF_logterm_1}
	 &\eta_{t}^{\star}(\log(\psi_t^{\star}/\tilde{\psi}_t)|\theta)
	= \sum_{x_t\in\{0,1\}^N} \log\left(\frac{f_{\theta}(\psi_{t+1}^\star|x_t)}
	 {f_{\theta}(\psi_{t+1}|x_t)}\right) \eta_t^\star(x_t|\theta)\notag\\
	 &\leq \sum_{x_t\in\{0,1\}^N} f_{\theta}(\psi_{t+1}^\star|x_t)^{-1} 
	 \sum_{x_{t+1}\in\{0,1\}^N}f_{\theta}(x_{t+1}|x_t)\psi_{t+1}^\star(x_{t+1}) \log\left(\frac{\psi_{t+1}^\star(x_{t+1})}{\psi_{t+1}(I(x_{t+1}))}\right) 	 
	 \eta_{t}^\star(x_t|\theta)\notag\\
	 &\leq \sum_{x_{t}\in\{0,1\}^N}\sum_{x_{t+1}\in\{0,1\}^N}
	 \log\left(\frac{\psi_{t+1}^\star(x_{t+1})}{\psi_{t+1}(I(x_{t+1}))}\right) \eta_{t}^\star(x_t|\theta) p_{\theta}(x_{t+1}|x_t,y_{t+1:T})
	 = e_{t+1}.
\end{align}
By the log-sum inequality, the second term of \eqref{eqn:BIF_logdecomp} is bounded by 
\begin{align}\label{eqn:sis_logerror_first}
	&\eta_{t}^{\star}(\log(\tilde{\psi}_t/\psi_t)|\theta)
	= \sum_{x_t\in\{0,1\}^N\setminus (0,\ldots,0)} \log\left(\frac{f_{\theta}(\psi_{t+1}|x_t)}{\bar{f}_{\theta}(\psi_{t+1}|I(x_t))}\right) \eta_t^\star(x_t|\theta)\notag\\
	 &\leq \sum_{x_{t}\in\{0,1\}^N\setminus (0,\ldots,0)} f_{\theta}(\psi_{t+1}|x_t)^{-1}
	 \sum_{i_{t+1}=0}^N 
	 \PB(i_{t+1};\alpha(x_t)) \psi_{t+1}(i_{t+1}) \log \left(\frac{\PB(i_{t+1};\alpha(x_t))}{\PB(i_{t+1};\bar{\alpha}(x_t))}\right) 
	 \eta_t^\star(x_t|\theta)\notag\\
	 &\leq c_{\theta}^\star(\psi_{t+1}) \sum_{x_{t}\in\{0,1\}^N}\KL\left(\PB(\alpha(x_t)) \mid 
	 \PB(\bar{\alpha}(x_t)) \right) \eta_t^\star(x_t|\theta).
\end{align}
The above repeatedly uses the fact that in the case of zero infections, i.e. $x_t^n=0$ for all $n\in[1:N]$, 
we have $\alpha^n(x_t)= \bar{\alpha}^n(x_t)=0$ for all $n\in[1:N]$. 
The constant $c_{\theta}^\star(\psi_{t+1})$ can be shown to be finite using the arguments in Appendix \ref{sec:ratio_marginal}. 
The last inequality also uses the fact that $\max_{i_{t+1}\in[0:N]} |\psi_{t+1}(i_{t+1})|\leq 1$, which follows by induction. 
Applying \eqref{eqn:KL_poibin} of Lemma \ref{lemma:l2_poibin} and Cauchy--Schwarz inequality gives 
\begin{align}\label{eqn:BIF_logterm_2}
	\eta_{t}^{\star}(\log(\tilde{\psi}_t/\psi_t)|\theta)\leq c_{\theta}^\star(\psi_{t+1}) 
	\|\phi_{\theta}^\star\|_{L^2(\eta_t^\star)}\|\Delta_{\theta}\|_{L^2(\eta_t^\star)}.
\end{align}
By combining \eqref{eqn:BIF_logterm_1} and \eqref{eqn:BIF_logterm_2}, we obtain the following recursion 
\begin{align}
	e_t \leq e_{t+1} + c_{\theta}^\star(\psi_{t+1}) 
	\|\phi_{\theta}^\star\|_{L^2(\eta_t^\star)}\|\Delta_{\theta}\|_{L^2(\eta_t^\star)} . 
\end{align}
The claim in \eqref{eqn:BIF_logerror1} follows by induction.
\end{proof}

\begin{proof}[Proof of Proposition \ref{prop:sir_BIF_error}]
The arguments are similar to Proposition \ref{prop:BIF_error} 
with some modifications that we will outline below. 
Like before, we consider the quantity 
$e_t = \eta_t^{\star}(\log (\psi^{\star}_t / \psi_t) | \theta)$ for $t\in[0:T]$ 
in \eqref{eqn:sir_BIF_logerror1} as 
\eqref{eqn:sir_BIF_logerror2} follows using similar arguments. 
By the same arguments in \eqref{eqn:BIF_logterm_1} 
\begin{align}
	\eta_{t}^{\star}(\log(\psi_t^{\star}/\tilde{\psi}_t)|\theta) \leq e_{t+1}.
\end{align}
Instead of \eqref{eqn:sis_logerror_first}, we have 
\begin{align}
\eta_{t}^{\star}(\log(\tilde{\psi}_t/\psi_t)|\theta)
  \leq c_{\theta}^{\star}(\psi_{t+1}) 
  \sum_{x_t\in\{0,1,2\}^N}
  \KL\left({f}_{\theta}(s_{t+1},i_{t+1}|x_t) \mid \bar{f}_{\theta}(s_{t+1},i_{t+1}|S(x_t),I(x_t)) \right) \eta_t^\star(x_t|\theta).
\end{align}
The Kullback--Leibler divergence from $\bar{f}_{\theta}(s_{t+1},i_{t+1}|S(x_t),I(x_t))$ to ${f}_{\theta}(s_{t+1},i_{t+1}|x_t)$ can be decomposed as 
\begin{align}
  &\KL\left({f}_{\theta}(s_{t+1},i_{t+1}|x_t) \mid \bar{f}_{\theta}(s_{t+1},i_{t+1}|S(x_t),I(x_t)) \right)\\  
  &= \KL\left({f}_{\theta}(s_{t+1}|x_t) \mid \bar{f}_{\theta}(s_{t+1}|S(x_t),I(x_t)) \right) \notag \\
  & \quad \quad + \sum_{s_{t+1}=0}^N{f}_{\theta}(s_{t+1}|x_t)\KL\left({f}_{\theta}(i_{t+1}|x_t, s_{t+1}) \mid \bar{f}_{\theta}(i_{t+1}|S(x_t),I(x_t),s_{t+1}) \right)\notag\\
  &\leq\KL\left( \PB({\alpha}_S(x_t)) \mid \PB(\bar\alpha_S(x_t)) \right) + 
  \KL\left( \PB((I^n(x_t){\gamma})_{n\in[1:N]}) \mid 
  \PB((I^n(x_t)\bar\gamma^n)_{n\in[1:N]}) \right)\notag.
\end{align}
Applying \eqref{eqn:KL_poibin} of Lemma \ref{lemma:l2_poibin} gives 
\begin{align}
  &\KL\left({f}_{\theta}(s_{t+1},i_{t+1}|x_t) \mid \bar{f}_{\theta}(s_{t+1},i_{t+1}|S(x_t),I(x_t)) \right)  
  \leq \phi^{\star}_{\theta}(x_t)\Delta_{\theta,S}(x_t) + \xi^{\star}_{\theta}(x_t)\Delta_{\theta,R}(x_t). 
\end{align}
Hence by Cauchy--Schwarz inequality, we have
\begin{align}
\eta_{t}^{\star}(\log(\tilde{\psi}_t/\psi_t)|\theta)
  \leq c_{\theta}^{\star}(\psi_{t+1}) 
  \{\|\phi^{\star}_{\theta}\|_{L^2(\eta_t^\star)}\|\Delta_{\theta,S}\|_{L^2(\eta_t^\star)}
  + \|\xi^{\star}_{\theta}\|_{L^2(\eta_t^\star)}\|\Delta_{\theta,R}\|_{L^2(\eta_t^\star)}\}.
\end{align}
\end{proof}

\subsection{Finer-grained approximations}\label{sec:finer_BIF_approx}
We discuss finer approximations of the BIF and provide algorithmic details of the resulting cSMC. 
Instead of simply approximating individual infection and recovery rates by their population averages, 
we consider a clustering of agents based on their infection and recovery rates, and approximate 
these rates by their within-cluster average. 
More precisely, if $K\in[1:N]$ denote the desired number of clusters,  
$(C_k)_{k\in[1:K]}$ denote a partition of $[1:N]$ that 
represents our clustering and $N_k=|C_k|$ denotes the number of agents in cluster $k\in[1:K]$, 
we approximate $\lambda^n\approx\bar{\lambda}^k=N_k^{-1}\sum_{n\in C_k}\lambda^n$ and  
$\gamma^n\approx\bar{\gamma}^k=N_k^{-1}\sum_{n\in C_k}\gamma^n$. 
For each state of the population $x\in\{0,1\}^N$, we define the summary 
$I^k(x)=\sum_{n\in C_k}x^n$, which counts the number of infected agents in cluster $k\in[1:K]$. 
We will write $x^{C_k}=(x^n)_{n\in C_k}$ and $I^{1:K}(x)=(I^k(x))_{k\in[1:K]}$. Note that $\sum_{k=1}^KI^k(x) = I(x)$. 
The corresponding approximation of $\alpha(x)$ is given by 
\begin{equation}
	\label{eqn:sis_alpha_finerapprox}
		\bar{\alpha}^n(x) = \begin{cases}
					\bar{\lambda}^{c(n)} N^{-1}I(x),\quad & \mbox{if }x^n=0,\\
					1-\bar{\gamma}^{c(n)},\quad & \mbox{if }x^n=1,
					\end{cases}
\end{equation}
where $c:[1:N]\rightarrow[1:K]$ maps agent labels to cluster membership. 
We will write associated Markov transition as $\bar{f}_{\theta}(x_t|x_{t-1})=\prod_{n=1}^N\Bernoulli(x_{t}^n;\bar{\alpha}^n(x_{t -1}))$.
The approximation in \eqref{eqn:sis_alpha_finerapprox} recovers \eqref{eqn:sis_alpha_approx} in the case of $K=1$ cluster, 
in which case $I^1(x)=I(x)\in[0:N]$. 
Larger values of $K$ can be seen as finer-grained approximations at the expense of increased dimensionality of 
$I^{1:K}(x)\in\prod_{k=1}^K[0:N_k]$. Having $K=N$ clusters offers no dimensionality reduction as $I^{1:N}(x)=x\in\{0,1\}^N$.

At the terminal time $T$, since we have $\psi_T^\star(x_T)=\Binomial(y_T;I(x_T), \rho)\mathbbm{1}(I(x_T) \geq y_T)$, 
it suffices to compute $\psi_T(i_T^{1:K})=\Binomial(y_T;\sum_{k=1}^Ki_T^k, \rho)\mathbbm{1}(\sum_{k=1}^Ki_T^k \geq y_T)$ 
for all $i_T^{1:K}\in\prod_{k=1}^K[0:N_k]$. 
To approximate the backward recursion \eqref{eqn:backward_info_filter} for $t\in[0:T-1]$, 
we proceed inductively by assuming an approximation of the form $\psi_{t+1}(I^{1:K}(x_{t+1}))$ at time $t+1$. 
By plugging in the approximation $\psi_{t+1}(I^{1:K}(x_{t+1}))\approx \psi_{t+1}^\star(x_{t+1})$, we consider 
the iterate
\begin{align}
	\Binomial(y_{t};I(x_{t}), \rho)\mathbbm{1}(I(x_{t}) \geq y_{t})f_{\theta}(\psi_{t+1}|x_{t})
\end{align}
to form an approximation of $\psi_t^\star(x_t)$. We approximate the conditional expectation $f_{\theta}(\psi_{t+1}|x_{t})$ by 
\begin{align}
  \label{eqn:sir_approx_conditional_expectation}
	\bar{f}_{\theta}(\psi_{t+1}|I^{1:K}(x_{t})) = \sum_{i_{t+1}^{1:K}\in\prod_{k=1}^K[0:N_k]}
	\prod_{k=1}^K\SB(i_{t+1}^k;N_{k}-I^{k}(x_{t}), \bar{\lambda}^{k}N^{-1}I(x_{t}), 
	I^{k}(x_{t}), 1-\bar{\gamma}^{k})\psi_{t+1}(i_{t+1}^{1:K}).
\end{align}
The resulting approximation of $\psi_{t}^\star(x_{t})$ is computed using 
\begin{align}\label{eqn:BIFapprox_finer}
	\psi_{t}(i_{t}^{1:K}) = \Binomial\left(y_{t};\sum_{k=1}^Ki_{t}^k, \rho\right)
	\mathbbm{1}\left(\sum_{k=1}^Ki_{t}^k \geq y_{t}\right)\bar{f}_{\theta}(\psi_{t+1}|i_{t}^{1:K})
\end{align}
for all $i_{t}^{1:K}\in\prod_{k=1}^K[0:N_k]$ in $\bigo\left(\prod_{k=1}^K(1+N_k)^2\right)$ cost. 
Hence our overall approximation of the BIF $(\psi_t)_{t\in[0:T]}$ costs
$\bigo(\prod_{k=1}^K(1+N_k)^2 T)$ to compute and
$\bigo(\prod_{k=1}^K(1+N_k) T  )$ in storage.  It is straightforward
to extend Proposition \ref{prop:BIF_error} to characterize the error of
\eqref{eqn:BIFapprox_finer}; we omit this for the sake of brevity. 

The corresponding approximation of the optimal proposal \eqref{eqn:decompose_smoothing} is
\begin{align}\label{eqn:csmc_proposals_finer}
	q_0(x_0|\theta) = \frac{\mu_{\theta}(x_0)\psi_0(I^{1:K}(x_0))}{\mu_{\theta}(\psi_0)}, 
	\quad q_t(x_t|x_{t-1},\theta) = \frac{f_{\theta}(x_t|x_{t-1})\psi_t(I^{1:K}(x_t))}{f_{\theta}(\psi_t|x_{t-1})}, \quad t\in[1:T].
\end{align}
Analogous to \eqref{eqn:csmc_weights}, the appropriate SMC weight functions are
\begin{align}\label{eqn:csmc_weights_finer}
	&w_0(x_0) = \frac{\mu_{\theta}(\psi_0)g_{\theta}(y_0|x_0)f_{\theta}(\psi_1|x_0)}{\psi_0(I^{1:K}(x_0))},
	&w_t(x_t) = \frac{g_{\theta}(y_t|x_t)f_{\theta}(\psi_{t+1}|x_t)}{\psi_t(I^{1:K}(x_t))},\quad t\in[1:T-1],
\end{align}
and $w_T(x_T) = 1$. The above expectations are computed as 
\begin{align}
	\mu_{\theta}(\psi_0) &= \sum_{i_{0}^{1:K}\in\prod_{k=1}^K[0:N_k]}
	\prod_{k=1}^K\PB(i_0^k;\alpha_0^{C_k})\psi_0(i_0^{1:K}),\\
	f_{\theta}(\psi_{t}|x_{t-1}) &= \sum_{i_{0}^{1:K}\in\prod_{k=1}^K[0:N_k]}
	\prod_{k=1}^K\PB(i_t^k;\alpha^{C_k}(x_{t-1}))\psi_t(i_t^{1:K}),\quad t\in[1:T],
\end{align}
where $\alpha_0^{C_k}=(\alpha_0^n)_{n\in C_k}$ and $\alpha^{C_k}(x_{t-1})=(\alpha^n(x_{t-1}))_{n\in C_k}$.
Sampling from the proposals \eqref{eqn:csmc_proposals_finer} involves initializing from 
\begin{align}
	q_0(x_0,i_0^{1:K}|\theta) = q_0(i_0^{1:K}|\theta)\prod_{k=1}^Kq_0(x_0^{C_k}|i_0^k,\theta) 
	= \frac{\prod_{k=1}^K\PB(i_0^k;\alpha_0^{C_k})\psi_0(i_0^{1:K})}{\mu_{\theta}(\psi_0)}
	\prod_{k=1}^K\CB(x_0^{C_k};\alpha_0^{C_k}, i_0^k) 
\end{align}
which admits $q_0(x_0|\theta)$ as its marginal distribution, and for time $t\in[1:T]$ sampling from
\begin{align}
	q_t(x_t,i_t^{1:K}|x_{t-1},\theta) & = q_t(i_t^{1:K}|x_{t-1},\theta)\prod_{k=1}^Kq_t(x_t^{C_k}|x_{t-1},i_t^k,\theta)\\
	& = \frac{\prod_{k=1}^K\PB(i_t^k;\alpha^{C_k}(x_{t-1}))\psi_t(i_t^{1:K})}{f_{\theta}(\psi_{t}|x_{t-1})}
	\prod_{k=1}^K\CB(x_t^{C_k};\alpha^{C_k}(x_{t-1}), i_t^k)\notag
\end{align}
which admits $q_t(x_t|x_{t-1},\theta)$ as its marginal transition. 
The overall cost of implementing cSMC is 
\begin{align}
	\bigo\left(\left(\sum_{k=1}^KN_k^2+\prod_{k=1}^KN_k\right)TP\right),
\end{align}
or 
\begin{align}
	\bigo\left(\left(\sum_{k=1}^KN_k\log(N_k)+\prod_{k=1}^KN_k\right)TP\right),
\end{align}
if one employs translated Poisson approximations \eqref{eqn:transpoi} and the MCMC in~\citet{heng2020simple}
to sample from conditioned Bernoulli distributions. 
As alluded earlier, the choice of the number of clusters $K$ allows one to trade-off the quality of our proposal 
for computation complexity. 

\section{Posterior sampling of agent states using MCMC}\label{sub:mcmc}
We now describe various MCMC algorithms that can be used to sample from the 
smoothing distribution $p_{\theta}(x_{0:T}|y_{0:T})$ of the agent-based SIS model in Section \ref{sec:sis}. 
These Gibbs sampling strategies can be seen as alternatives to the SMC approach 
in Sections \ref{sub:sis_apf} and \ref{sub:sis_controlled_smc}. 
The same ideas can also be applied to the SIR model in Section \ref{sec:sir} with some modifications. 

A simple option is a single-site update that samples the state of an agent at a specific time 
from the full conditional distribution $p_{\theta}(x_{t}^n|x_{0:t-1},x_t^{-n},x_{t+1:T},y_{0:T})$, 
where $x_t^{-n}=(x_t^m)_{m\in[1:N]\setminus n}$, for $t\in[0:T]$ and $n\in[1:N]$ that 
can be chosen deterministically or randomly with uniform probabilities using a systematic or random Gibbs scan. 
Using the conditional independence structure of the model, for $t\in[1:T-1]$, this full 
conditional is a Bernoulli distribution with success probability that is proportional to  
\begin{align}\label{eqn:sis_gibbs_conditional_distribution}
	\alpha^n(x_{t-1})\Binomial(y_t;I(\tilde{x}_t),\rho)\prod_{m\in\mathcal{N}(n)}
	\Bernoulli(x_{t+1}^m;\alpha^m(\tilde{x}_{t})),
\end{align}
where $\tilde{x}_t=(x_t^m)_{m\in[1:N]}$ with $x_t^n=1$. This expression also holds 
for the case $t=0$ by replacing $\alpha^n(x_{t-1})$ with $\alpha_0^n$, and $t=T$ 
by removing the last product. 
As each update costs $\bigo(\mathcal{D}(n))$, the overall cost of a Gibbs scan is 
 $\bigo(T\sum_{n=1}^N\mathcal{D}(n))$.

As discussed in \citet{ghahramani1996factorial} for a related class of models, 
one can consider block updates that jointly sample the trajectory of a few agents. 
For a subset $b \subseteq [1:N]$ of size $B$, this Gibbs update involves sampling from 
$p_{\theta}(x_{0:T}^b|x_{0:T}^{-b},y_{0:T})$, where $x_{0:T}^b=(x_t^n)_{t\in[0:T],n\in b}$ and 
$x_{0:T}^{-b}=(x_t^n)_{t\in[0:T],n\in[1:N]\setminus b}$. This can be done using the 
forward-backward algorithm in $\bigo((2^{2B}+2^{B}N)T)$ cost. 
Hence the overall cost of a systematic Gibbs scan over the entire population is $\bigo((2^{2B}+2^{B}N)TN/B)$. 
As the latter is computationally prohibitive for large $B$, one has to consider reasonably small block sizes. 

We can also employ a swap update that leaves the full conditional distribution 
$p_{\theta}(x_t | x_{0:t-1},x_{t+1:T},y_{0:T})$ invariant for each $t \in [0:T]$ 
(with $x_{0:-1}=x_{T+1:T}=\emptyset$).  
Let $x_t$ denote the current state of the population at time $t$.
We will choose $n_0$ and $n_1$ uniformly from the sets $\{i:x_t^i=0\}$ and $\{i:x_t^i = 1\}$ respectively. 
The proposed state $\tilde{x}_t$ is such that 
$\tilde{x}_t^{n_0}=1$, $\tilde{x}_t^{n_1}=0$ and $\tilde{x}_t^n=x_t^n$ for $n\in[1:N]\setminus\{n_0,n_1\}$.
For $t\in[1:T-1]$, the MH acceptance probability is 
\begin{align}
	\min\left\lbrace 1, \frac{\alpha^{n_0}(x_{t-1})(1-\alpha^{n_1}(x_{t-1}))\prod_{n=1}^N\Bernoulli(x_{t+1}^n;\alpha^n(\tilde{x}_t))}
	{\alpha^{n_1}(x_{t-1})(1-\alpha^{n_0}(x_{t-1}))\prod_{n=1}^N\Bernoulli(x_{t+1}^n;\alpha^n(x_t))} \right\rbrace.
\end{align}
The case $t=0$ can be accommodated by replacing $\alpha(x_{t-1})$ with $\alpha_0$, 
and $t=T$ by removing the ratios of Bernoulli PMFs. The cost of each swap update ranges between 
$\bigo(1)$ and $\bigo(N)$, depending on the structure of the network. 
As swap updates on their own do not change the number of infected agents, 
we propose to employ them within a mixture kernel that also relies on 
the above-mentioned forward-backward scans.

As alternatives to PMMH, we can consider Gibbs samplers that alternate between sampling the 
parameters from $p(\theta|x_{0:T},y_{0:T})$ using a MH algorithm, and the agent states from 
$p_{\theta}(x_{0:T}|y_{0:T})$ using the above MCMC algorithms. 
In particular, we employ an equally weighted mixture of kernels, 
performing either $N$ random swap updates for each time step, or 
systematic Gibbs scan with single-site updates (``single-site'') or block updates for $B=5$ agents (``block5''). 
We compare these Gibbs samplers to PMMH on the simulated dataset of Section~\ref{sub:sis_illustrate_smc}. 
We set the MH Normal proposal standard deviation within Gibbs samplers as $0.08$ to achieve suitable acceptance probabilities. 
We initialize the parameters from either the DGP or the prior distribution, and run all algorithms for the same number of iterations. 
Figure~\ref{fig:gibbs_compare_pmcmc_hist1} displays the resulting approximations of the marginal posterior distribution 
of two parameters. 
Compared to the Gibbs samplers, the PMMH approximations seem to be more stable between runs with different initializations. 
These plots also suggest that the Gibbs samplers provide a poorer exploration of the tails of the posterior distribution, 
which may be attributed to the dependence between the agent states and parameters in this setting. 

\begin{figure}[htbp]
  \centering
  \includegraphics[width=1.0\textwidth]{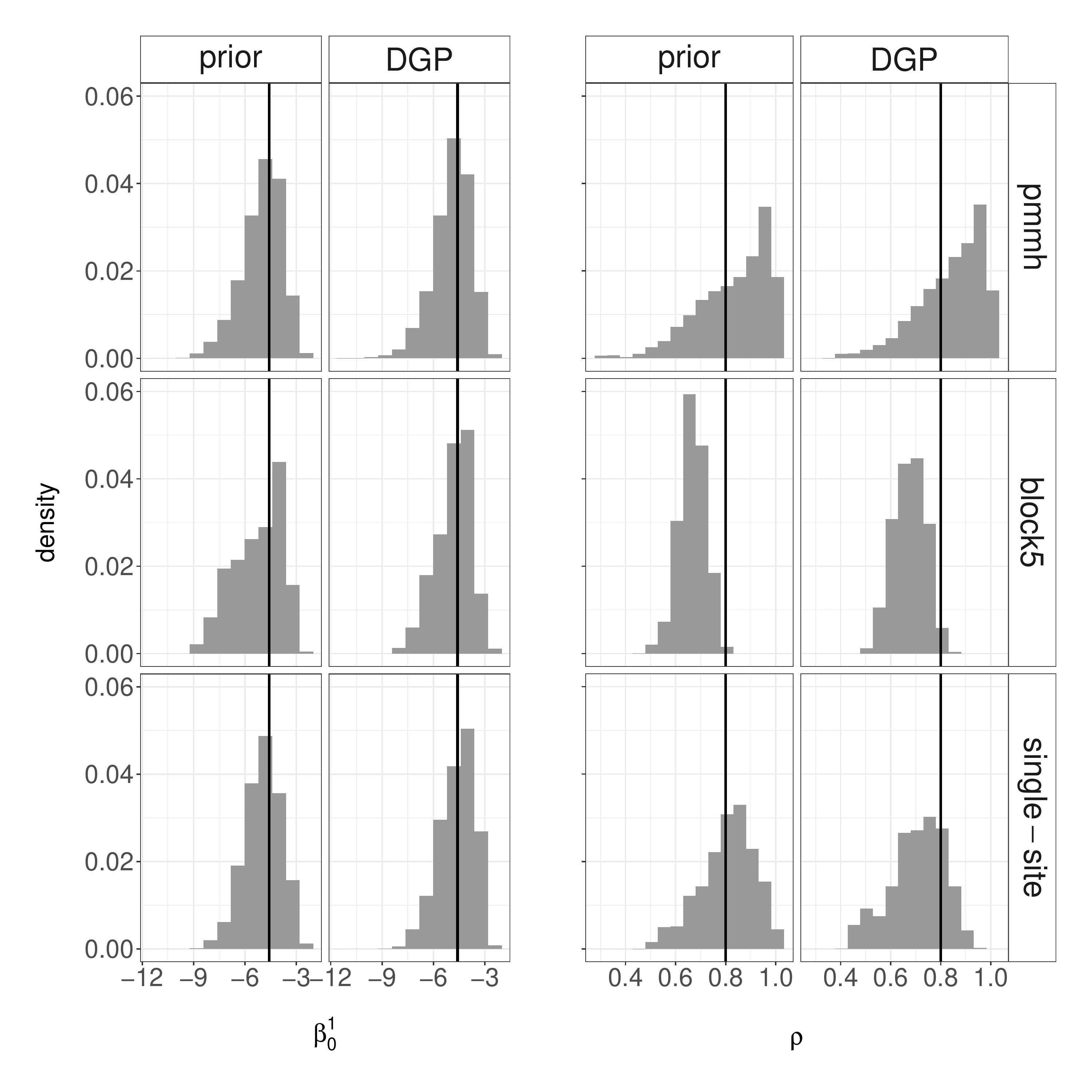}
  \caption{Histograms illustrating approximations of the marginal posterior distributions 
  of $\beta_0^1$ and $\rho$, obtaining using PMMH chains and Gibbs samplers with 
  either single-site or block updates, and initialization from either the DGP or the prior distribution. 
  Black vertical lines depict the DGP.}
  \label{fig:gibbs_compare_pmcmc_hist1}
\end{figure}

\end{document}